\definecolor{mydarkblue}{rgb}{0,0.08,0.45}
\definecolor{orcidlogocol}{HTML}{A6CE39}
\tikzset{
  orcidlogo/.pic={
    \fill[orcidlogocol] svg{M256,128c0,70.7-57.3,128-128,128C57.3,256,0,198.7,0,128C0,57.3,57.3,0,128,0C198.7,0,256,57.3,256,128z};
    \fill[white] svg{M86.3,186.2H70.9V79.1h15.4v48.4V186.2z}
                 svg{M108.9,79.1h41.6c39.6,0,57,28.3,57,53.6c0,27.5-21.5,53.6-56.8,53.6h-41.8V79.1z M124.3,172.4h24.5c34.9,0,42.9-26.5,42.9-39.7c0-21.5-13.7-39.7-43.7-39.7h-23.7V172.4z}
                 svg{M88.7,56.8c0,5.5-4.5,10.1-10.1,10.1c-5.6,0-10.1-4.6-10.1-10.1c0-5.6,4.5-10.1,10.1-10.1C84.2,46.7,88.7,51.3,88.7,56.8z};
  }
}
\newcommand\orcidicon[1]{\href{https://orcid.org/#1}{\mbox{\scalerel*{
\begin{tikzpicture}[yscale=-1,transform shape]
\pic{orcidlogo};
\end{tikzpicture}
}{|}}}}
\def\dnu{\mathrm{d}\nu}
\def\dmu{\mathrm{d}\mu}
\def\calB{\mathcal{B}}
\def\TV{\mathrm{TV}}
\def\Hankel{\mathrm{Hankel}}		
\def\IS{\mathrm{IS}}		
\def\bartheta{{\bar\theta}}
\def\bareta{{\bar\eta}}
\def\floor#1{\lfloor{#1}\rfloor}
\def\ceil#1{\lceil{#1}\rceil}
\def\bbR{\mathbb{R}}
\def\kl{\mathrm{kl}}
\def\st{\ :\ }
\def\AC{\mathrm{AC}}
\def\calF{\mathcal{F}}
\def\iid{\mathrm{i.i.d.}}
\def\JS{\mathrm{JS}}
\def\KL{\mathrm{KL}}
\def\calS{\mathcal{S}}
\def\calX{\mathcal{X}}
\def\bbN{\mathbb{N}}
\def\MLE{\mathrm{MLE}}
\def\SME{\mathrm{SME}}
\def\ML{\mathrm{MLE}}
\def\SM{\mathrm{SME}}
\def\LS{\mathrm{LS}}
\def\LMS{\mathrm{LMS}}
\def\calE{\mathcal{E}}
\def\Mtheta{{M_1}}
\def\Meta{{M_2}}
\def\Var{\mathrm{Var}}
\def\colvec#1{{\left[\begin{array}{l}#1\end{array}\right]}}
\def\dx{\mathrm{d}x}
\def\calM{\mathcal{M}}
\def\calA{\mathcal{A}}
\newtheorem{Example}{Example}
\newtheorem{Corollary}{Corollary}
\newtheorem{Theorem}{Theorem}
\newtheorem{Lemma}{Lemma}
\newtheorem{Proposition}{Proposition}
\def\Var{\mathrm{Var}}
\def\colvec#1{{\left[\begin{array}{l}#1\end{array}\right]}}
\def\dx{\mathrm{d}x}
\begin{document}

\title{Fast approximations of the Jeffreys divergence between univariate Gaussian mixture models via exponential polynomial densities\footnote{This paper has been published after peer-reviewed in a polished and revised form in~\cite{e23111417}.}}  

\author{Frank Nielsen\orcidicon{0000-0001-5728-0726}\\ Sony Computer Science Laboratories Inc.\\ Tokyo, Japan}

\date{ }

\maketitle

\def\calB{\mathcal{B}}
\def\TV{\mathrm{TV}}
\def\Hankel{\mathrm{Hankel}}		
\def\IS{\mathrm{IS}}		
\def\bartheta{{\bar\theta}}
\def\bareta{{\bar\eta}}
\def\floor#1{\lfloor{#1}\rfloor}
\def\ceil#1{\lceil{#1}\rceil}
\def\bbR{\mathbb{R}}
\def\kl{\mathrm{kl}}
\def\st{\ :\ }
\def\AC{\mathrm{AC}}
\def\calF{\mathcal{F}}
\def\iid{\mathrm{i.i.d.}}
\def\JS{\mathrm{JS}}
\def\KL{\mathrm{KL}}
\def\calS{\mathcal{S}}
\def\calX{\mathcal{X}}
\def\bbN{\mathbb{N}}
\def\MLE{\mathrm{MLE}}
\def\SME{\mathrm{SME}}
\def\ML{\mathrm{MLE}}
\def\SM{\mathrm{SME}}
\def\LS{\mathrm{LS}}
\def\LMS{\mathrm{LMS}}
\def\calE{\mathcal{E}}
\def\Mtheta{{M_1}}
\def\Meta{{M_2}}
\def\Var{\mathrm{Var}}
\def\colvec#1{{\left[\begin{array}{l}#1\end{array}\right]}}
\def\dx{\mathrm{d}x}
\def\calM{\mathcal{M}}
\def\calA{\mathcal{A}}

\begin{abstract}
The Jeffreys divergence is a renown symmetrization of the oriented Kullback-Leibler divergence broadly used in information sciences.
Since the Jeffreys divergence between Gaussian mixture models is not available in closed-form, various techniques with pros and cons have been proposed in the literature to either estimate, approximate, or lower and upper bound this divergence. 
In this paper, we propose a simple yet fast heuristic to approximate the Jeffreys divergence between two univariate Gaussian mixtures with arbitrary number of components. 
Our heuristic relies on converting the mixtures into pairs of dually parameterized probability densities belonging to an exponential family.
In particular, we consider the versatile polynomial exponential family densities, and design a divergence to measure in closed-form the  goodness of fit between a Gaussian mixture and its polynomial exponential density approximation.
This goodness-of-fit divergence is a generalization of the Hyv\"arinen divergence used to estimate models with computationally intractable normalizers. It allows us to perform model selection by choosing the orders of the polynomial exponential densities used to approximate the mixtures.
We demonstrate experimentally that our heuristic to approximate the Jeffreys divergence improves by several orders of magnitude the computational time of stochastic Monte Carlo estimations while approximating reasonably well the Jeffreys divergence, specially when the mixtures have a very small number of modes. 
Besides, our mixture-to-exponential family conversion techniques may prove useful in other settings.
\end{abstract}

\noindent Keywords: Gaussian mixture models; Jeffreys divergence; mixture families; polynomial exponential families; maximum likelihood estimator; score matching estimator; Hyv\"arinen divergence; moment matrix; Hankel matrix


\section{Introduction}

\subsection{Statistical mixtures and statistical divergences}

In this work, we consider the problem of approximating the Jeffreys divergence~\cite{jeffreys1946invariant} between two univariate continuous mixture models~\cite{mclachlan1988mixture} 
$m(x)=\sum_{i=1}^k w_i p_i(x)$ and $m'(x)=\sum_{i=1}^{k'} w_i' p_i'(x)$ with continuous component distributions $p_i$'s and $p_i'$'s defined on a coinciding support $\calX$. 
The mixtures $m(x)$ and $m'(x)$ may have different number of components (i.e., $k\not=k'$).
Historically, Pearson~\cite{pearson1894contributions} first considered a univariate Gaussian mixture of two components for modeling the distribution of ratio of forehead breadth to body length of a thousand crabs in 1894 (the obtained mixture was unimodal).

Although our work applies to any continuous mixtures of exponential families  (e.g., Rayleigh mixtures~\cite{seabra2011rayleigh} with restricted support $\calX=\bbR_+$), we explain our method for the most prominent family of mixtures encountered in practice: The Gaussian mixture models, or GMMs for short (also abbreviated as MoG for Mixtures of Gaussians~\cite{MoG-2006}). 
In the remainder, a univariate GMM $m(x)=\sum_{i=1}^k w_i p_{\mu_i,\sigma_i}(x)$  with $k$ Gaussian components 
$$
p_i(x)=p_{\mu_i,\sigma_i}(x):=\frac{1}{\sigma_i\sqrt{2\pi}}\exp\left(-\frac{(x-\mu_i)^2}{2\sigma_i^2}\right),
$$ 
is called a $k$-GMM.

The  Kullback-Leibler divergence~\cite{kullback1997information} $D_\KL$ (KLD) between two probability density functions $m$ and $m'$ is:
\begin{equation}
D_\KL[m:m'] := \int_\calX m(x)\log\left(\frac{m(x)}{m'(x)}\right)\dx.
\end{equation}
The KLD is an oriented divergence since $D_\KL[m:m']\not=D_\KL[m':m]$.

The Jeffreys divergence~\cite{jeffreys1946invariant} (JD) $D_J$ is the arithmetic symmetrization of the forward and reverse KLDs:
\begin{eqnarray}
D_J[m,m'] &:=& D_\KL[m:m'] +D_\KL[m':m],\\
&=& \int_\calX (m(x)-m'(x))\log\left(\frac{m(x)}{m'(x)}\right)\dx.
\end{eqnarray}
The JD is a symmetric divergence: $D_J[m,m']=D_J[m',m]$.
In the literature, the Jeffreys divergence~\cite{vitoratou2017thermodynamic} has also been called the 
$J$-divergence~\cite{kannappan1988axiomatic,burbea2004j}, the symmetric Kullback-Leibler divergence~\cite{tabibian2015speech} and sometimes the symmetrical Kullback-Leibler divergence~\cite{veldhuis2002centroid,nielsen2013jeffreys}. 
Notice that there are many other ways to symmetrize the KLD~\cite{nielsen2019jensen} beyond the usual Jeffreys divergence and renown Jensen-Shannon divergence~\cite{Lin-1991} $D_\JS$:
$$
D_\JS[m:m'] := D_\KL\left[m:\frac{m+m'}{2}\right]+D_\KL\left[m':\frac{m+m'}{2}\right].
$$

In general, it is provably hard to calculate in closed-form the integral of the KLD between two continuous mixtures:
For example, the KLD between two GMMs has been shown to be non-analytic~\cite{KLnotanalytic-2004}.
One recent notable exception to this hardness result of calculating KLD between mixtures is the closed-form analytic formula (albeit being large) reported for the KLD between two Cauchy mixtures of two components~\cite{nielsen2021dually}.
Thus in practice, when calculating the JD between two GMMs, one can either  approximate~\cite{garcia2010simplification,cui2015comparison}, estimate~\cite{sreekumar2021non}, or  bound~\cite{durrieu2012lower,nielsen2016guaranteed} the KLD between mixtures.
Another approach  to bypass the computational intractability of calculating the KLD between mixtures consists in designing new types of divergences taylored to mixtures which admit closed-form expressions. 
See~\cite{jenssen2006cauchy,nielsen2012closed,nielsen2019statistical} for some examples of statistical divergences (e.g., Cauchy-Schwarz divergence~\cite{jenssen2006cauchy}) well-suited to mixtures.

In practice, the vanilla Monte Carlo (MC) estimator of the KLD between mixtures consists in first rewriting the KLD as
$$
D_\KL[m:m'] = \int_\calX \left(m(x)\log\left(\frac{m(x)}{m'(x)}\right)+m'(x)-m(x)\right)\dx =\int_\calX D_\kl(m(x):m'(x))\dx
$$
where $D_\kl(a:b):=a\log\frac{a}{b}+b-a$ is a scalar Bregman divergence~\cite{Bregman-1967,KLBD-2001,BregmanKmeans-2005,nock2015conformal} (hence non-negative), and then performing Monte Carlo stochastic integration:
$$
\hat{D}_\KL^{\calS_s}[m:m'] :=  \frac{1}{s} \sum_{i=1}^s \frac{1}{m(x_i)} D_\kl(m(x_i):m'(x_i))= \frac{1}{s} \sum_{i=1}^s \left( \log\left(\frac{m(x_i)}{m'(x_i)}\right) + \frac{m'(x_i)}{m(x_i)}-1\right),
$$
where $\calS_s=\{x_1,\ldots, x_s\}$ is $s$  independent and identically distributed (iid.) samples from $m(x)$. 
This MC estimator is by construction always non-negative (a weighted sum of non-negative Bregman scalar divergences), and furthermore consistent. 
That is, we have 
$\lim_{s\rightarrow \infty} \hat{D}_\KL^{\calS_s}[m:m']=D_\KL[m:m']$ (under mild conditions~\cite{MC-2013}).

Similarly, we estimate the Jeffreys divergence via MC sampling as follows:
\begin{equation}\label{eq:MCmix}
\hat{D}_J^{\calS_s}[m,m'] :=   \frac{1}{s} \sum_{i=1}^s \frac{2}{m(x_i)+m'(x_i)} \left(m(x_i)-m'(x_i)\right)\log\left(\frac{m(x_i)}{m'(x_i)}\right),
\end{equation}
where $\calS_s=\{x_1,\ldots, x_s\}$ are $s$ iid. samples from the middle mixture $m_{12}(x):=\frac{1}{2}(m(x)+m'(x))$.
Since the scalar Jeffreys divergence $D_j(p,q):=(p-q)\log \frac{p}{q}=D_\kl(p:q)+D_\kl(q:p)\geq 0$, we have $\hat{D}_J^{\calS_s}[m,m']\geq 0$.
By choosing the middle mixture $m_{12}(x)$ for sampling, we ensure that we keep the symmetric property of the JD:
That is, $\hat{D}_J^{\calS_s}[m,m']=\hat{D}_J^{\calS_s}[m',m]$. 
We also have consistency under mild conditions: $\lim_{s\rightarrow \infty} \hat{D}_J^{\calS_s}[m,m']=D_J[m,m']$.
Thus the time complexity to stochastically estimate  the JD is $\tilde O((k+k')s)$, with $s$ typically ranging from $10^4$ to $10^6$ in applications.
Notice that the number of components can be very large (e.g., $k=O(n)$ for $n$ input data when using Kernel Density Estimators~\cite{simplifykde-2013}).
KDEs build mixtures by setting a mixture component at each data location.
Those KDE mixtures have a large number of components and may potentially exhibit many spurious modes visualized as small bumps when plotting the mixture densities.

\subsection{Polynomial exponential families and Jeffreys divergence}
In this work, we shall consider the approximation of the JD by converting continuous mixtures into densities of exponential families~\cite{BN-2014} also called tilted families~\cite{Efron-2016} (i.e., densities obtained by tilting the Lebesgue base measure).
A continuous exponential family (EF) $\calE_t$ of order $D$ is defined as a family of probability density functions with support $\calX$ and probability density function:
$$
\calE_t:=\left\{ p_\theta(x):=\exp\left(\sum_{i=1}^D \theta_i t_i(x)-F(\theta)\right) \ :\ \theta\in\Theta  \right\},
$$ 
where $F(\theta)$ is called the log-normalizer (also called log-Laplace transform) which ensures normalization of
 $p_\theta(x)$ (i.e., $\int_\calX p_\theta(x)\dx=1$):
$$
F(\theta)=\log\left(\int_\calX \exp\left(\sum_{i=1}^D \theta_it_i(x)\right)\dx\right).
$$
The log-normalizer is also called the cumulant function because the
 cumulant  generator function (CGF) $K(u)=\log E_{p_\theta}\left[\exp(\sum_{i=1}^D u_i t_i(x_i))\right]=F(\theta+u)-F(\theta)$ is related to function $F(\theta)$ (see \S\ref{sec:rawmom}).
Parameter $\theta\in\Theta\subset\bbR^D$ is called the natural or canonical parameter, and the vector $t(x)=[t_1(x)\ \ldots\ t_D(x)]^\top$ is called the sufficient statistics~\cite{BN-2014}. Let $\Theta$ denotes the natural parameter space: $\Theta:=\{\theta\ :\ F(\theta)<\infty\}$, an open convex domain for regular exponential families~\cite{BN-2014}.
 
Polynomial exponential families~\cite{Cobb-1983,clutton1990density,barron1991approximation,NN-2016} (PEF) $\calE_D$ are exponential families with  polynomial sufficient statistics $t_i(x)=x^i$ for $i\in\{1,\ldots, D\}$.
For example, the exponential distributions $\{p_\lambda(x)=\lambda\exp(-\lambda x)\}$ form a PEF with $D=1$, $t(x)=x$ and $\calX=\bbR_+$, and the normal distributions form a PEF with $D=2$, $t(x)=[x\ x^2]^\top$ and $\calX=\bbR$, etc. 
PEFs are also called exponential-polynomial densities~\cite{Demidenko-2010,PEF-holonomic2016}.
PEDs have positive densities by construction, and this contrasts with modeling the density by polynomials~\cite{buckland1992fitting} which may yield densities which are potentially negative at some values. 
Clutton-Brock~\cite{clutton1990density} estimated densities from iid. observations using exponentials of orthogonal series (which ensures positivity of the densities). The considered densities are $\exp\left(\sum_{i=1}^D \theta_i P_i(x) -F(\theta)\right)$, where the $P_i$'s are orthogonal polynomials (e.g., Legendre, Chebyshev, Gegenbauer, Hermite, Laguerre  polynomials, etc.)
Historically, Neyman~\cite{Neyman-1937} used an exponential of a series of Legendre polynomials in 1937 to develop his ``smooth test''.
Yet another versatile methodology to estimate density in statistics is spline density smoothing~\cite{wahba1990spline}.

The log-normalizes $F(\theta)$ can be obtained in closed-form for lower order PEFs (e.g., $D=1$ or $D=2$) or very special subfamilies of PEFs.
However, no-closed form formula are available for $F(\theta)$ of PEFs in general as soon $D\geq 4$, and the cumulant function $F(\theta)$ is said to be computationally intractable.
See also the exponential varieties~\cite{michalek2016exponential} related to polynomial exponential families.
Notice that when $\calX=\bbR$ the coefficient $\theta_D$ is negative  for even integer order $D$. 

PEFs are attractive because these families can universally model smooth multimodal distributions~\cite{Cobb-1983}, and require fewer parameters in comparison to GMMs:
Indeed, a univariate $k$-GMM $m(x)$   (at most $k$ modes and $k-1$ antimodes) requires $3k-1$ parameters to specify $m(x)$ (or $k+1$ for a KDE~\cite{simplifykde-2013} with constant kernel width $\sigma$ or $2k-1$ for varying kernel widths, but $k=n$ observations).
A density of a PEF of order $D$ (called a Polynomial Exponential Density, PED) requires $D$ parameters to specify $\theta$ but has at most $\frac{D}{2}$ modes and $\frac{D}{2}-1$ antimodes.

The case of the quartic (polynomial) exponential densities $\calE_4$ ($D=4$) has been extensively investigated in~\cite{o1933method,aroian1948fourth,QuarticEF-1978,zellner1988calculation,mccullagh1994exponential,PEFMaxEnt-1984}. 
Armstrong and Brigo~\cite{PEFfiltering-2013} discussed order-$6$ PEDs,
and Efron and Hastie reported and order-$7$ PEF in their textbook (see Figure 5.7 of~\cite{Efron-2016}).

Let $P_\theta(x):=\sum_{i=1}^D \theta_ix^i$ be a homogeneous polynomial defining the shape of the PEF:
$$
p_\theta(x)=\exp\left(P_\theta(x)-F(\theta)\right).
$$
When $P_\theta$ is a monomial, the cumulant function is available in closed-form and the PEF is called a Monomial Exponential Family.
This closed-form property has been used to devise a sequence of maximum entropy upper bounds for GMMs~\cite{nielsen2017maxent}. 
Appendix~\ref{sec:MEF} describes the characteristics of the MEFs.
Since the logarithm function is strictly increasing, the stationary points $\gamma_j$ of $p_\theta(x)>0$
 (satisfying $p_\theta'(\gamma_j)=0$) are equivalent to the stationary points of
$\log p_\theta(x)=P_\theta(x)-F(\theta)$, i.e., the points $\gamma_j$ such that $P_\theta'(\gamma_j)=\sum_{i=1}^D i\theta_i\gamma_j^{i-1}=0$.
Reciprocally, polynomial $P_\theta(x)$ can also be expressed using stationary points $\gamma_i$'s of its derivative as
$$
P_\theta(x)=\int_0^x \prod_j (u-\gamma_j) \mathrm{d}u.
$$
For example, consider $P_\theta(x)$ with $P_\theta'(x)=\gamma_4(x-\gamma_1)(x-\gamma_2)(x-\gamma_3)(x-\gamma_4)$.
$P_\theta(x)$ has stationary points $\gamma_i$.
Then we get the corresponding polynomial exponential family:
$$
p_\theta(x) =: \exp(\sum_{i=1}^D \theta_ix^i-F(\theta)).
$$

In the context of deep learning~\cite{DL-2016}, the PEFs can be interpreted as a simple class of Energy-Based Models~\cite{lecun2006tutorial,grathwohl2019your} (EBMs).
Thus let us write
$$
p_\theta(x)=\frac{q_\theta(x)}{Z(\theta)},
$$
where $q_\theta(x):=\exp\left(\sum_{i=1}^D \theta_ix^i\right)$ is the  unnormalized density, and $Z(\theta)=\exp(F(\theta))$ is called
 the {\em partition function} in statistical physics.
 Hence, $F(\theta)$ is also called log-partition function since $F(\theta)=\log Z(\theta)$.
We can define and equivalence class $\sim$ such that $p_1(x)\sim p_2(x)$ iff. there exists $\lambda>0$ such that $p_1(x)=\lambda p_2(x)$.
In the literature, $\tilde{p}_\theta(x)$ often denote one representative of the equivalence class $p_\theta(x)/\sim$, the distribution $q_\theta(x)$: $\tilde{p}_\theta(x)=q_\theta(x)$.

PEFs like any other exponential family admit a dual parameterization~\cite{BN-2014} $\eta=\eta(\theta):=E_{p_\theta}[t(x)]=\nabla F(\theta)$,  called the moment parameterization (or mean parameterization). Let $H$ (pronounced Eta using the greek alphabet) denote the moment parameter space. Let us use the subscript and superscript notations to emphasize the coordinate system used to index a PEF: 
In our notation, we thus write
 $p_\theta(x)=p^\eta(x)$.

It is known that the KLD between any two densities of an exponential family amounts to a reverse Bregman divergence ${B_F}^*$ induced by the cumulant  of the EF~\cite{KLBD-2001,BregmanKmeans-2005}: 
$$
D_\KL[p_\theta:p_{\theta'}]={B_F}^*(\theta:\theta'):=B_F(\theta':\theta),
$$
where the Bregman divergence for a strictly convex and smooth generator $F$ is defined by:
\begin{eqnarray}
B_F(\theta_1:\theta_2)&:=&F(\theta_1)-F(\theta_2)-(\theta_1-\theta_2)^\top \nabla F(\theta_2),\\
&=& F(\theta_1)-F(\theta_2)-(\theta_1-\theta_2)^\top \eta_2,
\end{eqnarray}
with $\eta_2:=\nabla F(\theta_2)$.

Thus the JD between two PEDS of a PEF can be written equivalently as:
\begin{eqnarray}
D_J(p_\theta,p_{\theta'}) &=& D_\KL[p_\theta:p_{\theta'}] + D_\KL[p_{\theta'}:p_{\theta}],\\
&=& B_F(\theta':\theta)+B_F(\theta:\theta'),\\
&=&   B_F(\theta':\theta)+B_{F^*}(\eta':\eta),\\
&=& (\theta'-\theta)^\top(\eta'-\eta),\\
&=& (\theta'-\theta)^\top (\nabla F(\theta')-\nabla F(\theta)),\\
&=& (\nabla F^*(\eta')-\nabla F^*(\eta))^\top (\eta'-\eta),
\end{eqnarray}
where $F^*(\eta)$ denote the Legendre-Fenchel convex conjugate:
\begin{equation}
F^*(\eta):=\sup_{\theta\in\Theta} \{\theta^\top\eta-F(\theta)\},
\end{equation}
and the dual Bregman divergence is $B_{F^*}(\eta':\eta)=B_F(\theta:\theta')$.

Figure~\ref{fig:SBgeometricinterpretation} illustrates geometrically the symmetrized Bregman divergence for a univariate generator.
Let us visually notice that we can read the following dual Bregman divergences as areas described by the following definite integrals:
$$
B_F(\theta_1:\theta_2) = \int_{\theta_2}^{\theta_1} (F'(\theta)-F'(\theta_2))\, \mathrm{d}\theta,
$$
and
$$
B_{F^*}(\eta_2:\theta_1) = \int_{\eta_1}^{\eta_2} ({F^*}'(\eta)-{F^*}'(\eta_1))\, \mathrm{d}\eta.
$$

Thus by combining these two definite integral areas,
 we obtain the symmetrized Bregman divergence $S_F(\theta_1,\theta_2)=S_F(\theta_2,\theta_1)$ as the area rectangle $(\theta_1-\theta_2)^\top\, (\eta_1-\eta_2)$:
\begin{eqnarray*}
S_F(\theta_1,\theta_2) &=& B_F(\theta_1:\theta_2)+B_F(\theta_2:\theta_1),\\
&=& B_F(\theta_1:\theta_2)+B_{F^*}(\eta_1:\eta_2),\\
&=& (\theta_1-\theta_2)^\top\, (\eta_1-\eta_2).
\end{eqnarray*}

\begin{figure}%
 \centering
\includegraphics[width=0.8\columnwidth]{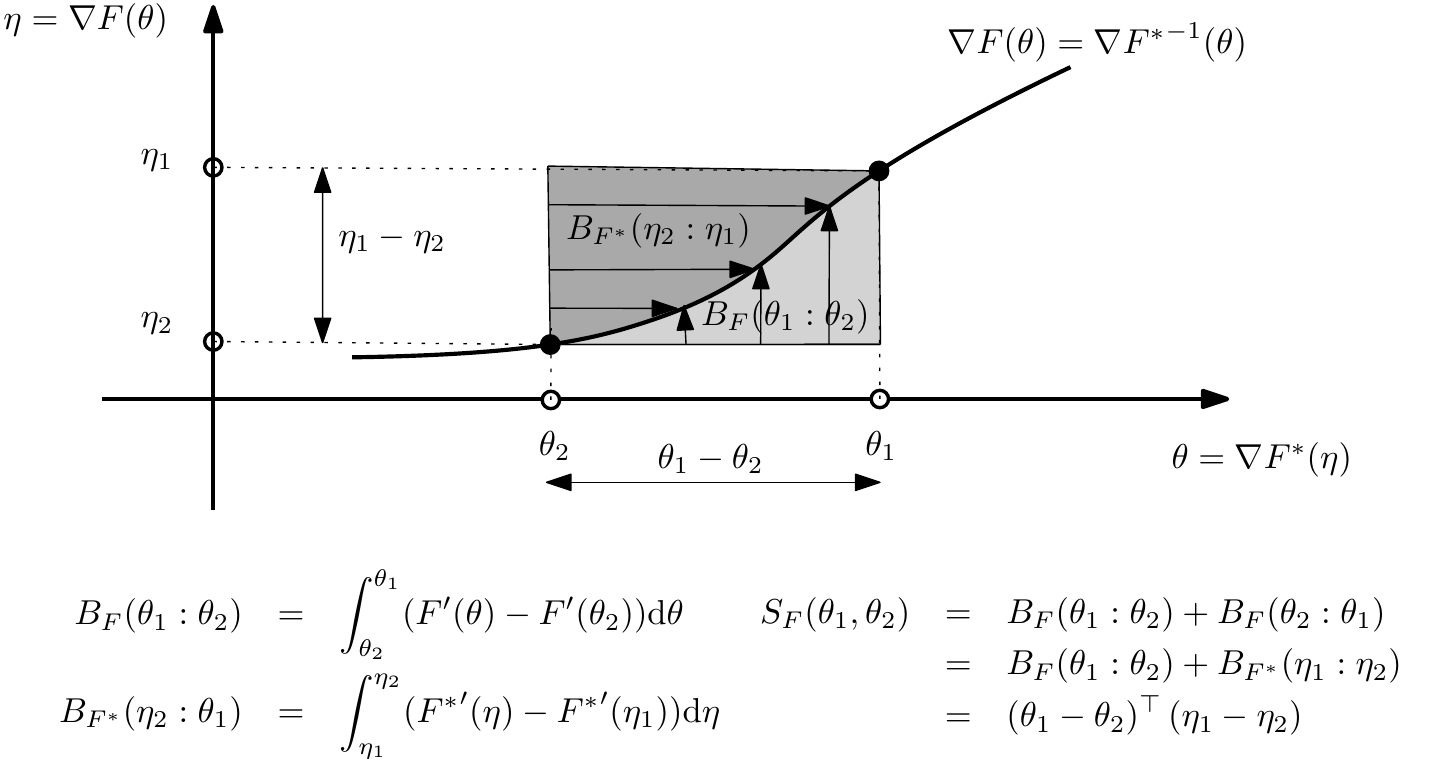}
\caption{Geometric interpretation of the symmetrized Bregman divergence as a rectangle area.}\label{fig:SBgeometricinterpretation}
\end{figure}

It follows from the Legendre transform that we have $\eta=\nabla F(\theta)$ and $\theta=\nabla F^*(\eta)$.
Thus the density of an exponential family expressed using the moment parameterization~\cite{nielsen2012k} is:
\begin{eqnarray}
p^\eta(x) &=&\exp\left(\nabla F^*(\eta)^\top t(x)-F(\nabla F^*(\eta))\right),\\
&=& \exp\left(-B_{F^*}(t(x):\eta)+F^*(t(x))\right),
\end{eqnarray}
where $B_{F^*}$ is the dual Bregman divergence.
This dual parameterization of the density emphasizes the bijection between regular exponential families 
and ``regular'' Bregman divergences~\cite{BregmanKmeans-2005}.

Using the mixed natural and moment parameterizations (with $\eta=\nabla F(\theta)$ and $\eta'=\nabla F(\theta')$), we get the following expression of the Jeffreys divergence:

\begin{Proposition}[Jeffreys divergence between densities of an exponential family]\label{prop:JD}
Let $p_\theta$ and $p_{\theta'}$ be two densities of an exponential family $\calE$. 
Then the Jeffreys divergence is:
\begin{equation}\label{eq:JDEF}
D_J[p_\theta,p_{\theta'}] = (\theta'-\theta)^\top (\eta'-\eta).
\end{equation}
\end{Proposition}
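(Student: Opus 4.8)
The plan is to prove \eqref{eq:JDEF} directly from the definition of the Kullback--Leibler divergence, exploiting the fact that the log-density of an exponential family is affine in its natural parameter; the alternative route through the Bregman identity $D_\KL[p_\theta:p_{\theta'}]=B_F(\theta':\theta)$ recalled above would also work, but the direct computation is entirely self-contained and shorter.

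First I would write down the log-likelihood ratio. Since $\log p_\theta(x)=\theta^\top t(x)-F(\theta)$, we have
$$\log\frac{p_\theta(x)}{p_{\theta'}(x)}=(\theta-\theta')^\top t(x)-F(\theta)+F(\theta'),$$
which is an affine function of $t(x)$. Hence integrating against $p_\theta$ requires only the mean value $E_{p_\theta}[t(x)]=\eta$, and we obtain
$$D_\KL[p_\theta:p_{\theta'}]=(\theta-\theta')^\top\eta-F(\theta)+F(\theta').$$
Exchanging the roles of $\theta$ and $\theta'$ gives the reverse divergence
$$D_\KL[p_{\theta'}:p_\theta]=(\theta'-\theta)^\top\eta'-F(\theta')+F(\theta).$$

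Finally I would add the two expressions: the cumulant terms $F(\theta)$ and $F(\theta')$ cancel, leaving
$$D_J[p_\theta,p_{\theta'}]=(\theta-\theta')^\top\eta+(\theta'-\theta)^\top\eta'=(\theta'-\theta)^\top(\eta'-\eta),$$
which is exactly \eqref{eq:JDEF}. There is essentially no obstacle here; the only point that deserves a word of justification is that $\theta,\theta'\in\Theta$ with $\Theta$ an open convex set, so that $F$ is finite and differentiable and the identity $E_{p_\theta}[t(x)]=\nabla F(\theta)=\eta$ holds --- a standard property of regular exponential families already invoked in the text. One could equivalently note that $(\theta-\theta')^\top\eta+(\theta'-\theta)^\top\eta' = B_F(\theta':\theta)+B_F(\theta:\theta')$ after adding and subtracting $F(\theta),F(\theta')$, recovering the chain of equalities displayed before the statement.
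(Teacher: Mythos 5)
Your proof is correct. The only difference from the paper's argument is the entry point: the paper quotes the known identity $D_\KL[p_\theta:p_{\theta'}]=B_F(\theta':\theta)$ (recalled earlier with references) and then expands the two Bregman divergences, whereas you derive each KLD directly from the definition by noting that the log-likelihood ratio is affine in $t(x)$, so that integrating against $p_\theta$ only requires $E_{p_\theta}[t(x)]=\nabla F(\theta)=\eta$. Your intermediate expression $(\theta-\theta')^\top\eta-F(\theta)+F(\theta')$ is exactly $B_F(\theta':\theta)$, so the algebraic core --- the cancellation of the cumulant terms $F(\theta)$ and $F(\theta')$ upon symmetrization --- is identical in both proofs. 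What your route buys is self-containedness: it needs only the mean identity for regular exponential families rather than the KL--Bregman correspondence, which you correctly flag as the one standard fact requiring $\theta,\theta'$ to lie in the open natural parameter space $\Theta$.
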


\begin{proof}
The proof is straightforward:
\begin{eqnarray*}
D_J[p_\theta,p_{\theta'}] &=& D_\KL[p_\theta:p_{\theta'}]+D_\KL[p_{\theta'}:p_\theta],\\
&=& B_F(\theta':\theta)+B_F(\theta:\theta'),\\
&=& \cancel{F(\theta')}-\cancel{F(\theta)}-(\theta'-\theta)^\top \eta + \cancel{F(\theta)}-\cancel{F(\theta')}-(\theta-\theta')^\top \eta',\\
&=& (\theta'-\theta)^\top (\eta'-\eta).
\end{eqnarray*}
\end{proof}

Interestingly, observe that the cumulant function $F(\theta)$ does not appear explicitly in Eq.~\ref{eq:JDEF} of Proposition~\ref{prop:JD} (although it occurs implicitly in the moment parameters $\eta=\nabla F(\theta)$ or dually in the natural parameters $\theta=\nabla F^*(\eta)$).
An alternative way to derive Eq.~\ref{eq:JDEF} is to consider the Legendre-Fenchel divergence~\cite{wmixtures-2018} $L_F$ which is equivalent to a Bregman divergence but which uses both mixed natural and moment parameterizations:
\begin{equation}\label{eq:LJDEF}
L_F(\theta_1:\eta_2):=F(\theta_1)+F^*(\eta_2)-\theta_1^\top\eta_2=L_{F^*}(\eta_2:\theta_1)=B_F(\theta_1:\theta_2)=B_{F^*}(\eta_2:\eta_1).
\end{equation}
Then we have for PDFs of an EF:
\begin{eqnarray}
D_J[p_{\theta_1}:p_{\theta_2}]&=& L_F(\theta_1:\eta_2)+L_F(\theta_2:\eta_1),\\
&=& \underbrace{F(\theta_1)+F^*(\eta_1)}_{\theta_1^\top\eta_1}+\underbrace{F(\theta_2)+F^*(\eta_2)}_{\theta_2^\top\eta_2}
-\theta_1^\top\eta_2-\theta_2^\top\eta_1,\\
&=& (\theta_2-\theta_1)^\top (\eta_2-\eta_1).
\end{eqnarray}

\subsection{A simple approximation heuristic}

In view of Proposition~\ref{prop:JD}, our method to approximate the Jeffreys divergence between mixtures $m$ and $m'$ consists in first converting those mixtures $m$ and $m'$ into pairs of polynomial exponential densities (PEDs) in \S~\ref{sec:mm2pef}. 
To convert a mixture $m(x)$ into a pair $(p_{\bartheta_1},p^{\bareta_2})$ dually parameterized (but not dual because $\bareta_2\not=\nabla F(\bartheta_1)$), 
 we shall consider  ``integral extensions'' of the  Maximum Likelihood Estimator~\cite{BN-2014}
 (MLE which estimates in the moment parameter space $H=\{\nabla F(\theta)\ :\ \theta\in\Theta\}$) and of  the Score Matching Estimator~\cite{hyvarinen-2005}  
(SME which estimates in the natural parameter space $\Theta=\{\nabla F^*(\eta)\ :\ \eta\in H\}$).

\begin{figure}%
\begin{tabular}{cc}
\includegraphics[width=0.48\columnwidth]{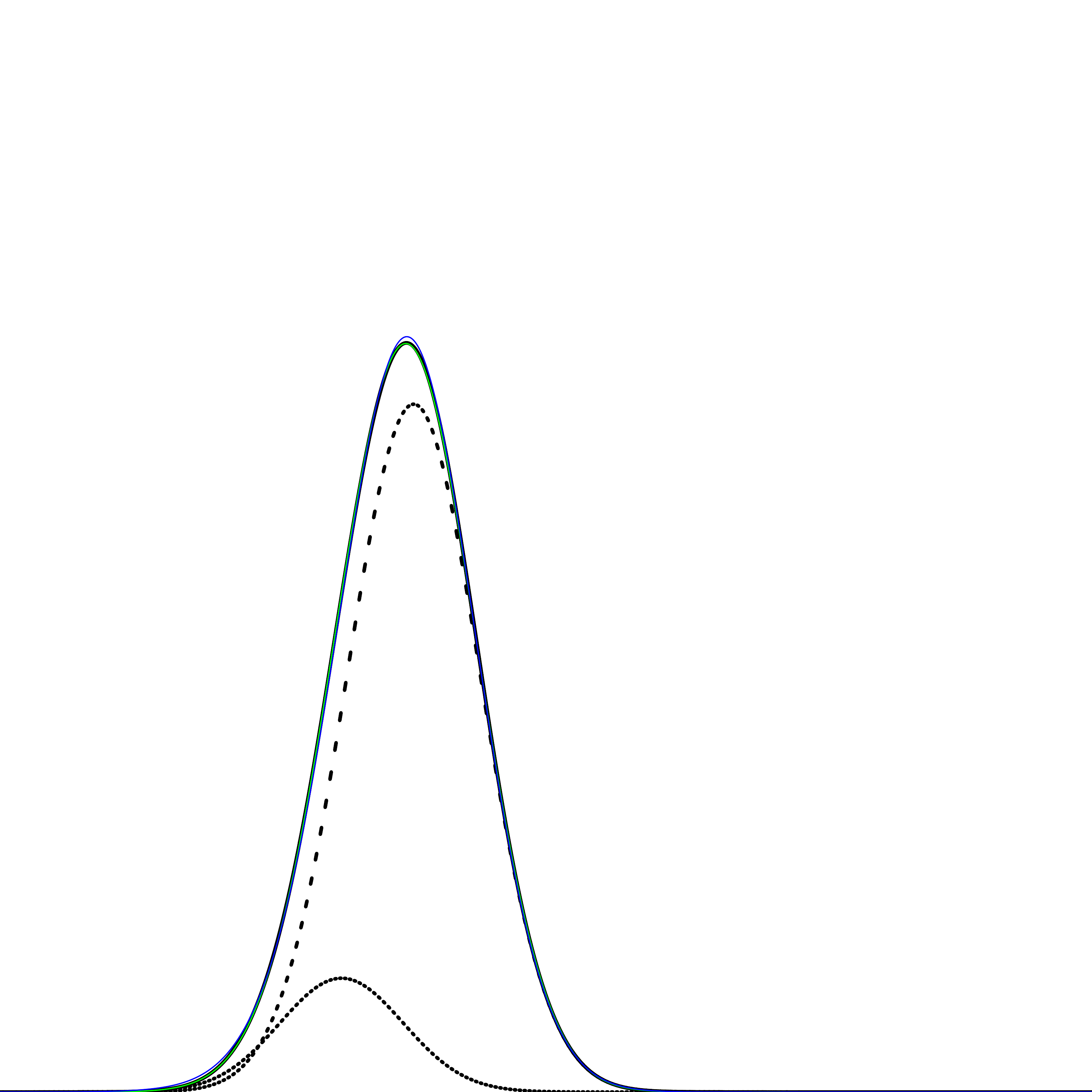}&
\includegraphics[width=0.48\columnwidth]{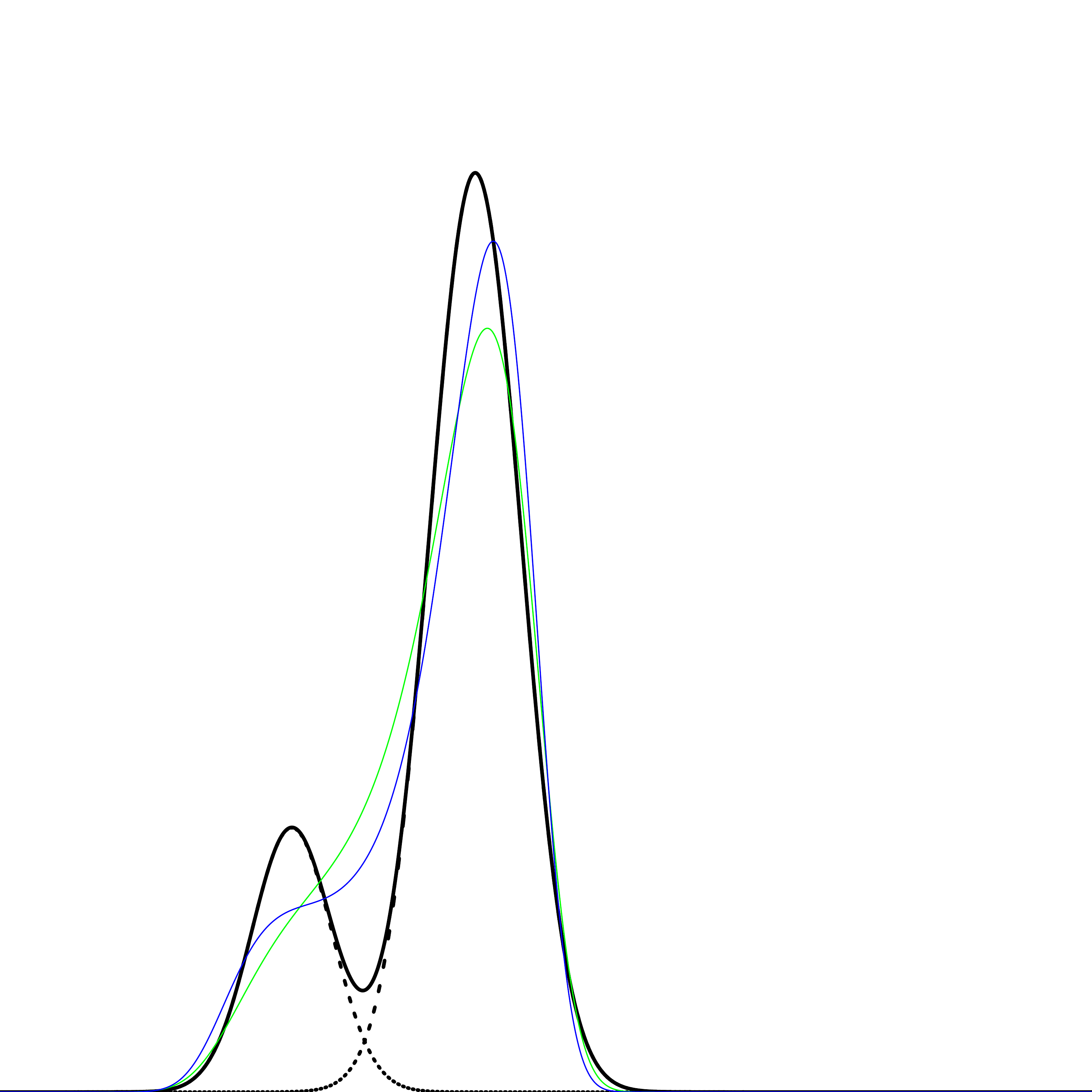}\\
Unimodal $2$-GMM  & Bimodal $2$-GMM
\end{tabular}
 
\caption{Two examples illustrating the conversion a GMM $m$ (black) of $k=2$ components (dashed black) into a pair of polynomial exponential densities of order $D=4$ $(p_{\bartheta_\SME},p^{\bareta_\MLE})$. 
PED $p_{\bartheta_\SME}$ is displayed in green and PED $p^{\bareta_\MLE}$ is displayed in blue. 
To display $p^{\bareta_\MLE}$, we first converted $\bareta_\MLE$ to $\tilde\bartheta_\MLE$ using an iterative linear system descent method (ILSDM), and we numerically estimated the normalizing factors $Z(\bartheta_\SME)$ and $Z(\bareta_\MLE)$ to display the normalized PEDs.   }%
\label{fig:convertpair}
\end{figure}

Then by converting both mixture $m$ and mixture $m'$ into  pairs of dually natural/moment parameterized unnormalized PEDs,  
 i.e., $m\rightarrow (q_{\bartheta_{\SME}},q_{\bareta_{\MLE}})$ and $m'\rightarrow (q_{\bartheta_{\SME}'},q_{\bareta_{\MLE}}')$, we approximate the  JD  between mixtures $m$ and $m'$ by using the four parameters of the PEDsas
\begin{equation}\label{eq:JDPEF}
D_J[m,m']\approx (\bartheta'_{\SME}-\bartheta_{\SME})^\top (\bareta'_{\MLE}-\bareta_{\MLE}).
\end{equation}

Let $\Delta_J$ denote the approximation formula obtained from the two pairs of PEDs:
\begin{equation}\label{eq:JDGMMPEF}
{\Delta}_J[p_{\theta_\SME},p^{\eta_\MLE};p_{\theta_\SME'},p^{\eta'_\MLE}] :=
 (\theta_\SME'-\theta_\SME)^\top (\eta'_\MLE -\eta_\MLE).
\end{equation}
Let ${\Delta}_J({\theta_\SME},{\eta_\MLE};{\theta_\SME'},{\eta'_\MLE}):={\Delta}_J[p_{\theta_\SME},p^{\eta_\MLE};p_{\theta_\SME'},p^{\eta'_\MLE}]$.
Then we have
$$
D_J[m,m']\approx \tilde{D}_J[m,m']:= {\Delta}_J({\theta_\SME},{\eta_\MLE};{\theta_\SME'},{\eta'_\MLE}).
$$

Note that $\Delta_J$ is not a proper divergence as it may be negative since in general $\bareta_\ML\not=\nabla F(\bartheta_\SM)$.
That is, $\Delta_J$  may not satisfy the law of the indiscernibles.
Approximation ${\Delta}_J$ is exact when $k_1=k_2=1$ with both $m$ and $m'$ belong to an exponential family.

We show experimentally in~\S\ref{sec:jdexp} that the $\tilde{D}_J$ heuristic yields fast approximations of the JD compared to the MC  baseline estimations by several order of magnitudes while approximating reasonably well the JD when the mixtures have a small number of modes.

For example, Figure~\ref{fig:ex} displays the unnormalized PEDs obtained for two Gaussian mixture models 
($k_1=10$ components and $k_2=11$ components) into PEDs of a PEF of order $D=8$.
The MC estimation of the JD with $s=10^6$ samples yields $0.2633\dots$ while the PED approximation of Eq.~\ref{eq:JDPEF} on corresponding PEFs yields $0.2618\ldots$ (the relative error is $0.00585\dots$ or about $0.585\ldots \%$).
It took about $2642.581$ milliseconds (with $s=10^6$ on a Dell Inspiron 7472 laptop) to MC estimate the JD while it took about $0.827$ milliseconds with the PEF approximation.
Thus we obtained a speed-up factor of about $3190$ (three orders of magnitude) for this particular example.
We report the mixtures and PEF conversions used in Figure~\ref{fig:ex} in Appendix~\ref{sec:maximaexample}.
Notice that when viewing Figure~\ref{fig:ex}, we tend to visually evaluate the dissimilarity using the total variation distance~\cite{nielsen2018guaranteed} (a metric distance):
$$
D_\TV[m,m']:=\frac{1}{2} \int |m(x)-m'(x)|\dx,
$$
rather than by a dissimilarity relating to the KLD.
Using Pinsker's inequality~\cite{Pinsker-1960,Pinsker-2003}, we have $D_J[m,m']\geq D_\TV[m,m']^2$ and $D_\TV[m,m']\in [0,1]$.
Thus large TV distance (e.g., $D_\TV[m,m']=0.1$) between mixtures may have small JD since Pinsker's inequality yields $D_J[m,m']\geq 0.01$.

Let us point out that our approximation heuristic is deterministic while the MC estimations are stochastic:
That is, each MC run (Eq.~\ref{eq:MCmix}) returns a different result, and a single MC run may yield a very bad approximation of the true Jeffreys divergence.

\def\ttt{0.33}
\begin{figure}%
\centering

\begin{tabular}{ccc}
\includegraphics[width=\ttt\columnwidth]{example-m12.pdf} &
\includegraphics[width=\ttt\columnwidth]{example-q1-order8.pdf} &
\includegraphics[width=\ttt\columnwidth]{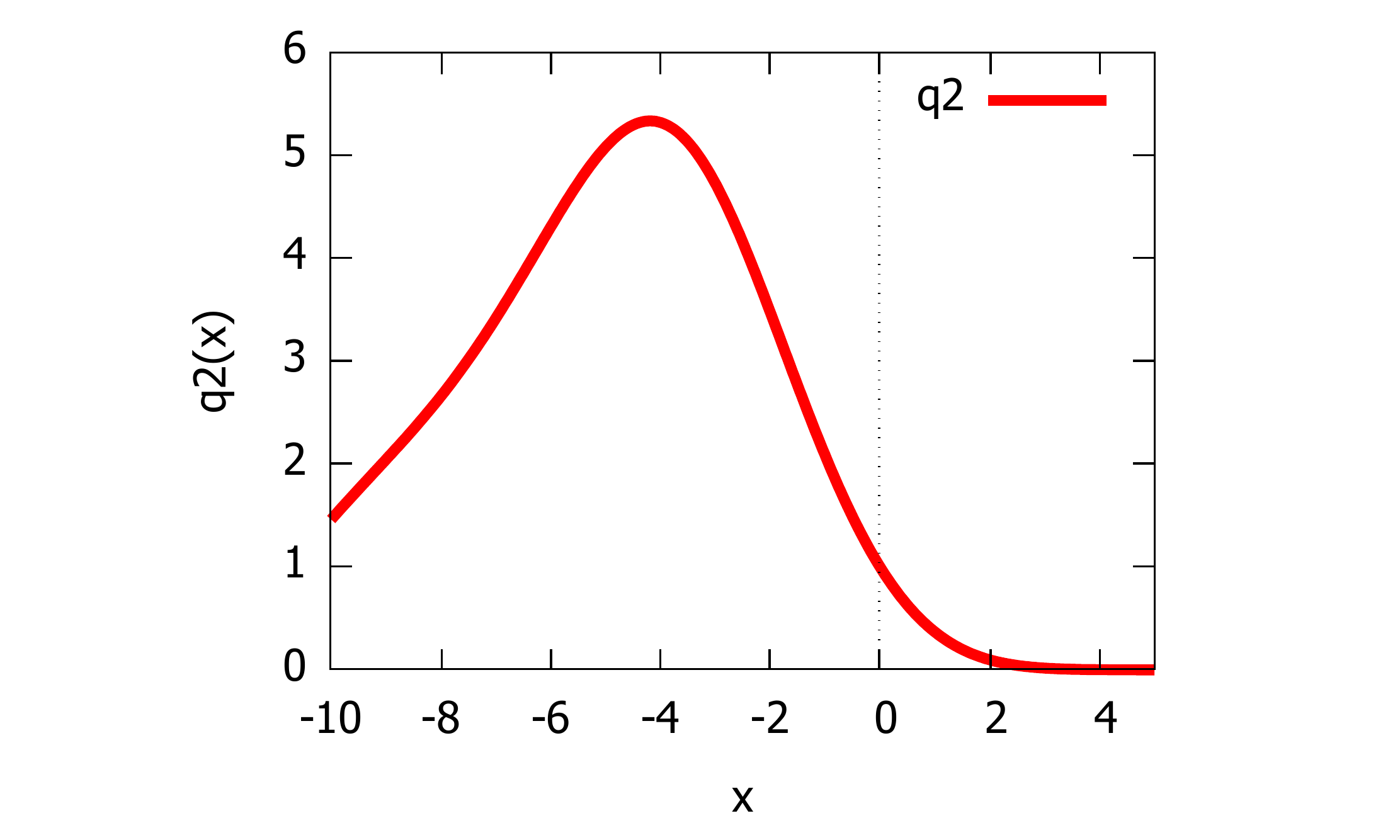}
\end{tabular}

\caption{Two mixtures $m_1$ (black) and $m_2$ (red) of $k_1=10$ components and $k_2=11$ components (left), respectively.
The unnormalized PEFs $q_{\bartheta_1}=\tilde{p}_{\bartheta_1}$ (middle) and $q_{\bartheta_2}=\tilde{p}_{\bartheta_2}$ (right) of order $D=8$. Jeffreys divergence (about $0.2634$) is approximated using PEDs within $0.6\%$ compared to the Monte Carlo estimate with a speed factor of about $3190$. Notice that displaying $p_{\bartheta_1}$ and $p_{\bartheta_2}$ on the same PDF canvas as the mixtures would require to calculate the partition functions $Z(\bartheta_1)$ and $Z(\bartheta_2)$ (which we do not in this figure). The PEDs  $q^{\bareta_1}$ and $q^{\bareta_2}$ of the pairs $(\bartheta_1,\bareta_1)$ and $(\bartheta_2,\bareta_2)$ parameterized in the moment space are not shown here.}%
\label{fig:ex}%
\end{figure}

We compare our fast heuristic $\tilde{D}_J[m,m']$ with two more costly methods relying on numerical procedures:
\begin{enumerate}
	\item Simplify GMMs $m_i$ into PEDs $p^{\eta_i^\MLE}$, and convert approximately the $\bar\eta_i^\MLE$'s into $\tilde\theta_i^\MLE$'s.
	Then approximate the Jeffreys divergence as
\begin{equation}
D_J[m_1,m_2]\simeq \tilde{\Delta}_J^\MLE[m_1,m_2]:=(\tilde\theta_2^\MLE-\tilde\theta_1^\MLE)^\top 	(\bar\eta_2^\MLE-\bar\eta_1^\MLE).
\end{equation}

\item Simplify GMMs $m_i$ into PEDs $p_{\bar\theta_i^\SME}$, and convert approximately the ${\bar\theta_i^\SME}$'s into $\tilde\eta_i^\SME$'s.
Then approximate the Jeffreys divergence as
\begin{equation}
D_J[m_1,m_2]\simeq  \tilde{\Delta}_J^\SME(m_1,m_2)=(\bar\theta_2^\SME-\bar\theta_1^\SME)^\top 	(\tilde\eta_2^\SME-\tilde\eta_1^\SME).
\end{equation}
\end{enumerate}

\subsection{Jeffreys divergence between GMMs of mixture families}\label{sec:wgmm}
A particular family of GMMs are GMMs sharing the same normal distribution components.
These families are called $w$-GMMs~\cite{wmixtures-2018}, and only the weights of prescribed Gaussian components are allowed to vary.
The family of $w$-GMMs with exactly $k$ prescribed and distinct components $p_i(x)=p_{\mu_i,\sigma_i}(x)$ form a mixture family $\{m_{\theta}\ :\ \theta\in\Theta \}$ of order $D=k-1$ in information geometry~\cite{IG-2016,MCIG-2019,EIG-2020}.
The underlying structure of a mixture family is a Bregman manifold (i.e., a Hessian manifold~\cite{Hessian-2007} with a single chart) with the Bregman generator being the negative entropy (a provably strictly convex function~\cite{MCIG-2019}).
Let 
$$
m_{\theta}=\sum_{i=1}^{k-1} \theta_i p_{\mu_i,\sigma_i}(x)+(1-\sum_{i=1}^{k-1} \theta_i) p_{\mu_0,\sigma_0}(x)
$$ 
and 
$$
m_{\theta'}=\sum_{i=1}^{k-1} \theta_i' p_{\mu_i,\sigma_i}(x)+(1-\sum_{i=1}^{k-1} \theta_i') p_{\mu_0,\sigma_0}(x)
$$ 
be two $w$-GMMs with $\theta$ and $\theta'$ belonging to the open $(k-1)$-dimensional simplex $\Theta=\Delta_{k-1}^\circ$.
The Bregman divergence for the negentropy generator amounts to calculate the KLD between the corresponding mixtures~\cite{EIG-2020}:
\begin{equation}
B_F(\theta:\theta')=D_\KL[m_{\theta}:m_{\theta'}],
\end{equation}
for $F(\theta):=-h[m_\theta]=\int_\calX m_\theta(x)\log m_\theta(x)\dx$, the differential negentropy.
It follows that the Jeffreys divergence between two $w$-GMMs is:
\begin{equation}
D_J[m_{\theta},m_{\theta'}]=(\theta'-\theta)^\top (\eta'-\eta), \label{eq:jdivmix}
\end{equation}
where the dual parameter $\eta$ is defined by~\cite{MCIG-2019}:
\begin{equation}
\eta=\nabla F(\theta)=\left[\begin{array}{c}\int (p_{\mu_1,\sigma_1}(x)-p_{\mu_0,\sigma_0}(x))(1+\log m_\theta(x))\dx\\
\vdots\\
\int (p_{\mu_i,\sigma_i}(x)-p_{\mu_0,\sigma_0}(x))(1+\log m_\theta(x))\dx\\
\vdots\\
\int (p_{\mu_{k-1},\sigma_{k-1}}(x)-p_{\mu_0,\sigma_0}(x))(1+\log m_\theta(x))\dx\\
\end{array}\right].
\end{equation}
These $\eta$-parameters are not available in closed-form, and need to be either numerically approximated or estimated via Monte Carlo methods~\cite{MCIG-2019}.

Furthermore, the $\alpha$-skewed Jensen-Shannon divergence between two $w$-GMMs amount to a $\alpha$-skewed Jensen divergence~\cite{wmixtures-2018}:

\begin{eqnarray}
D_{\mathrm{JS},{\alpha}}[m_{\theta_1}: m_{\theta_2}] &:=&
(1-\alpha) D_{\mathrm{KL}}\left[m_{\theta_1}: m_{\alpha}\right]+\alpha D_{\mathrm{KL}}\left[m_{\theta_2}: m_{\alpha}\right],\label{eq:jsmix}\\ 
&=& (1-\alpha) F\left(\theta_{1}\right)+\alpha F\left(\theta_{2}\right)-F\left((1-\alpha) \theta_{1}+\alpha \theta_{2}\right),\\
&=:& J_{F, \alpha}\left(\theta_{1}: \theta_{2}\right)
\end{eqnarray}
where $F(\theta)=-h[m_\theta]$ and
$$
m_{\alpha}(x):=(1-\alpha) m_{\theta_1}(x)+\alpha m_{\theta_2}(x)=m_{(1-\alpha)\theta_1+\alpha\theta_2},
$$
for any  $\alpha \in(0,1)$.

It has been proved that scaled $\alpha$-skewed Jensen divergences tend to Bregman divergences in limit cases $\alpha\rightarrow 0$ and $\alpha\rightarrow 1$~\cite{nielsen2011burbea,wmixtures-2018}:
\begin{equation} \label{eq:jdbd}
\lim _{\alpha \rightarrow 1^{-}} \frac{1}{\alpha(1-\alpha)} J_{F, \alpha}\left(\theta_{1}: \theta_{2}\right)
=B_{F}\left(\theta_{1}: \theta_{2}\right)=D_{\mathrm{KL}}\left[m_{1}: m_{2}\right].
\end{equation}
Thus we have $\lim _{\alpha \rightarrow 1^{-}} \frac{1}{\alpha(1-\alpha)} D_{\mathrm{JS},{\alpha}}[m_{\theta_1}: m_{\theta_2}]=D_{\mathrm{KL}}\left[m_{1}: m_{2}\right]$, as expected from Eq.~\ref{eq:jsmix}. Notice that the left-hand-side of Eq.~\ref{eq:jdbd} does not use explicitly the gradient $\nabla F(\theta)$ but only $F(\theta)=-h[m_\theta]$ while the right-hand-side requires the gradient $\nabla F(\theta)$.

\subsection{Contributions and paper outline}

Our contributions are summarized as follows:

\begin{itemize}
	\item We explain how to convert any continuous density $r(x)$ (including GMMs) into a polynomial exponential density in Section~\ref{sec:mm2pef} using  integral-based extensions of the 
	Maximum Likelihood Estimator~\cite{BN-2014} (MLE estimates in the moment parameter space $H$, 
	Theorem~\ref{thm:MLE} and Corollary~\ref{cor:msimplify}) and the Score Matching Estimator~\cite{hyvarinen-2005} (SME estimates in the natural parameter space $\Theta$, Theorem~\ref{thm:SMEmixPED}).   
	We show a connection of SME with the Moment Linear System Estimator~\cite{Cobb-1983} (MLSE) which is related to Stein's lemma for exponential families~\cite{EF-Hudson-1978} (see Lemma~\ref{lemma:Stein} in Appendix~\ref{sec:Stein}).
	
	\item We report a closed-form formula to evaluate the goodness-of-fit of a polynomial family density to a GMM in~\S\ref{sec:HyvDiv} using an extension of the Hyv\"arinen divergence~\cite{IG-2016} (Theorem~\ref{thm:hdpef}), and discuss the problem of model selection for choosing the order $D$ of the polynomial exponential family.
	
	\item We show how to approximate the Jeffreys divergence between GMMs using a pair of natural/moment parameter PED conversion, and present experimental results which displays a gain of several orders of magnitude of performance when compared to the vanilla Monte Carlo estimator in \S\ref{sec:jdexp}. We observe that the quality of the approximations depend on the number of modes of the GMMs~\cite{carreira2000mode}. However, calculating or counting the modes of a GMM is a difficult problem in its own~\cite{carreira2000mode}.
\end{itemize}

The paper is organized as follows: 
In Section~\ref{sec:mm2pef}, we show how to convert arbitrary probability density functions into polynomial exponential densities using integral-based Maximum Likelihood Estimator (MLE)   and Score Matching Estimator (SME).
We describe a Maximum Entropy method to convert iteratively moment parameters to natural parameters in \S\ref{sec:eta2theta}.
It is followed by Section~\ref{sec:HyvDiv} which shows how to calculate in closed-form the order-$2$ Hyv\"arinen divergence between a GMM and a polynomial exponential density. We use this criterion to perform model selection.
Section~\ref{sec:jdexp} presents our computational experiments which demonstrate a gain of several orders of magnitudes for GMMs with small number of modes. 
Finally, we conclude in Section~\ref{sec:concl}.

\section{Converting finite mixtures to exponential family densities}\label{sec:mm2pef}

We report two generic methods to convert a mixture $m(x)$ into a density $p_\theta(x)$ of an exponential family:
The first method extending the MLE in \S\ref{sec:mle} proceeds using the mean parameterization $\eta$ while the second method extending the SME in \S\ref{sec:convertnat} uses the natural parameterization of the exponential family.
We then describe how to convert the moments parameters to natural parameters (and vice-versa)  for polynomial exponential families in \S\ref{sec:ped:thetaeta}.
We show how to instantiate these generic conversion methods for GMMs: 
It requires to calculate in closed-form non-central moments of GMMs. 
The efficient computations of raw moments of GMMs is detailed in \S\ref{sec:rawmom}. 

\subsection{Conversion using the moment parameterization (MLE)}\label{sec:mle}

Let us recall that in order to estimate the moment or mean parameter $\hat{\eta}_\ML$ of a density belonging an exponential family  
$$
\calE_t:=\left\{p_\theta(x)=\exp\left(t(x)^\top\theta-F(\theta)\right)\right\}
$$ 
with sufficient statistic vector 
$t(x)=[t_1(x)\ \ldots\ t_D(x)]^\top$ from a i.i.d. sample set $x_1,\ldots, x_n$, the Maximum Likelihood Estimator (MLE)~\cite{Brown-1986,BN-2014} yields
\begin{eqnarray}
&& \max_{\theta} \prod_{i=1}^n p_\theta(x_i),\\
&\equiv& \max_{\theta} \sum_{i=1}^n \log p_\theta(x_i),\\
&=& \max_{\theta} E(\theta):=\left(\sum_{i=1}^n t(x_i)^\top\theta\right)-nF(\theta),\label{eq:estMLE}\\
&\Rightarrow& \hat\eta_\ML= \frac{1}{n}\sum_{i=1}^n t(x_i).\label{eq:MLEiid}
\end{eqnarray}

In statistics, Eq.~\ref{eq:estMLE} is called the estimating equation.
The MLE exists under mild conditions~\cite{BN-2014}, and is unique since the Hessian $\nabla^2 E(\theta)=\nabla^2 F(\theta)$ of the estimating equation is positive-definite (log-normalizers $F(\theta)$ are always strictly convex and real analytic~\cite{BN-2014}).
The MLE is consistent and asymptotically normally distributed~\cite{BN-2014}.
Furthermore, since the MLE satisfies the equivariance property~\cite{BN-2014}, we have $\hat\theta_\ML=\nabla F^*(\hat\eta_\ML)$, where $\nabla F^*$ denotes the gradient of the conjugate function $F^*(\eta)$ of the cumulant function $F(\theta)$ of the exponential family.
In general, $\nabla F^*$ is intractable for PEDs with $D\geq 4$ (except for the MEFs detailed in Appendix~\ref{sec:MEF}).

By considering the empirical distribution 
$$
p_e(x):=\frac{1}{n}\sum_{i=1}^s \delta_{x_i}(x),
$$ 
where $\delta_{x_i}(\cdot)$ denoting the Dirac distribution at location $x_i$, we can formulate the MLE problem as a minimum KLD problem between the empirical distribution and a density of the exponential family:
\begin{eqnarray*}
\min_\theta D_\KL[p_e:p_\theta] &=& \min -H[p_e]-E_{p_e}[\log p_\theta(x)],\\
&\equiv & \max_\theta \frac{1}{n} \sum_{i=1}^n \log p_\theta(x_i),
\end{eqnarray*}
since the entropy term $H[p_e]$ is independent of $\theta$.

Thus to convert an arbitrary smooth density $r(x)$ into a density $p_\theta$ of an exponential family $\calE_t$, 
we ask to solve the following minimization problem: 
$$
\min_{\theta\in\Theta} D_\KL[r:p_\theta].
$$

Rewriting the minimization problem as:
\begin{eqnarray*}
&& \min_\theta D_\KL[r:p_\theta] = -\int r(x)\log p_\theta(x)\dx+ \int r(x)\log r(x)\dx,\\
&\equiv& \min_\theta -\int r(x)\log p_\theta(x)\dx,\\
&=& \min_\theta \int r(x)(F(\theta)-\theta^\top t(x))\dx,\\
&=& \min_\theta \bar{E}(\theta)=F(\theta)-\theta^\top E_r[t(x)],
\end{eqnarray*}
we get 
\begin{equation}\label{eq:MLEdensity}
\bar\eta_\ML(r):= E_r[t(x)] =\int_\calX r(x)t(x) \dx.
\end{equation} 
The minimum is unique since $\nabla^2\bar{E}(\theta)=\nabla^2 F(\theta)\succ 0$ (positive-definite matrix).
This conversion procedure $r(x)\rightarrow p^{\bar\eta_\ML(r)}(x)$ can be interpreted as an integral extension of the MLE, hence the $\bar\dot$ notation in $\bareta_\ML$.
Notice that the ordinary MLE is $\hat\eta_\ML=\bar\eta_\MLE(p_e)$ obtained for the empirical distribution: $r=p_e$: $\bar\eta_\MLE(p_e)=\frac{1}{n}\sum_{i=1}^n t(x_i)$.
 
\begin{Theorem}\label{thm:MLE}
The best density $p^\bareta(x)$ of an exponential family $\calE_t=\{p_\theta\ :\ \theta\in\Theta\}$ minimizing the Kullback-Leibler divergence $D_\KL[r:p_\theta]$ between a density $r$ and a density $p_\theta$ of an exponential family $\calE_t$ is 
$\bareta=E_r[t(x)] =\int_\calX r(x)t(x) \dx$. 
\end{Theorem}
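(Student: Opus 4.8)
The plan is to recognize that this theorem is essentially a restatement of the derivation already carried out in the paragraphs immediately preceding it, so the proof amounts to formalizing that chain of equalities. First I would start from the definition of the Kullback-Leibler divergence, $D_\KL[r:p_\theta] = \int_\calX r(x)\log\frac{r(x)}{p_\theta(x)}\dx$, and split it into the entropy term $\int r(x)\log r(x)\dx$, which does not depend on $\theta$, and the cross term $-\int r(x)\log p_\theta(x)\dx$. Minimizing over $\theta$ is therefore equivalent to minimizing $\bar E(\theta):=-\int r(x)\log p_\theta(x)\dx$.

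Next I would substitute the exponential-family form $\log p_\theta(x) = \theta^\top t(x) - F(\theta)$, so that $\bar E(\theta) = F(\theta) - \theta^\top E_r[t(x)]$ (using $\int r(x)\dx = 1$). This is a smooth function of $\theta$ on the open convex set $\Theta$. I would then compute the gradient, $\nabla \bar E(\theta) = \nabla F(\theta) - E_r[t(x)] = \eta(\theta) - E_r[t(x)]$, and set it to zero, which gives the stationarity condition $\eta = \nabla F(\theta) = E_r[t(x)]$. To conclude that this stationary point is the unique global minimizer, I would invoke the strict convexity of $F$ (stated earlier: log-normalizers are strictly convex and real-analytic, with $\nabla^2 F(\theta)\succ 0$), so that $\nabla^2 \bar E(\theta) = \nabla^2 F(\theta)\succ 0$ and $\bar E$ is strictly convex; hence the critical point, when it exists, is the unique minimizer. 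Finally I would translate this back: the optimal density, expressed in the moment parameterization, has moment parameter $\bareta = E_r[t(x)] = \int_\calX r(x)t(x)\dx$, which is exactly the claimed formula.

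The main obstacle, such as it is, is not an inequality or a hard estimate but a regularity/existence caveat: one must ensure that $E_r[t(x)]$ actually lies in the interior of the moment parameter space $H$ so that a minimizing $\theta\in\Theta$ exists (for instance, for a PEF on $\bbR$ with even top degree, the tails of $r$ must decay fast enough that the polynomial moments are finite, and the moment vector must be realizable by some member of the family). I would handle this the same way the paper handles the MLE generally — by noting existence holds "under mild conditions" on $r$ (finite sufficient-statistic moments and the moment vector interior to $H$), exactly paralleling the classical MLE existence statement cited from \cite{BN-2014}. Under those conditions the argument above is complete; the uniqueness part requires no extra assumptions beyond strict convexity of $F$, which is automatic for regular exponential families.
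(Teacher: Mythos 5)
Your proof follows essentially the same route as the paper: it drops the $\theta$-independent entropy term, rewrites the cross term as $\bar E(\theta)=F(\theta)-\theta^\top E_r[t(x)]$, sets $\nabla\bar E(\theta)=\nabla F(\theta)-E_r[t(x)]=0$, and uses $\nabla^2 F(\theta)\succ 0$ for uniqueness, exactly as in the derivation preceding the theorem. Your added caveat that $E_r[t(x)]$ must be finite and lie in the realizable moment set $H$ for a minimizer in $\Theta$ to exist is a reasonable refinement the paper leaves implicit, but otherwise the argument is the same.
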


Notice that when $r=p_\theta$, we obtain $\bareta=E_{p_\theta}[t(x)]=\eta$, so that the method $\bareta_\MLE(r)$ is consistent (by analogy to the finite i.i.d. MLE case): $\bar\eta_\ML(p_\theta)=\eta=\nabla F(\theta)$.

The KLD right-sided minimization problem can be interpreted as an information projection~\cite{nielsen2018information} of $r$ onto $\calE_t$.
As a corollary of Theorem~\ref{thm:MLE}, we get:

\begin{Corollary}[Best right-sided KLD simplification of a mixture]\label{cor:msimplify}
The best right-sided KLD simplification of a homogeneous mixture of exponential families~\cite{mclachlan1988mixture} $m(x)=\sum_{i=1}^k w_i p_{\theta_i}(x)$ with $p_{\theta_i}\in\calE_t$, i.e. $\min_{\theta\in\Theta} D_\KL[m:p_\theta]$,
  into a single component $p^\eta(x)$ is given by $\eta=\hat\eta_\MLE(m)=E_m[t(x)]=\sum_{i=1}^k \eta_i=\bar\eta$.
\end{Corollary}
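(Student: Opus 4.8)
The plan is to apply Theorem~\ref{thm:MLE} directly with $r = m$, the mixture, and then evaluate the resulting expectation $E_m[t(x)]$ by exploiting linearity of expectation over the mixture components. Since Theorem~\ref{thm:MLE} already establishes that the minimizer of $D_\KL[r:p_\theta]$ over $\theta\in\Theta$ is the density $p^{\bareta}$ with $\bareta = E_r[t(x)] = \int_\calX r(x)t(x)\dx$, the only remaining work is to specialize $r$ to a homogeneous mixture $m(x) = \sum_{i=1}^k w_i p_{\theta_i}(x)$ and simplify the integral.

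First I would substitute the mixture density into the moment formula:
\begin{equation*}
\bareta = E_m[t(x)] = \int_\calX \left(\sum_{i=1}^k w_i p_{\theta_i}(x)\right) t(x)\dx = \sum_{i=1}^k w_i \int_\calX p_{\theta_i}(x) t(x)\dx = \sum_{i=1}^k w_i E_{p_{\theta_i}}[t(x)].
\end{equation*}
Then, since each $p_{\theta_i}$ is a density of the same exponential family $\calE_t$, the dual (moment) parameterization gives $E_{p_{\theta_i}}[t(x)] = \nabla F(\theta_i) = \eta_i$, so $\bareta = \sum_{i=1}^k w_i \eta_i$. If the weights are taken to be uniform (or one simply writes the convex combination in terms of the component moment parameters), this is exactly the stated expression $\eta = \sum_{i=1}^k \eta_i$ up to the normalization by the mixture weights; I would present it as $\bareta = \sum_{i=1}^k w_i\eta_i$ and note that this is the convex combination of the component moment parameters, which recovers the claim in the corollary. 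The interchange of sum and integral is justified because the sum is finite, so no convergence subtlety arises.

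There is essentially no hard part here: the corollary is a one-line consequence of Theorem~\ref{thm:MLE} plus linearity of expectation and the standard fact $\eta_i = \nabla F(\theta_i) = E_{p_{\theta_i}}[t(x)]$. The only point requiring a word of care is the implicit claim that the minimizer actually lies in $\Theta$, i.e. that $\bareta = \sum_i w_i\eta_i$ belongs to the moment parameter space $H$; this holds because $H$ is convex (it is the image of the open convex set $\Theta$ under $\nabla F$ for a regular exponential family) and $\bareta$ is a convex combination of points $\eta_i \in H$, so $\bareta \in H$ and hence corresponds to a genuine density $p^{\bareta} \in \calE_t$. I would mention this convexity observation briefly to confirm well-definedness, and that completes the proof.
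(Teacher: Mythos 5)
Your proof is correct and takes essentially the same route as the paper: the corollary is obtained by specializing Theorem~\ref{thm:MLE} to $r=m$ and using linearity of expectation together with $E_{p_{\theta_i}}[t(x)]=\nabla F(\theta_i)=\eta_i$. Your weighted formula $\bar\eta=\sum_{i=1}^k w_i\eta_i$ is indeed the correct one (the unweighted sum $\sum_{i=1}^k\eta_i$ in the corollary's statement drops the mixture weights), and your remark that $\bar\eta\in H$ by convexity of the moment parameter space is a sound point that the paper leaves implicit.
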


Eq.~\ref{eq:MLEdensity} allows us to greatly simplifies the proofs reported in~\cite{pelletier-2005,simplifykde-2013} for mixture simplifications which involved  the explicit use of the Pythagoras' theorem in the
 dually flat spaces of exponential families~\cite{IG-2016}.  
Figure~\ref{fig:simplGMM} displays the geometric interpretation of the best KLD simplification of a GMM with ambient space the probability space $(\bbR,\calB(\bbR),\mu_L)$ where $\mu_L$ denotes the Lebesgue measure and $\calB(\bbR)$ the Borel $\sigma$-algebra of $\bbR$.
 
\begin{figure}%
\centering
\includegraphics[width=0.85\columnwidth]{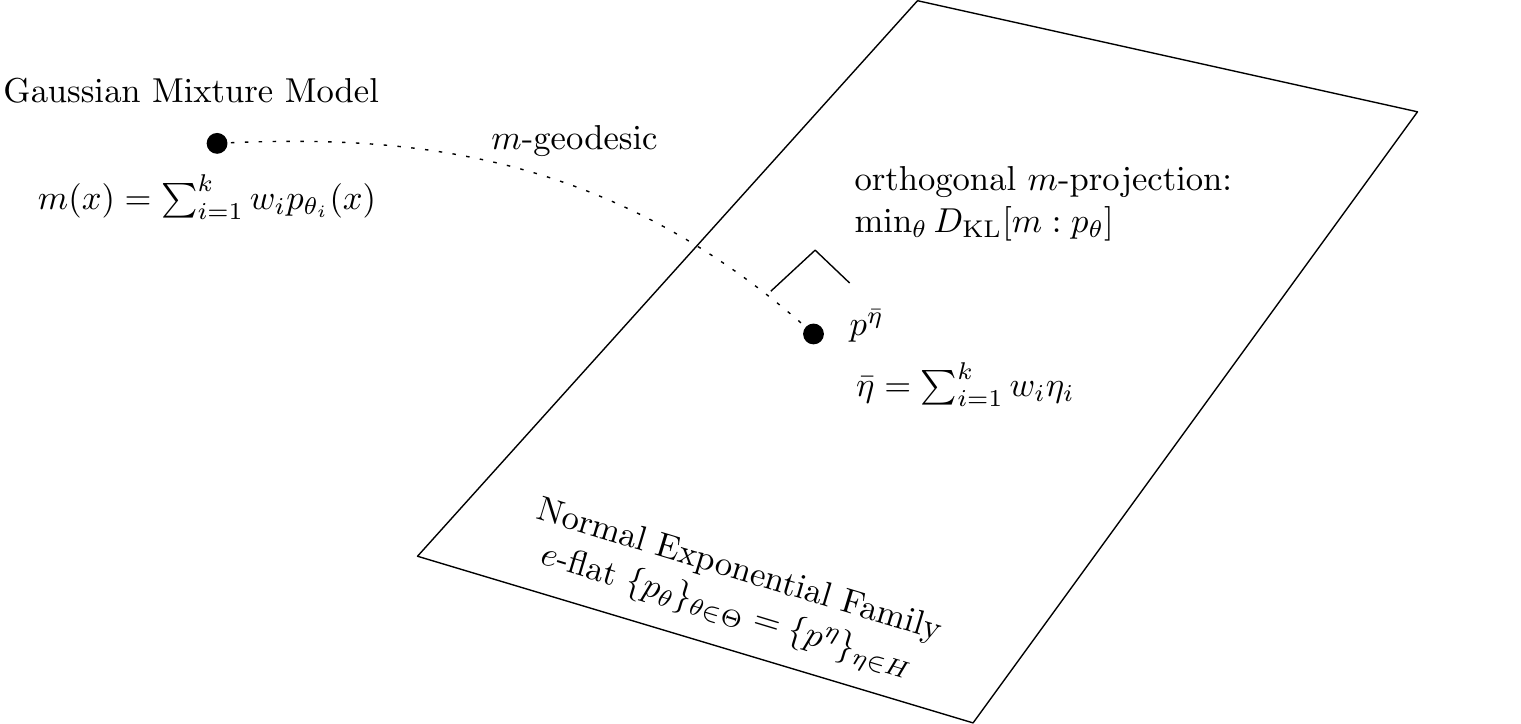}%

\caption{The best simplification of a GMM $m(x)$ into a single normal component $p_{\theta^*}$ ($\min_{\theta\in\Theta} D_\KL[m:p_\theta]=\min_{\eta\in H} D_\KL[m:p^\eta]$) is geometrically interpreted as the unique $m$-projection of $m(x)$ onto  the Gaussian family (a $e$-flat): We have $\eta^*=\bar\eta=\sum_{i=1}^k \eta_i$.}%
\label{fig:simplGMM}%
\end{figure}

Let us notice that Theorem~\ref{thm:MLE} yields an algebraic system for polynomial exponential densities, i.e., 
$E_m[x^i]=\bar\eta_i$ for $i\in\{1,\ldots, D\}$, 
to compute $\bar\eta_\MLE(m)$ for a given GMM $m(x)$ (since raw moments $E_m[x^i]$ are algebraic).
To contrast with this result,  the MLE of iid. observations is in general not an algebraic function~\cite{amendola2015maximum} but a transcendental function.

\subsection{Converting to a PEF using the natural parameterization (SME)}\label{sec:convertnat}

\subsubsection{Integral-based Score Matching Estimator (SME)}\label{sec:SM}
To convert density $r(x)$  to  an exponential density  with sufficient statistics $t(x)$, we can also  use the Score Matching Estimator~\cite{hyvarinen-2005,Hyvarinen-2007} (SME).
The score matching estimator minimizes the Hyv\"arinen divergence $D_H$ (Eq.~4 of~\cite{Hyvarinen-2007}):
$$
D_H[p:p_\theta]:=\frac{1}{2}\int \|\nabla_x \log p(x)-\nabla_x\log p_\theta(x)\|^2 p(x) \dx.
$$

That is, we convert a density $r(x)$ into an exponential family density $p_\theta(x)$ using the following minimizing problem: 
$$
\theta_\SM(r)=\min_{\theta\in\Theta} D_H[r:p_\theta].
$$

Beware that in statistics, the score $s_\theta(x)$ is defined by $\nabla_\theta\log p_\theta(x)$, but in score matching we refer to the ``data score''   defined by $\nabla_x\log p_\theta(x)$.
Hyv\"arinen~\cite{Hyvarinen-2007} gave an explanation of the naming ``score'' using a spurious location parameter.

\begin{itemize}
\item Generic solution: 
It can be shown that for exponential families~\cite{Hyvarinen-2007}, we get the following solution:
\begin{equation}\label{eq:SME}
\theta_\SM(r) = -\left(E_r[A(x)] \right)^{-1} \times \left( E_r[b(x)] \right),
\end{equation} 
where 
$$
A(x):=[t_i'(x)t_j'(x)]_{ij}
$$
is a $D\times D$ symmetric matrix, and 
$$
b(x)=[t_1''(x) \ldots\ t_D''(x)]^\top
$$ 
is a $D$-dimensional column vector.

\begin{Theorem}\label{thm:smef}
The best conversion of a density $r(x)$ into a density $p_\theta(x)$ of an exponential family minimizing the right-sided Hyv\"arinen divergence is 
$$
\theta_\SM(r) = -\left(E_r[[t_i'(x)t_j'(x)]_{ij}] \right)^{-1} \times \left( E_r[[t_1''(x) \ldots\ t_D''(x)]]^\top \right).
$$
\end{Theorem}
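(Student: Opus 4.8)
The plan is to expand the right-sided Hyv\"arinen divergence $D_H[r:p_\theta]$ as an explicit function of $\theta$, exploit the fact that for an exponential family the data score $\nabla_x\log p_\theta(x)$ is \emph{linear} in $\theta$, and then minimize the resulting convex quadratic by a stationarity condition. First I would note that in the univariate setting
$$
\nabla_x\log p_\theta(x)=\frac{\mathrm{d}}{\mathrm{d}x}\left(\sum_{i=1}^D\theta_i t_i(x)-F(\theta)\right)=\sum_{i=1}^D\theta_i t_i'(x)=\theta^\top t'(x),
$$
with $t'(x):=[t_1'(x)\ \ldots\ t_D'(x)]^\top$; crucially the cumulant $F(\theta)$ drops out. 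Substituting into $D_H[r:p_\theta]=\frac12\int\big(\nabla_x\log r(x)-\theta^\top t'(x)\big)^2 r(x)\,\dx$ and expanding the square produces a $\theta$-independent term $\frac12\int(\nabla_x\log r)^2 r\,\dx$, a cross term $-\int(\nabla_x\log r(x))\,\theta^\top t'(x)\,r(x)\,\dx$, and a quadratic term $\frac12\int(\theta^\top t'(x))^2 r(x)\,\dx=\frac12\,\theta^\top E_r[A(x)]\,\theta$ with $A(x)=t'(x)t'(x)^\top=[t_i'(x)t_j'(x)]_{ij}$.

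The key manipulation, due to Hyv\"arinen~\cite{Hyvarinen-2007}, is rewriting the cross term using $(\nabla_x\log r(x))\,r(x)=\nabla_x r(x)$ and integrating by parts:
$$
-\int(\nabla_x\log r(x))\,\theta^\top t'(x)\,r(x)\,\dx=-\Big[r(x)\,\theta^\top t'(x)\Big]_{-\infty}^{+\infty}+\int r(x)\,\theta^\top t''(x)\,\dx=\theta^\top E_r[b(x)],
$$
where $b(x)=[t_1''(x)\ \ldots\ t_D''(x)]^\top$ and the boundary term is assumed to vanish. Collecting the terms gives $D_H[r:p_\theta]=C+\theta^\top E_r[b(x)]+\tfrac12\,\theta^\top E_r[A(x)]\,\theta$ for a constant $C$ not depending on $\theta$. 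Since $E_r[A(x)]=E_r[t'(x)t'(x)^\top]$ is positive semidefinite --- and positive definite whenever $t_1',\ldots,t_D'$ are linearly independent in $L^2(r)$, which holds for the polynomial statistics $t_i(x)=x^i$ since $t_i'(x)=i\,x^{i-1}$ are linearly independent monomials --- this quadratic form is strictly convex and its unique minimizer is characterized by $\nabla_\theta D_H[r:p_\theta]=E_r[b(x)]+E_r[A(x)]\,\theta=0$, i.e. $\theta_\SM(r)=-\big(E_r[A(x)]\big)^{-1}E_r[b(x)]$, which is exactly the claimed formula (and recovers Eq.~\ref{eq:SME}).

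The step I expect to require the most care is the vanishing of the boundary term $[\,r(x)\,\theta^\top t'(x)\,]_{-\infty}^{+\infty}$ in the integration by parts. For an arbitrary density $r$ this needs a regularity hypothesis --- typically that $r$ is differentiable with $\lim_{|x|\to\infty}r(x)\,t_i'(x)=0$ for each $i$, together with the integrability of $\nabla_x\log r$ against $r$ --- but it is automatic in the cases of interest here: a GMM (or any polynomial exponential density) decays faster than any polynomial $t_i'(x)$ grows, so the boundary contribution is zero. I would state this assumption explicitly and then remark that it is satisfied by GMMs, so that Theorem~\ref{thm:smef} applies to the mixture-to-PEF conversion used in the sequel.
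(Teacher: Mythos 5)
Your derivation is correct and is exactly the standard score-matching argument for exponential families that the paper relies on: the paper itself states Theorem~\ref{thm:smef} by citing Hyv\"arinen~\cite{Hyvarinen-2007} rather than spelling out the expansion, and your expansion of $D_H[r:p_\theta]$ into a convex quadratic in $\theta$ via the linearity of the data score and integration by parts (with the vanishing boundary term made explicit) is precisely that cited derivation, yielding $\theta_\SM(r)=-\left(E_r[A(x)]\right)^{-1}E_r[b(x)]$ as in Eq.~\ref{eq:SME}. Your added care about the boundary condition and the positive definiteness of $E_r[t'(x)t'(x)^\top]$ for polynomial sufficient statistics is a welcome clarification but does not change the approach.
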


\item Solution instantiated for polynomial exponential families:

For polynomial exponential families of order $D$, we have  $t_i'(x)=ix^{i-1}$ and $t_i''(x)=i(i-1)x^{i-2}$, and therefore we have
$$
A_D=E_r[A(x)]= \left[ ij\, \mu_{i+j-2}(r)\right]_{ij},
$$ 
and 
$$
b_D=E_s[b(x)]= \left[j(j-1)\, \mu_{j-2}(r)\right]_j,
$$
where  $\mu_l(r):=E_r[X^l]$ denotes the $l$-th raw moment of distribution $X\sim r(x)$ (with the convention that $m_{-1}(r)=0$).
For a probability density function $r(x)$, we have $\mu_1(r)=1$.

Thus the integral-based SME of a density $r$ is:
\begin{equation}
\theta_\SM(r) = - \left(\left[ij \mu_{i+j-2}(r)\right]_{ij}\right)^{-1} \times \left[j(j-1) \mu_{j-2}(r)\right]_j.
\end{equation}

For example, matrix $A_4$ is 
$$
\left[
\begin{array}{cccc}
\mu_0 & 2\mu_1 & 3\mu_2 & 4\mu_3\cr
2\mu_1 & 4\mu_2 & 6\mu_3 & 8\mu_4\cr
3\mu_2 & 6\mu_3 & 9\mu_4 & 12\mu_5\cr
4\mu_3 & 8\mu_4 & 12\mu_5 & 16\mu_6
\end{array}
\right].
$$

\item Faster PEF solutions using Hankel matrices:
 
The method of Cobb et al.~\cite{Cobb-1983} (1983) anticipated the score matching method of Hyv\"arinen (2005).
It can be derived from Stein's lemma for exponential families (see Appendix~\ref{sec:Stein}).
The integral-based score matching method is consistent, i.e., if $r=p_\theta$ then $\bar\theta_\SM=\theta$: 
The probabilistic proof for $r(x)=p_e(x)$ is reported as Theorem~2 of~\cite{Cobb-1983}. 
The integral-based  proof is based on the property that arbitrary order partial mixed derivatives can be obtained from higher-order partial derivatives with respect to $\theta_1$~\cite{PEF-holonomic2016}: 
$$
\partial_1^{i_1}\ldots\partial_D^{i_D} F(\theta)=\partial_1^{\sum_{j=1}^D ji_j} F(\theta),
$$
where $\partial_i:=\frac{\partial}{\partial\theta_i}$.

The complexity of the direct SME method is $O(D^3)$ as it requires to inverse the $D\times D$-dimensional matrix $A_D$.

We show how to lower this complexity by reporting an equivalent method (originally presented in~\cite{Cobb-1983})  which relies on recurrence relationships between the moments of $p_\theta(x)$ for PEDs.
Recall that $\mu_l(r)$ denotes the $l$-th raw moment $E_r[x^l]$.

Let $A'=[a_{i+j-2}']_{ij}$ denote the $D\times D$ symmetric matrix with $a_{i+j-2}'(r)=\mu_{i+j-2}(r)$ (with $a_0'(r)=\mu_0(r)=1$),
 and $b'=[b_i]_i$ the $D$-dimensional vector with $b_i'(r)=(i+1)\mu_i(r)$.
We solve the system $A'\beta=b'$ to get $\beta={A'}^{-1}b'$.
We then get the natural parameter $\bartheta_\SME$ from the vector $\beta$ as
\begin{equation}
\bartheta_\SME=\left[
\begin{array}{c}
-\frac{\beta_1}{2}\\
\vdots\\
-\frac{\beta_i}{i+1}\\
\vdots\\
-\frac{\beta_D}{D+1}
\end{array}
\right].
\end{equation}

Now, if we inspect matrix $A_D'=\left[\mu_{i+j-2}(r)\right]$,  we find that matrix $A_D'$ is a Hankel matrix:
A Hankel matrix has constant anti-diagonals and can be inverted in quadratic-time~\cite{trench1965algorithm,heinig2011fast} instead of cubic time for a general $D\times D$ matrix. (The inverse of a Hankel matrix is a Bezoutian matrix~\cite{fuhrmann1986remarks}.)
Moreover, a Hankel matrix can be stored using linear memory (store $2D-1$ coefficients) instead of quadratic memory of regular matrices.

For example, matrix $A_4'$ is:
$$
A_4'=\left[
\begin{array}{cccc}
\mu_0 & \mu_1 & \mu_2 & \mu_3\cr
\mu_1 & \mu_2 & \mu_3 & \mu_4\cr
\mu_2 & \mu_3 & \mu_4 & \mu_5\cr
\mu_3 & \mu_4 & \mu_5 & \mu_6
\end{array}
\right],
$$
and requires only $6=2\times 4-2$ coefficients to be stored instead of $4\times 4=16$.
The order-$d$ moment matrix is
$$
A_d' :=[\mu_{i+j-2}]_{ij}=\left[
\begin{array}{cccc}
\mu_0 & \mu_1 & \ldots & \mu_d\cr
\mu_1 & \mu_2 & \ldots  & \vdots\cr
\vdots &   & \ddots & \vdots\cr
\mu_d & \ldots & \ldots & \mu_{2d}
\end{array}
\right],
$$
is a Hankel matrix stored using $2d+1$ coefficients:
$$
A_d'=:\Hankel(\mu_0,\mu_1,\ldots,\mu_{2d}).
$$

In statistics, those matrices $A'_d$ are called moment matrices and well-studied~\cite{lindsay1989determinants,lindsay1989moment,provost2009inversion}. 
The variance $\mathrm{Var}[X]$ of a random variable $X$ can be expressed as the determinant of the order-$2$ moment matrix:

$$
\mathrm{Var}[X]=E[(X-\mu)^2]=E[X^2]-E[X]^2=\mu_2-\mu_1^2=\mathrm{det}\left(\left[\begin{array}{cc}
1 & \mu_1 \cr
\mu_1 & \mu_2 
\end{array}\right]\right)\geq 0.
$$

This observation yields a generalization of the notion of variance to $d+1$ random variables: $X_1,\ldots, X_{d+1}\sim_{iid}  F_X \Rightarrow E\left[\prod_{j>i} (X_i-X_j)^2 \right] = (d+1)!\,\mathrm{det}(M_d)\geq 0$. The variance can be expressed as $E[\frac{1}{2}(X_1-X_2)^2]$ for $X_1, X_2\sim_{iid} F_X$. See~\cite{serfling2009approximation} (Chapter 5) for a detailed description related to $U$-statistics.

For GMMs $r$, the raw moments $\mu_l(r)$ to build matrix $A_D$ can be calculated in closed-form as explained in section \S\ref{sec:rawmom}.

\begin{Theorem}[Score matching GMM conversion]\label{thm:SMEmixPED}
The score matching conversion of a GMM $m(x)$ into a polynomial exponential density $p_{\theta_\SM(m)}(x)$ of order $D$ is obtained as
$$
\theta_\SM(m) = - \left(\left[ij\, m_{i+j-2}\right]_{ij}\right)^{-1} \times \left[j(j-1)\, m_{j-2}\right]_j,
$$
 where $m_i=E_{m}[x^i]$ denote the $i$th non-central moment of the GMM $m(x)$.
\end{Theorem}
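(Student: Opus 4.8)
The plan is to obtain Theorem~\ref{thm:SMEmixPED} as an immediate specialization of the generic exponential-family result of Theorem~\ref{thm:smef}, applied to the density $r=m$ with the polynomial sufficient statistics $t_i(x)=x^i$, $i\in\{1,\ldots,D\}$.

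First I would compute the derivatives of the monomials: $t_i'(x)=i\,x^{i-1}$ and $t_i''(x)=i(i-1)\,x^{i-2}$. Substituting these into the matrix $A(x)=[t_i'(x)t_j'(x)]_{ij}$ and the vector $b(x)=[t_1''(x)\ \ldots\ t_D''(x)]^\top$ of Theorem~\ref{thm:smef} gives $A(x)=[\,ij\,x^{i+j-2}\,]_{ij}$ and $b(x)=[\,j(j-1)\,x^{j-2}\,]_j$. Taking expectation with respect to $m$ and using linearity to pull the integer prefactors outside the integral, I get $E_m[A(x)]=[\,ij\,m_{i+j-2}\,]_{ij}$ and $E_m[b(x)]=[\,j(j-1)\,m_{j-2}\,]_j$, where $m_l=E_m[x^l]$; note that the $j=1$ entry of $E_m[b(x)]$ is $0$ by the convention $m_{-1}=0$, which is harmless since the prefactor $j(j-1)$ already vanishes there. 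Plugging these into Eq.~\ref{eq:SME} yields exactly the stated formula $\theta_\SM(m)=-\big([\,ij\,m_{i+j-2}\,]_{ij}\big)^{-1}\times[\,j(j-1)\,m_{j-2}\,]_j$.

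It then remains only to check two routine points. (i) Every moment $m_l$ with $0\le l\le 2D-2$ appearing above is finite, since each Gaussian component possesses moments of all orders and $m$ is a finite convex combination of such components; the closed-form evaluation of these $m_l$ is carried out in \S\ref{sec:rawmom}. (ii) The matrix $E_m[A(x)]=[\,ij\,m_{i+j-2}\,]_{ij}$ is invertible: for any $v\neq 0$ we have $v^\top E_m[A(x)]\,v=E_m\big[(\sum_{i=1}^D v_i\,i\,x^{i-1})^2\big]\ge 0$, with equality only if the polynomial $\sum_{i=1}^D v_i\,i\,x^{i-1}$ vanishes $m$-almost everywhere; since a GMM density is strictly positive on $\bbR$, this forces that polynomial to be identically zero, hence $v=0$, so $E_m[A(x)]\succ 0$.

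The ``hard part'' is essentially nonexistent: all the genuine content is already packaged in Theorem~\ref{thm:smef}, and the only work is the bookkeeping of monomial derivatives together with the finiteness of GMM moments established in \S\ref{sec:rawmom}. One could alternatively bypass Theorem~\ref{thm:smef} and derive the same identity directly from Stein's lemma for exponential families (Lemma~\ref{lemma:Stein} in Appendix~\ref{sec:Stein}) via the moment-recurrence route of Cobb et al.~\cite{Cobb-1983}, but the substitution argument above is the shortest.
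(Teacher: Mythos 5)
Your proposal is correct and follows essentially the same route as the paper: the theorem is obtained by instantiating the generic exponential-family score-matching solution of Theorem~\ref{thm:smef} with the monomial sufficient statistics $t_i(x)=x^i$, computing $t_i'$ and $t_i''$, and expressing the resulting expectations through the closed-form raw moments of the GMM from \S\ref{sec:rawmom}. Your added verification that $E_m[A(x)]$ is positive definite (hence invertible) is a welcome bit of rigor that the paper only handles implicitly via its remarks on moment (Hankel) matrices.
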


\end{itemize}

\subsection{Converting numerically moment parameters from/to natural parameters}\label{sec:ped:thetaeta}

Recall that our fast heuristic approximates the Jeffreys divergence by
$$
\tilde{D}_J[m,m']:=(\bar\theta_\SME(m')-\bar\theta_\SME(m))^\top (\bar\eta_\MLE(m') -\bar\eta_\MLE(m)).
$$

Because $F$ and $\nabla F^*$ are not available in closed form (except for the case $D=2$ of the normal family), we cannot get $\theta$ from a given $\eta$ (using $\theta=\nabla F^*(\eta)$) nor $\eta$ from a given $\theta$ (using $\eta=\nabla F(\theta)$).

However, provided that we can approximate numerically   $\tilde\eta\simeq\nabla F(\theta)$ and $\tilde\theta\simeq\nabla F^*(\eta)$, we also consider these two approximations for the Jeffreys divergence:
$$
\tilde{\Delta}_J^\MLE[m_1,m_2]:=(\tilde\theta_2^\MLE-\tilde\theta_1^\MLE)^\top 	(\bar\eta_2^\MLE-\bar\eta_1^\MLE),
$$
and
$$
\tilde{\Delta}_J^\SME[m_1,m_2]=(\bar\theta_2^\SME-\bar\theta_1^\SME)^\top 	(\tilde\eta_2^\SME-\tilde\eta_1^\SME).
$$

In this section, we show how to numerically estimate  $\tilde\theta^\MLE\simeq \nabla F(\bar\eta^\MLE)$ from $\bar\eta^\MLE$ in \S\ref{sec:eta2theta}.
Next, in \S\ref{sec:theta2eta}, we show how to stochastically estimate $\tilde\eta^\SME\simeq \nabla F^*(\bar\theta^\SME)$.

\subsubsection{Converting moment  parameters to natural parameters using maximum entropy}\label{sec:eta2theta}

Let us report the iterative approximation technique of~\cite{MaxEnt-1992} (which extended the method described in~\cite{zellner1988calculation}) based on solving a maximum entropy problem (MaxEnt problem).
This method will be useful when comparing our fast heuristic $\tilde{D}_J[m,m']$ with the approximations 
$\tilde{\Delta}_J^\MLE[m,m']$ and  $\tilde{\Delta}_J^\SME[m,m']$.
 
The density $p_\theta$ of any exponential family can be characterized as a maximum entropy distribution given the $D$ moment constraints
 $E_{p_\theta}[t_i(x)]=\eta_i$: Namely,
$\max_p h(p)$ subject to the $D+1$ moment constraints $\int t_i(x) p(x)\dx=\eta_i$ for $i\in\{0,\ldots, D\}$, 
where we added by convention $\eta_0=1$ and $t_0(x)=1$ (so that $\int p(x)\dx=1$). 
The solution of this MaxEnt problem~\cite{MaxEnt-1992} is $p(x)=p_\lambda$ where $\lambda$ are the $D+1$ Lagrangian parameters.
Here, we adopt the the following canonical parameterization of the densities of an exponential family:
$$
p_\lambda(x):=\exp\left(-\sum_{i=0}^D \lambda_i t_i(x)\right).
$$
That is, $F(\lambda)=\lambda_0$ and $\lambda_i=-\theta_i$ for $i\in\{1,\ldots, D\}$.
Parameter $\lambda$ is a kind of augmented natural parameter which includes the log-normalizer in its first coefficient.

Let $K_i(\lambda):=E_{p_\theta}[t_i(x)]=\eta_i$ denote the set of $D+1$ non-linear equations for $i\in\{0,\ldots, D\}$. 
The Iterative Linear System Method~\cite{MaxEnt-1992} (ILSM) converts iteratively $p^\eta$ to $p_\theta$. 
We initialize $\lambda^{(0)}$ to $\bartheta_\SME$ (and calculate numerically $\lambda_0^{(0)}=F(\bartheta_\SME)$).

At iteration $t$ with current estimate $\lambda^t$, we use the following first-order Taylor approximation:
$$
K_i(\lambda)\approx K_i(\lambda^{(t)})+(\lambda-\lambda^{(t)})\nabla K_i(\lambda^{(t)}).
$$
Let $H(\theta)$ denote the $(D+1)\times (D+1)$ matrix:
$$
H(\lambda):=\left[\frac{\partial K_i(\lambda)}{\partial\lambda_j}\right]_{ij}.
$$
We have
$$
H_{ij}(\lambda)=H_{ji}(\lambda)=-E_{p_\theta}[t_i(x)t_j(x)].
$$

We update as follows:
\begin{equation}\label{eq:update}
\lambda^{(t+1)}=\lambda^{(t)}+H^{-1}(\lambda^{(t)})
\left[\begin{array}{c}
\eta_0-K_0(\lambda^{(t)})\\ \vdots\\ \eta_D-K_D(\lambda^{(t)}) \end{array}\right].
\end{equation}

For a PEF of order $D$, we have
$$
H_{ij}(\lambda)=-E_{p_\theta}[x^{i+j-2}]=-\mu_{i+j-2}(p_\theta).
$$
This yields a moment matrix $H_\lambda$ (Hankel matrix) which can be inverted in quadratic time~\cite{heinig2011fast}.
In our setting, the moment matrix is invertible because $|H|>0$, see~\cite{karlin1968total}.

Let $\tilde{\lambda}_T(\eta)$ denote  $\theta^{(T)}$ after $T$ iterations (retrieved from  $\lambda^{(T)}$), and let be the corresponding natural parameter of the PED.
We have the following approximation of the JD:
$$
D_J[m,m']\approx (\tilde{\theta}_T(\eta')-\tilde{\theta}_T(\eta))^\top (\eta'-\eta).
$$

The method is costly because we need to numerically calculate $\mu_{i+j-2}(p_\theta)$ and the $K_i$'s (e.g., univariate Simpson integrator).
Another potential method consists in estimating these expectations using acceptance-rejection sampling~\cite{vonNeumann1951,flury1990acceptance} (see Appendix~\ref{sec:ar}).
We may also consider the holonomic gradient descent~\cite{PEF-holonomic2016}.
Thus the conversion $\eta\rightarrow\theta$ method is costly. 
Our heuristic $\tilde\Delta_J$  bypasses this costly moment-to-natural parameter conversion by  converting each mixture $m$ to a pair $(p_{\theta_\SME},p_{\eta_\MLE})$ of PEDs parameterized in the natural and moment parameters
 (i.e., loosely speaking, we untangle these dual parameterizations).

\subsubsection{Converting natural parameters to moment parameters}\label{sec:theta2eta}

Given a PED $p_\theta(x)$, we ask to find its corresponding moment parameter $\eta$ (i.e., $p_\theta=p^\eta$).
Since $\eta=E_{p_\theta}[t(x)]$, we sample $s$ iid. variates $x_1,\ldots, x_s$ from $p_\theta$ using acceptance-rejection sampling~\cite{vonNeumann1951,flury1990acceptance} or any other Markov chain Monte Carlo technique~\cite{rohde2014mcmc}, and estimate $\hat{\eta}$ as:
$$
\hat\eta=\frac{1}{s}\sum_{i=1}^s t(x_i).
$$

\subsection{Raw non-central moments of normal distributions and GMMs}\label{sec:rawmom}

In order to implement the MLE or SME Gaussian mixture conversion procedures, we need to calculate the raw moments 
of a Gaussian mixture model.
The $l$-th moment raw moment $E[Z^l]$ of a standard normal distribution $Z\sim N(0,1)$ is $0$ when $l$ is odd (since the normal standard density is an even function) and
$(l-1)!!=2^{-\frac{l}{2}}\frac{l!}{(l/2)!}$ when $l$ is even,
where  $n!!=\sqrt{\frac{2^{n+1}}{\pi}}\Gamma(\frac{n}{2}+1)=\prod_{k=0}^{\ceil{\frac{n}{2}}-1} (n-2k)$ is the {\em double factorial} (with $(-1)!!=1$ by convention).
Using the binomial theorem, we deduce that a normal distribution $X=\mu+\sigma Z$ has finite moments:
$$
\mu_l(p_{\mu,\sigma})=E_{p_{\mu,\sigma}}[X^l]=E[(\mu+\sigma Z)^l]=E[(\mu+\sigma Z)^l]=\sum_{i=0}^l \binom{l}{i}\mu^{l-i}\sigma^i E[Z^i].
$$
That is, we have
\begin{equation}\label{eq:momz0}
\mu_l(p_{\mu,\sigma})= \sum_{i=0}^{\floor{\frac{l}{2}}} \binom{l}{i}(2i-1)!! \mu^{l-2i}\sigma^{2i},
\end{equation}
where $n!!$ denotes the double factorial:
$$
n!!=\prod_{k=0}^{\left\lceil\frac{n}{2}\right]-1}(n-2 k)=\left\{
\begin{array}{ll}
 \prod_{k=1}^{\frac{n}{2}}(2 k) & \mbox{$n$ is even},\\
 \prod_{k=1}^{\frac{n+1}{2}}(2 k-1)& \mbox{$n$ is odd}.
\end{array}
\right.
$$

By the linearity of the expectation $E[\cdot]$, we deduce the $l$-th raw moment of a GMM  $m(x)=\sum_{i=1}^k w_i p_{\mu_i,\sigma_i}(x)$: 
$$
\mu_l(m)=\sum_{i=1}^k w_i \mu_l(p_{\mu_I,\sigma_i}).
$$

Notice that by using~\cite{Barr-1999}, we can extend this formula to truncated normals and GMMs.
Thus computing the first $O(D)$ raw moments of a GMM with $k$ components can be done in $O(kD^2)$ using the Pascal triangle method for computing the binomial coefficients. See also~\cite{amendola2016moment}.

In general, the raw moments $\mu_l(\theta):=E_{p_\theta}[t(x)^l]$ of a probability density function $p_\theta$ belonging to an EF can be calculated from the $l$-fold (partial) derivatives of the moment generating function~\cite{Cumulant-1999} (MGF).
The MGF of a random variable $X$ is defined by $m_X(u):=E[e^{u^\top X}]$.
When $X\sim p_\theta(x)$, we get the MGF $m_{\theta}(u):=E[e^{u^\top t(X)}]$ which admits the following simple expression:
$$
m_{\theta}(u)=\exp(F(\theta+u)-F(\theta)).
$$

For example, we check that we have 
\begin{eqnarray*}
\mu_1(\theta) &:=& E_{p_\theta}[t(x)],\\
&=&\left.(\nabla_u m_{\theta}(u))\right|_{u=0},\\
&=& \left. (\nabla_u F(\theta+u)) \exp(F(\theta+u)-F(\theta))\right|_{u=0}\\
&=&\nabla_\theta F(\theta)=:\eta,
\end{eqnarray*}
the moment parameter, since $\left.\nabla_u F(\theta+u)\right|_{u=0}=\nabla_\theta F(\theta)$ and $\left.\exp(F(\theta+u)-F(\theta))\right|_{u=0}=e^0=1$.

For uniorder exponential families, we have $E_{p_\theta}[t(x)^l]=m_{\theta}^{(l)}(0)$.
For multiparameter exponential families, we have~\cite{Cumulant-1999}:
$$
E_{p_\theta}[t_1(x)^{n_1}\ldots t_D(x)^{n_D}]=
\left.\frac{\partial^{\sum_{i=1}^D n_i}}{\partial u_1^{n_1}\ldots \partial u_D^{n_D}} m_{\theta(u)}\right|_{u=0}.
$$

Thus we get the following proposition:

\begin{Proposition}
For a polynomial exponential family $p_\theta(x)=\exp(\sum_{i=1}^D \theta_ix^i-F(\theta))$ of order $D$ with $t_1(x)=x$, we get the $l$th moment $E_{p_\theta}[x^l]$ expressed in the natural coordinates $\theta$ as
$$
\mu_l(\theta)=\left.\frac{\partial^l}{\partial u_1^l} m_{\theta}(u)\right|_{u=0}.
$$
\end{Proposition}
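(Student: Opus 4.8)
The plan is to obtain this statement as an immediate specialization of the general multiparameter moment--MGF identity recalled just above, namely $E_{p_\theta}[t_1(x)^{n_1}\cdots t_D(x)^{n_D}] = \left.\frac{\partial^{\sum_i n_i}}{\partial u_1^{n_1}\cdots\partial u_D^{n_D}} m_{\theta}(u)\right|_{u=0}$. First I would take the multi-index $(n_1,\ldots,n_D) = (l,0,\ldots,0)$. Then the product of sufficient statistics collapses to $t_1(x)^l$, because every factor carrying a zero exponent equals $1$; the total differentiation order $\sum_i n_i$ equals $l$, and all derivatives are taken in the single variable $u_1$. Since a polynomial exponential family of order $D$ has $t_1(x) = x$, the left-hand side is precisely $E_{p_\theta}[x^l] = \mu_l(\theta)$, which is the claimed formula.

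For completeness I would also record the underlying one-line computation instead of merely citing the general identity: write $m_{\theta}(u) = E_{p_\theta}[e^{u^\top t(X)}] = \exp(F(\theta+u)-F(\theta))$, differentiate $l$ times with respect to $u_1$, and interchange $\partial_{u_1}^l$ with the expectation so as to bring down the factor $t_1(X)^l = X^l$ inside $E_{p_\theta}[\cdot]$; evaluating at $u = 0$ removes the exponential factor $e^{u^\top t(X)}$ (it becomes $e^0 = 1$), leaving $E_{p_\theta}[X^l]$.

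The only delicate point --- and essentially the sole obstacle --- is the legitimacy of differentiating under the expectation sign. Here I would invoke the standard regularity of exponential families already recalled in the excerpt: the log-normalizer $F$ is real-analytic on the open natural parameter domain $\Theta$, so for $\theta$ interior to $\Theta$ there is a neighborhood of $0$ in which $\theta + u \in \Theta$ and $m_{\theta}(u)$ is finite and analytic; consequently all its partial derivatives exist and the dominated-convergence interchange of $\partial_{u_1}^l$ with $E_{p_\theta}[\cdot]$ is valid. When $\calX = \bbR$ this also relies on $\theta_D < 0$ for even $D$, which (as noted earlier) ensures $p_\theta$ has finite moments of every order. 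No further machinery is required; the remaining steps are purely formal.
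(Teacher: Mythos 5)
Your proposal is correct and follows essentially the same route as the paper, which obtains the proposition as a direct specialization of the cited multiparameter identity $E_{p_\theta}[t_1(x)^{n_1}\cdots t_D(x)^{n_D}]=\left.\partial_{u_1}^{n_1}\cdots\partial_{u_D}^{n_D} m_\theta(u)\right|_{u=0}$ with the multi-index $(l,0,\ldots,0)$ and $t_1(x)=x$. Your added justification of differentiating under the expectation via openness of $\Theta$ and analyticity of $F$ is a harmless (and welcome) elaboration of what the paper leaves implicit.
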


\begin{Example}
For example, consider the univariate biorder polynomial exponential family commonly called the family of normal distributions.
The log-normalizer $F_N(\theta)$~\cite{nielsen2009statistical} is 
$$
F_N(\theta)=-\frac{\theta_1^2}{4\theta_2}+\frac{1}{2}\log\left(-\frac{\pi}{\theta_2}\right),
$$ 
for the  natural parameterization 
$$
\theta=(\theta_1,\theta_2)=\left(\frac{\mu}{\sigma^2},-\frac{1}{2\sigma^2}\right),
$$
with  the sufficient statistics is $t(x)=[t_1(x)\ t_2(x)]^\top=[x\ x^2]^\top$.
The MGF is $m_{\theta}(u)=\exp(F_N(\theta+u)-F_N(\theta))$.
We have 
$$
E[t_1(x)^2]=E[x^2]=\left.\left(\frac{\partial^2}{\partial u_1^2} m_{\theta}(u)\right)\right|_{u=0}=\frac{\theta_1^2}{4\theta_2^2}-\frac{1}{2\theta_2}.
$$
Plugging $\theta_1=\frac{\mu}{\sigma^2}$ and $\theta_2=-\frac{1}{2\sigma^2}$ in the above right-hand-side equation, we get
$$
E_{p_{\mu,\sigma}}[x^2]=\mu^2+\sigma^2.
$$
In general,  we get for normal distributions:
$$
E_{p_{\mu,\sigma}}[X^l]=\left.\left(\frac{\partial^l}{\partial u_1^l} m_{\theta}(u)\right)\right|_{u=0}.
$$
Thus using symbolic computing, we can automatically calculate any order moment in closed-form.
For example, following code in {\sc Maxima} (a free computer algebra system available online at \url{https://maxima.sourceforge.io/}) calculates the 5th raw normal moment $E_{p_{\mu,\sigma}}[X^5]$:

\begin{lstlisting}[backgroundcolor = \color{lightgray}]
F(theta1,theta2):=-(theta1*theta1/(4*theta2))+(1/2)*log(-%pi/theta2);
MGF(theta1,theta2,u1,u2):=exp(F(theta1+u1,theta2+u2)-F(theta1,theta2));
derivative(MGF(theta1,theta2,u1,u2),u1,5);
ev(%, u1=0,u2=0);
ev(%,theta1=(mu/(sigma**2)),theta2=-1/(2*(sigma**2)));
ratsimp(%);
\end{lstlisting}

Executing the above {\sc Maxima} code yields
$$
E_{p_{\mu,\sigma}}[x^5]=\mu^5+10\mu^3\sigma^2+15\mu\sigma^4.
$$

We can also check that since $t_1(x)=x$ and $t_2(x)=x^2=t_1(x)^2$, we have:
\begin{eqnarray}
E_{p_{\mu,\sigma}}[x^{2k}]&=& E[t_1^{2k}(x)] = \left.\left(\frac{\partial^{2k}}{\partial u_1^{2k}} m_{\theta}(u)\right)\right|_{u=0},\\
&=& E[t_2^{k}(x)] = \left.\left(\frac{\partial^k}{\partial u_2^{k}} m_{\theta}(u)\right)\right|_{u=0}.
\end{eqnarray}

We may also directly calculate symbolically the Gaussian moments in {\sc Maxima} as follows:
\begin{lstlisting}[backgroundcolor = \color{lightgray}]
normal(x,mu,sigma) := (1.0/(sqrt(2*%pi)*sigma))*exp(-((x-mu)**2)/(2*sigma**2) );
assume(s>0);
maxOrder:18;
for i:1 while (i<=maxOrder)  
do( m[i]: ratsimp(integrate(normal(x,m,s)*(x**i),x,minf,inf)), print(m[i]));
\end{lstlisting}

\end{Example}

Notice that we can rewrite the MGF as 
$$
m_{\theta}(u)=\exp(F(\theta+u)-F(\theta))=\exp\left(\int_\theta^{\theta+u} \nabla F(u)\mathrm{d}u\right).
$$
This highlights that the moment parametrization $\eta=\nabla F(\theta)=E_{p_\theta}[t(X)]$ specifies the MGF~\cite{sampson1975characterizing} (i.e., dual parameterizations $m_\theta(u)=m^\eta(u)$).

\section{Goodness-of-fit between GMMs and PEDs:  Higher order Hyv\"arinen divergences}\label{sec:HyvDiv}

Once we have converted a GMM $m(x)$ into an unnormalized PED $q_{\theta_m}(x)=\tilde{p}_{\theta_m}(x)$, we would like to evaluate the quality of the conversion, i.e., $D[m(x):q_{\theta_m}(x)]$, using a statistical divergence $D[\cdot:\cdot]$.
This divergence shall allow us to perform {\em model selection} by choosing the order $D$  of the PEF so that $D[m(x):p_\theta(x)]\leq \epsilon$ for $\theta\in\bbR^D$, 
where $\epsilon>0$ is a prescribed threshold.
Since PEDs have computationally intractable normalization constants, we consider a {\em right-sided projective divergence}~\cite{IG-2016} $D[p:q]$  
that satisfies $D[p:\lambda q]=D[p:q]=D[p:\tilde{q}]$ for {\em any} $\lambda>0$. 
For example, we may consider the $\gamma$-divergence~\cite{gammadiv-2008} that is a {\em two-sided projective divergence}: 
$D_\gamma[\lambda p:\lambda' q]=D[p:q]=D[\tilde{p}:\tilde{q}]$ for any $\lambda,\lambda'>0$ and converge to the KLD when $\gamma\rightarrow 0$. 
However, the $\gamma$-divergence between a mixture model and an unnormalized PEF does not yield a closed-form formula.
Moreover, the $\gamma$-divergence between two unnormalized PEDs is expressed using the log-normalizer function $F(\cdot)$ that is computationally intractable~\cite{NN-2016}.

In order to a get a closed-form formula for a divergence between a mixture model and an unnormalized PED, 
we consider  the order-$\alpha$ (for $\alpha>0$) Hyv\"arinen divergence~\cite{IG-2016}  as follows:
\begin{equation}
D_{H,\alpha}[p:q] := \int p(x)^\alpha \left(\nabla_x \log p(x) - \nabla_x \log q(x) \right)^2 \dx,\quad \alpha>0.
\end{equation}
The Hyv\"arinen divergence~\cite{IG-2016} has also been called the Fisher divergence~\cite{hyvarinen-2005,VariationalFisherDivergence-2019,RLFisher-2021,FisherAE-2021}.
The  Hyv\"arinen divergence is also known as half of the relative Fisher information in the optimal transport community (Equation~(8) of~\cite{otto2000generalization} or Equation (2.2) in~\cite{toscani1999entropy}), where it is defined for two measures $\mu$ and $\nu$ as follows:
$$
I[\mu:\nu]:=\int_\calX \left\|\nabla \log \frac{\dmu}{\dnu}\right\|^2\dmu
= 4\int_\calX  \left\|\nabla\sqrt{\frac{\dmu}{\dnu}}\right\|^2\dnu.
$$
Notice that when $\alpha=1$, $D_{H,1}[p:q]=D_H[p:q]$, the ordinary Hyv\"arinen divergence~\cite{hyvarinen-2005}.

The Hyv\"arinen divergences $D_{H,\alpha}$ is a right-sided projective divergence~\cite{nielsen2017holder} 
which satisfies $D_{H,\alpha}[p:q]=D_{H,\alpha}[p:\lambda q]$ for any $\lambda>0$. That is, we have $D_{H,\alpha}[p:q]=D_{H,\alpha}[p:\tilde{q}]$.
Thus we have $D_{H,\alpha}[m:p_\theta]=D_{H,\alpha}[m:q_\theta]$ for a unnormalized PED $q_\theta=\tilde{p}_\theta$.
For statistical estimation, it is enough to have a sided projective divergence since we need to evaluate the goodness of fit between the (normalized) empirical distribution $p_e$ and the (unnormalized) parameteric density.

For univariate distributions, $\nabla_x \log p(x)=\frac{p'(x)}{p(x)}$, and $\frac{p'(x)}{p(x)}=\frac{\tilde{p}'(x)}{\tilde{p}(x)}$ where $\tilde{p}(x)$ is
the unnormalized model. 
For PEDs with homogeneous polynomial $P_\theta(x)$, we have $\frac{p'(x)}{p(x)}= (\log P_\theta(x))'= \sum_{i=1}^D i\theta_i x^{i-1}$.

\begin{Theorem}\label{thm:hdpef}
The Hyv\"arinen divergence $D_{H,2}[m:q_\theta]$ of order $2$ between a Gaussian mixture $m(x)$ and a polynomial exponential family density $q_\theta(x)$ is available in closed form.
\end{Theorem}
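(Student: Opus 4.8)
The plan is to expand $D_{H,2}[m:q_\theta]$ explicitly as a univariate integral and show that every piece reduces to a finite combination of raw moments of the GMM $m$, which by \S\ref{sec:rawmom} are available in closed form. First I would write $D_{H,2}[m:q_\theta] = \int_\bbR m(x)^2 \left(\frac{m'(x)}{m(x)} - \frac{q_\theta'(x)}{q_\theta(x)}\right)^2 \dx$. Using the projective property already established before the theorem, $\frac{q_\theta'(x)}{q_\theta(x)} = P_\theta'(x) = \sum_{i=1}^D i\theta_i x^{i-1}$, which is an \emph{ordinary polynomial} $R_\theta(x)$ of degree $D-1$. The term $\frac{m'(x)}{m(x)}$ is not polynomial, but it is multiplied by $m(x)^2$; since $m'(x)$ and $m(x)$ are both GMM-type functions (finite sums of Gaussians times polynomials), the products $m(x)m'(x)$ and $m'(x)^2$ are again finite sums of terms of the form $(\text{polynomial})\cdot p_{\mu,\sigma}(x)$ after using the identity $p_{\mu,\sigma}'(x) = -\frac{x-\mu}{\sigma^2}p_{\mu,\sigma}(x)$ and the product-of-Gaussians identity $p_{\mu_i,\sigma_i}(x)\,p_{\mu_j,\sigma_j}(x) = c_{ij}\, p_{\mu_{ij},\sigma_{ij}}(x)$ for suitable constants $c_{ij}$ and parameters $(\mu_{ij},\sigma_{ij})$.

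So the key steps, in order, are: (1) expand the square to get three pieces, $\int \frac{m'(x)^2}{m(x)^2}m(x)^2\dx = \int m'(x)^2\dx$, the cross term $-2\int m'(x)\,m(x)\,R_\theta(x)\dx$, and $\int m(x)^2 R_\theta(x)^2\dx$; (2) in each piece, reduce all Gaussian–Gaussian products to single Gaussians via the product identity, so that every integrand becomes $\sum (\text{polynomial in }x)\cdot p_{\tilde\mu,\tilde\sigma}(x)$; (3) integrate term-by-term using the closed-form raw moments $\mu_l(p_{\tilde\mu,\tilde\sigma})$ from Eq.~\ref{eq:momz0}. Since $R_\theta$ and $R_\theta^2$ have bounded degree and the number of Gaussian pairs is $O((k+k')^2)$ — here just $O(k^2)$ — the whole expression is a finite, explicit sum, hence closed-form. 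One should also note that the first piece $\int m'(x)^2\dx$ is independent of $\theta$, and the objective is quadratic in $\theta$ (through $R_\theta$), which is a useful byproduct for the model-selection discussion.

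I would present the result as an explicit formula: writing $m(x)=\sum_i w_i p_{\mu_i,\sigma_i}(x)$, one gets coefficients $a_{ij}$, $b_{ij}$ from the pairwise Gaussian products and then $D_{H,2}[m:q_\theta]$ becomes a sum over $i,j$ of (constant)$\times$(moment of the reduced Gaussian), with the $\theta$-dependence entering polynomially via $R_\theta = \sum_i i\theta_i x^{i-1}$. The main obstacle — more bookkeeping than conceptual difficulty — is step (2): carefully tracking the normalizing constants $c_{ij}$ and the reduced parameters $(\mu_{ij},\sigma_{ij})$ when multiplying two Gaussian densities, and then correctly assembling the polynomial factors coming from $m'(x)$ (each differentiation pulls out a linear factor $-(x-\mu)/\sigma^2$) together with the $R_\theta(x)$ or $R_\theta(x)^2$ factors before invoking the moment formula. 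Everything else is routine expansion and linearity of the integral.
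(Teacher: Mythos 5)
Your proposal follows essentially the same route as the paper's proof: expand the squared difference into the three pieces $\int m'(x)^2\dx$, the cross term $-2\sum_i i\theta_i\int x^{i-1}m'(x)m(x)\dx$, and $\sum_{i,j} ij\theta_i\theta_j\int x^{i+j-2}m(x)^2\dx$, use $m'(x)=-\sum_a w_a\frac{x-\mu_a}{\sigma_a^2}p_{\mu_a,\sigma_a}(x)$ together with the Gaussian product identity $p_a(x)p_b(x)=\kappa_{a,b}\,p_{\mu_{ab},\sigma_{ab}}(x)$, and finish with the closed-form raw moments of \S\ref{sec:rawmom}. Your write-up is correct (and in fact writes the $\theta$-independent first term as $\int m'(x)^2\dx$, where the paper's displayed formula has a typographical slip), so no changes are needed.
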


\begin{proof}
We have $D_{H,2}[m:q]=\int m(x)^2 \left( \frac{m'(x)}{m(x)} - \sum_{i=1}^D i\theta_i x^{i-1} \right)^2 \dx$ with 
$$
m'(x)= -\sum_{i=1}^k w_i\frac{x-\mu_i}{\sigma_i^2}p(x_i;\mu_i,\sigma_i),
$$ 
denoting the derivative of the Gaussian mixture density $m(x)$. 
It follows that:

$$
D_{H,2}[m:q]=\int m'(x)\dx  -2\sum_{i=1}^D i\theta_i \int x^{i-1}m'(x)m(x) \dx     + \sum_{i,j=1}^D  ij\theta_i\theta_j \int x^{i+j-2} m(x)^2  \dx,
$$
where 
$$
\int x^i m'(x)m(x)\dx= -\sum w_a w_b\int  \frac{x-\mu_a}{\sigma_a^2} x^i p(x;\mu_a,\sigma_a)p(x;\mu_b,\sigma_b)\dx.
$$

Therefore we have
$$
D_{H,2}[m:q]=\int m'(x)\dx  -2\sum_{i=1}^D i\theta_i \int x^{i-1}m'(x)m(x) \dx 
   + \sum_{i,j=1}^D  ij\theta_i\theta_j \int x^{i+j-2} m(x)^2  \dx
	$$
with $m'(x)=-\sum w_a  \frac{x-\mu_a}{\sigma_a^2} p(x;\mu_a,\sigma_a)$.

Since
$p_a(x)p_b(x) = \kappa_{a,b} p(x;\mu_{ab},\sigma_{ab})$,
with 
\begin{eqnarray*}
\mu_{ab} &=&\sigma_a^2\sigma_b^2(\sigma_b^2\mu_a+\sigma_a^2\mu_b),\\
\sigma_{ab}&=&\frac{\sigma_a\sigma_b}{\sqrt{\sigma_a^2+\sigma_b^2}},\\
\kappa_{a,b}&=&\exp(F(\mu_{ab},\sigma_{ab})-F(\mu_a,\sigma_a)-F(\mu_b,\sigma_b)),
\end{eqnarray*}
and
$$
F(\mu,\sigma)=\frac{\mu^2}{2\sigma^2}+\frac{1}{2}\log(2\pi\sigma^2),
$$ 
the log-normalizer of the Gaussian exponential family~\cite{IG-2016}. 

Therefore we get
$$
\int p_a(x)p_b(x) x^l \dx=\kappa_{a,b}m_l(\mu_{ab},\sigma_{ab}).
$$

Thus the Hyv\"arinen  divergence $D_{H,2}$ of order $2$ between a GMM and a PED is available in closed-form.
\end{proof}

For example, when $k=1$ (i.e., mixture $m$ is a single Gaussian $p_{\mu_1,\sigma_1}$) and $p_\theta$ is a normal distribution (i.e., PED with $D=2$, $q_\theta=p_{\mu_2,\sigma_2}$), we obtain the following formula for the order-$2$ Hyv\"arinen divergence:
$$
D_{H,2}[p_{\mu_1,\sigma_1}:p_{\mu_2,\sigma_2}] = \frac{(\sigma_1^2-\sigma_2^2)^2+2(\mu_2-\mu_1)^2\sigma_1^2)}{8\sqrt{\pi} \sigma_1^3\sigma_2^4}.
$$

\section{Experiments: Jeffreys divergence between mixtures}\label{sec:jdexp}

In this section, we evaluate our heuristic to approximate the Jeffreys divergence between two mixtures $m$ and $m'$:
$$
\tilde{D}_J[m,m']:=(\bar\theta_\SME(m')-\bar\theta_\SME(m))^\top (\bar\eta_\MLE(m') -\bar\eta_\MLE(m)).
$$
Recall that stochastically estimating the JD between $k$-GMMs  with Monte Carlo sampling using $s$ samples (i.e., $\hat{D}_{J,s}[m:m']$) requires $\tilde O(ks)$ and is not deterministic. That is, different MC runs  yield fluctuating values which may be fairly different. 
In comparison, approximating $D_J$ by $\tilde{D}_J$ using $\Delta_J$ by converting mixtures to $D$-order PEDs require to $O(kD^2)$ time to compute the raw moments and $O(D^2)$ time to invert a Hankel moment matrix. 
Thus by choosing $D=2k$, we get a deterministic $O(k^3)$ algorithm which is faster than the MC sampling when $k^2\ll s$.
Since there are at most $k$ modes for a $k$-GMM, we choose order $D=2k$ for the PEDs.

To get quantitative results on the performance of our heuristic $\tilde{D}_J$, we build random GMMs with $k$ components as follows: 
$m(x)=\sum_{i=1}^k w_i p_{\mu_i,\sigma_i}(x)$, where $w_i\sim U_i$, $\mu_i\sim -10+10U_1'$ and $\sigma_i\sim 1+U_2'$, where the $U_i$'s and  $U_1'$ and $U_2'$ are independent uniform distributions on $[0,1)$. The mixture weights are then normalized to sum up to one.
For each value of $k$, we make $1000$ trial experiments to gather statistics, and use $s=10^5$ for evaluating the Jeffreys divergence $\hat{D}_J$ by Monte Carlo samplings. 
We denote by $\mathrm{error}:=\frac{|\hat{D}_J-\Delta_J|}{\hat{D}_J}$ the error of an experiment.
Table~\ref{tab:randGMM} presents the results of the experiments for $D=2k$: 
The table displays the average error, the maximum error (minimum error is very close to zero, of order $10^{-5}$), and the speed-up obtained 
by our heuristic $\Delta_J$. Those experiments were carried on a Dell Inspiron 7472 laptop (equipped with an Intel(R) Core(TM) i5-8250U CPU  at 1.60 GHz).

\begin{table}
\caption{Comparison of $\tilde{\Delta}_J(m_1,m_2)$ with $\hat{D}_J(m_1,m_2)$ for random GMMs.\label{tab:randGMM}}
\begin{center}
\begin{tabular}{ll|lll}\hline
$k$ & $D$ & average error & maximum error & speed-up\\ \hline
2 & 4 & 0.1180799978221536 & 0.9491425404132259 & 2008.2323536011806\\
3 & 6 & 0.12533811294546526 & 1.9420608151988419 & 1010.4917042114389\\
4 & 8 & 0.10198448868508087 & 5.290871019594698 & 474.5135294829539\\
5 & 10 &  0.06336388579897352 & 3.8096955246161848 & 246.38780782640987\\
6 & 12 &  0.07145257192133717 & 1.0125283726458822 & 141.39097909641052\\
7 & 14 & 0.10538875853178625 & 0.8661463142793943 & 88.62985036546912\\
8 & 16 & 0.4150905507007969 & 0.4150905507007969 & 58.72277575395611\\ \hline
\end{tabular}
\end{center}

\end{table}

Notice that the quality of the approximations of $\tilde{D}_J$ depend on the number of modes of the GMMs.
However, calculating the number of modes is  difficult~\cite{ModeGMM-2000,amendola2020maximum} even for simple cases~\cite{aprausheva2006bounds,aprausheva2013exact}.

Figure~\ref{fig:exp} displays several experiments of converting mixtures to pairs of PEDs to get approximations of the Jeffreys divergence.

\def\ttt{0.35}
\begin{figure}%
\centering
\begin{tabular}{lcc}
$D=2$ &\includegraphics[width=\ttt\columnwidth]{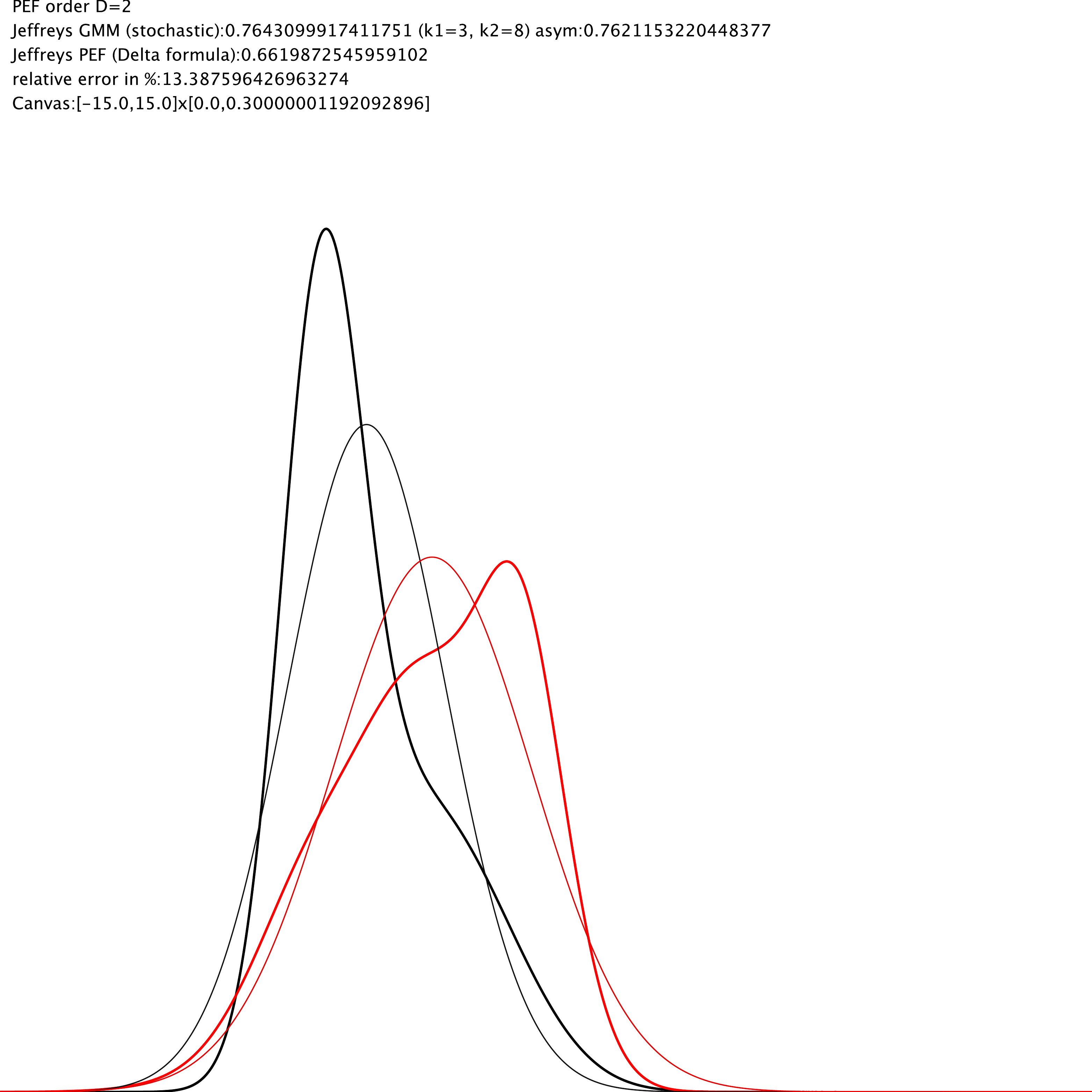}&
\includegraphics[width=\ttt\columnwidth]{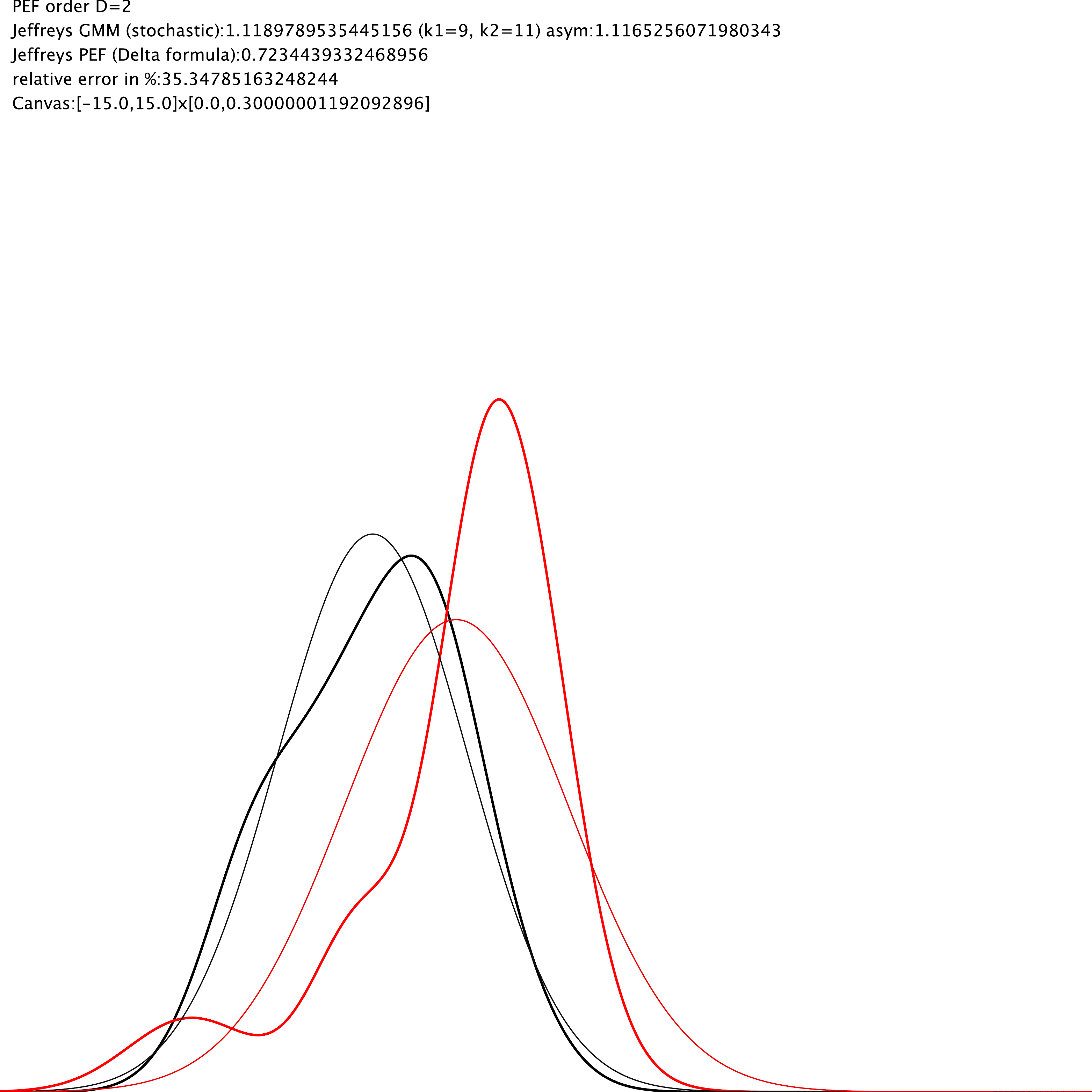}\\
$D=8$ &\includegraphics[width=\ttt\columnwidth]{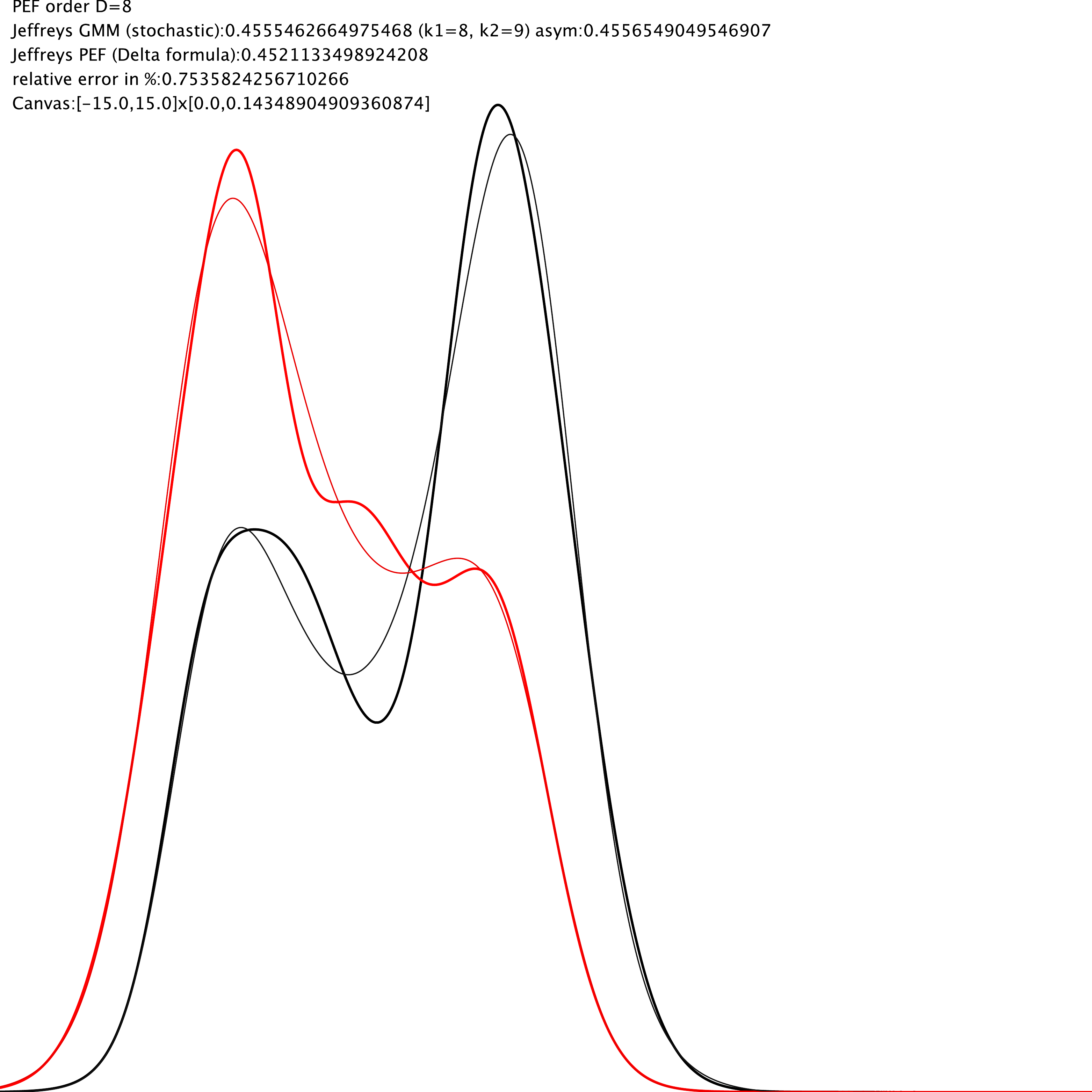}&
\includegraphics[width=\ttt\columnwidth]{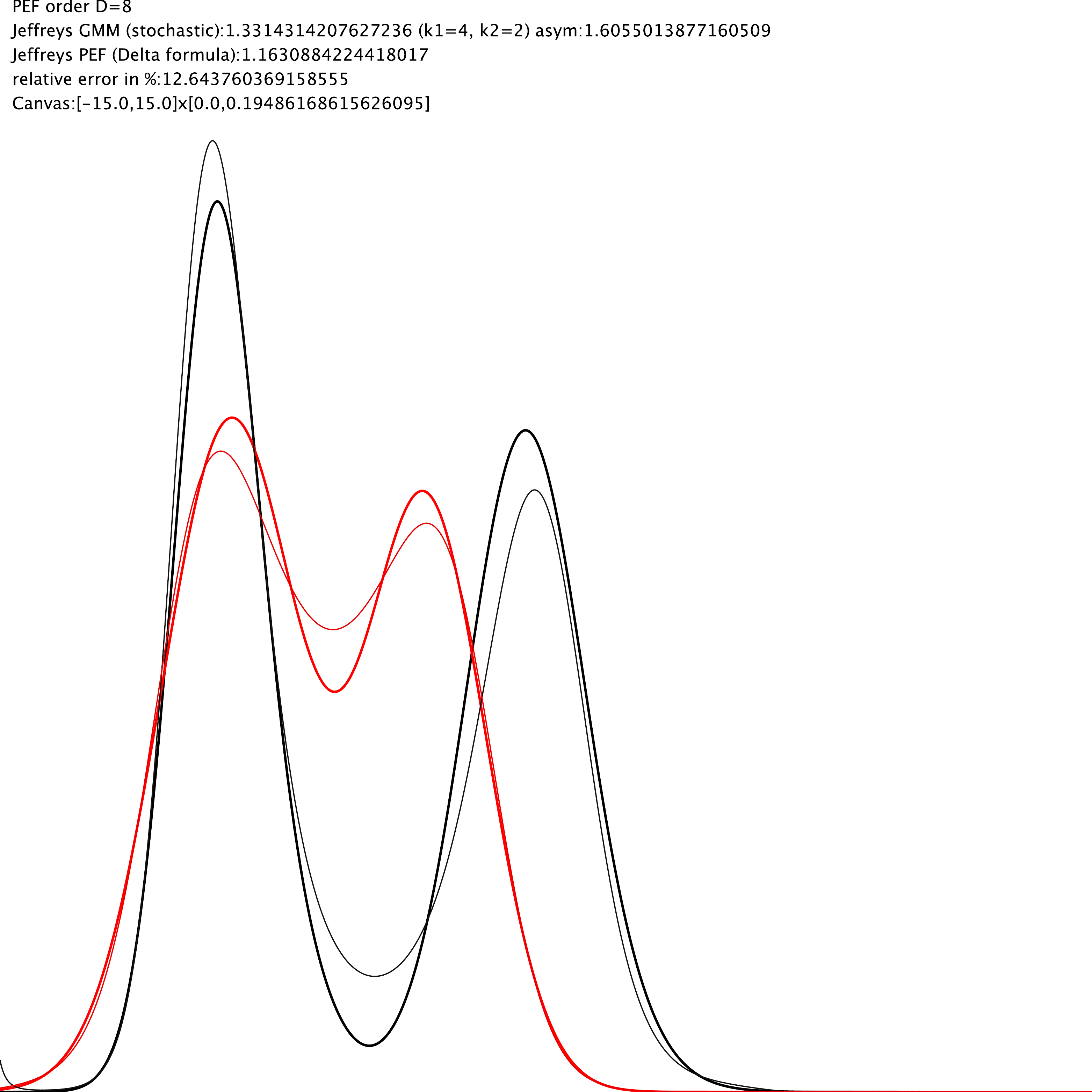}\\
$D=16$ &\includegraphics[width=\ttt\columnwidth]{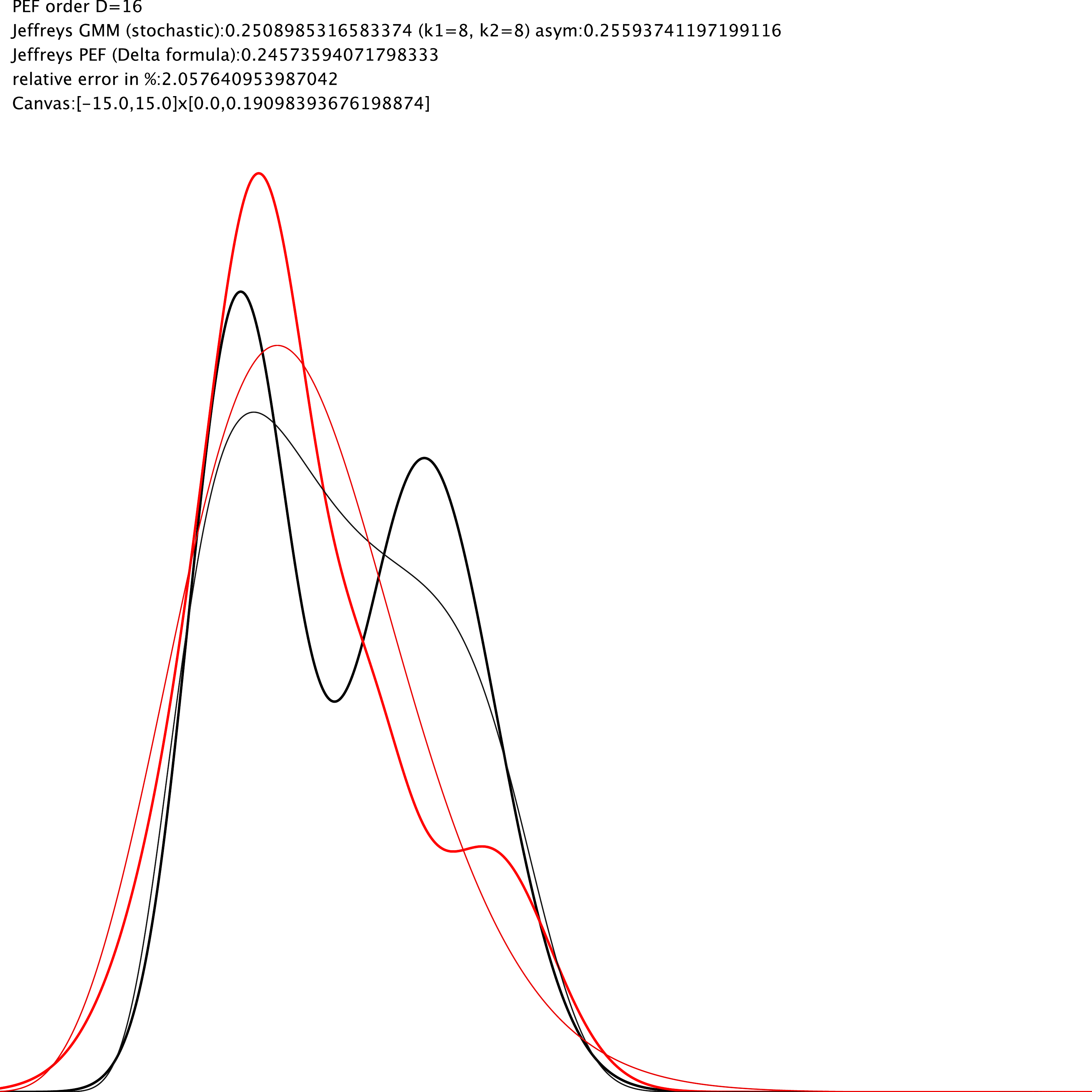}&
\includegraphics[width=\ttt\columnwidth]{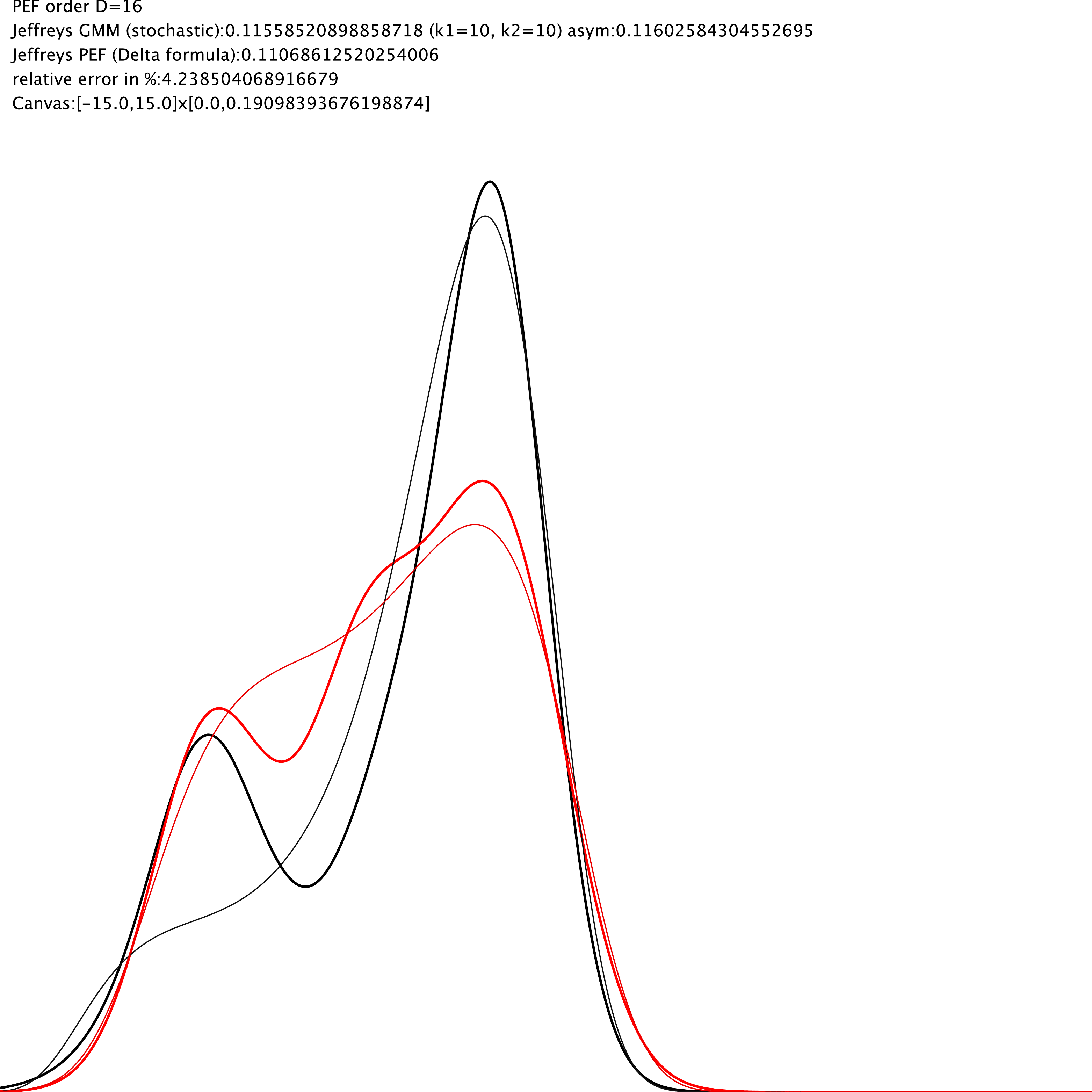}\\
\end{tabular}

\caption{Experiments of approximating the Jeffreys divergence between two mixtures by considering pairs of PEDs. Notice that only the PEDs estimated using the score matching in the natural parameter space are displayed.}%
\label{fig:exp}%
\end{figure}

Figure~\ref{fig:PEFD} illustrates the use of the order-$2$ Hyv\"arinen divergence $D_{H,2}$ to perform model selection for choosing the order of a PED.

\def\ttt{0.35\textwidth}
\begin{figure}
\centering
\begin{tabular}{cc}
\includegraphics[width=\ttt]{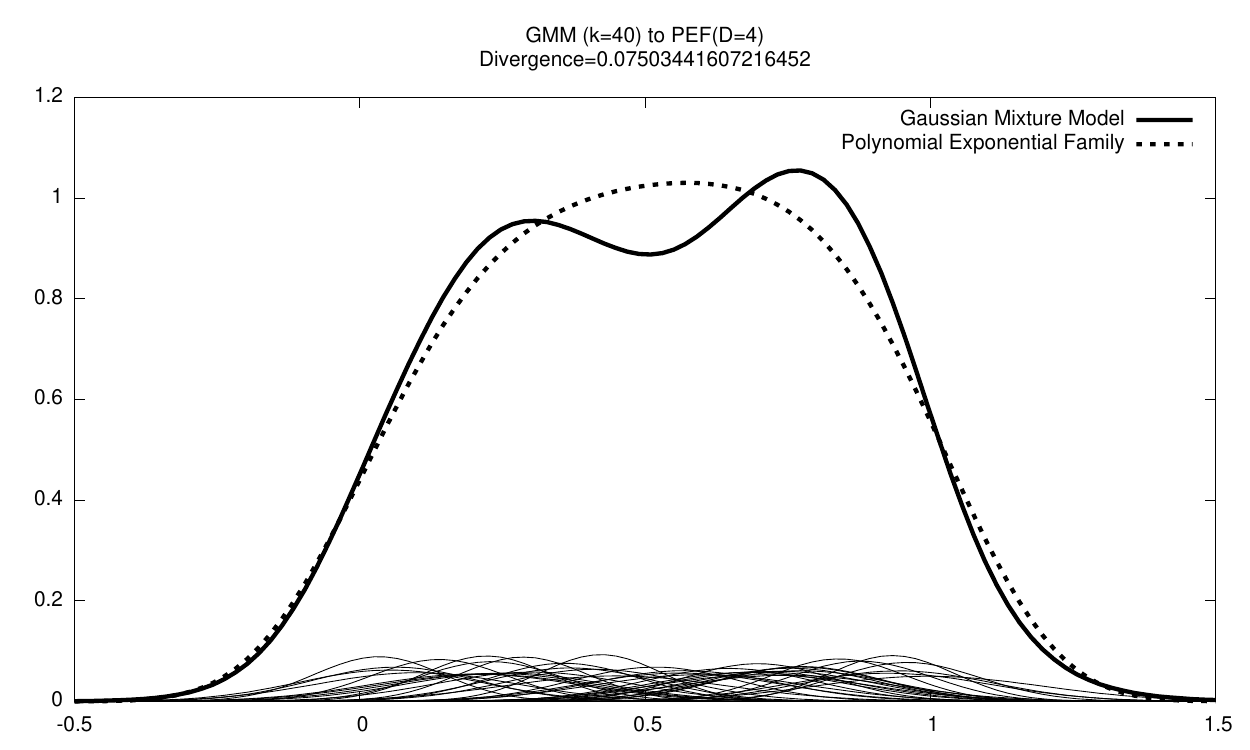} &
\includegraphics[width=\ttt]{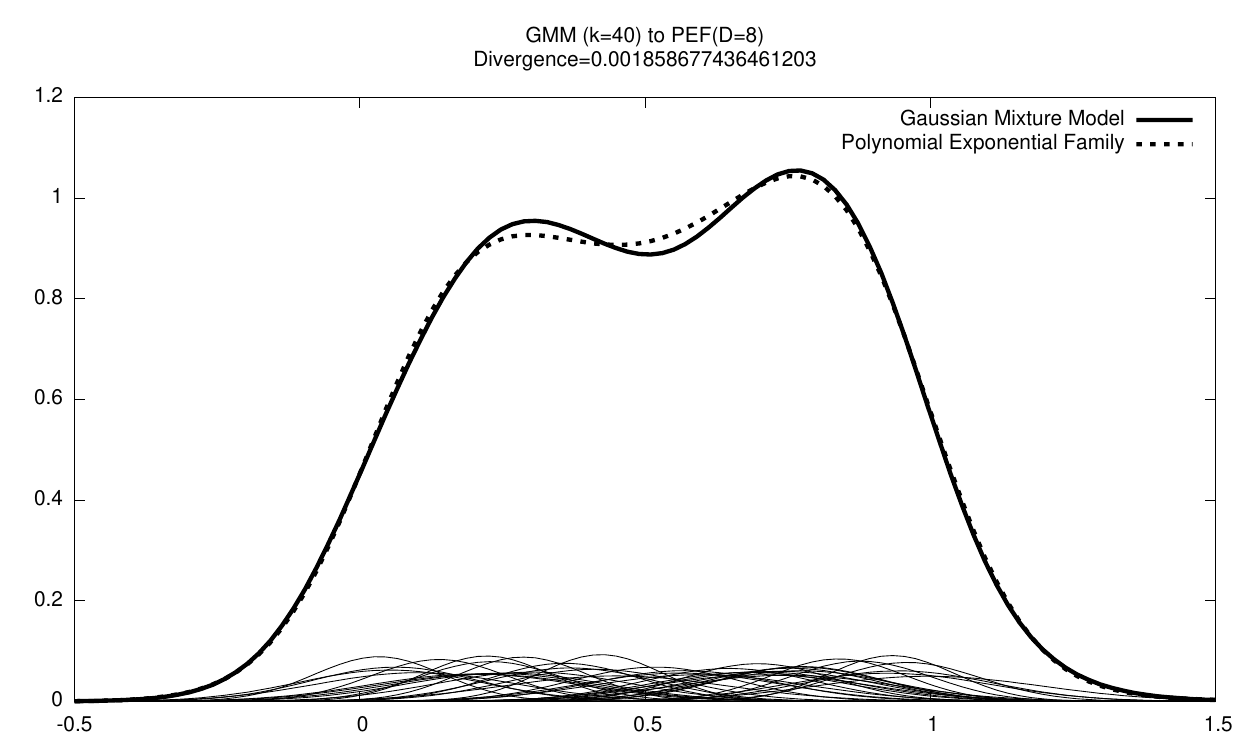} \\
\fbox{\includegraphics[width=\ttt]{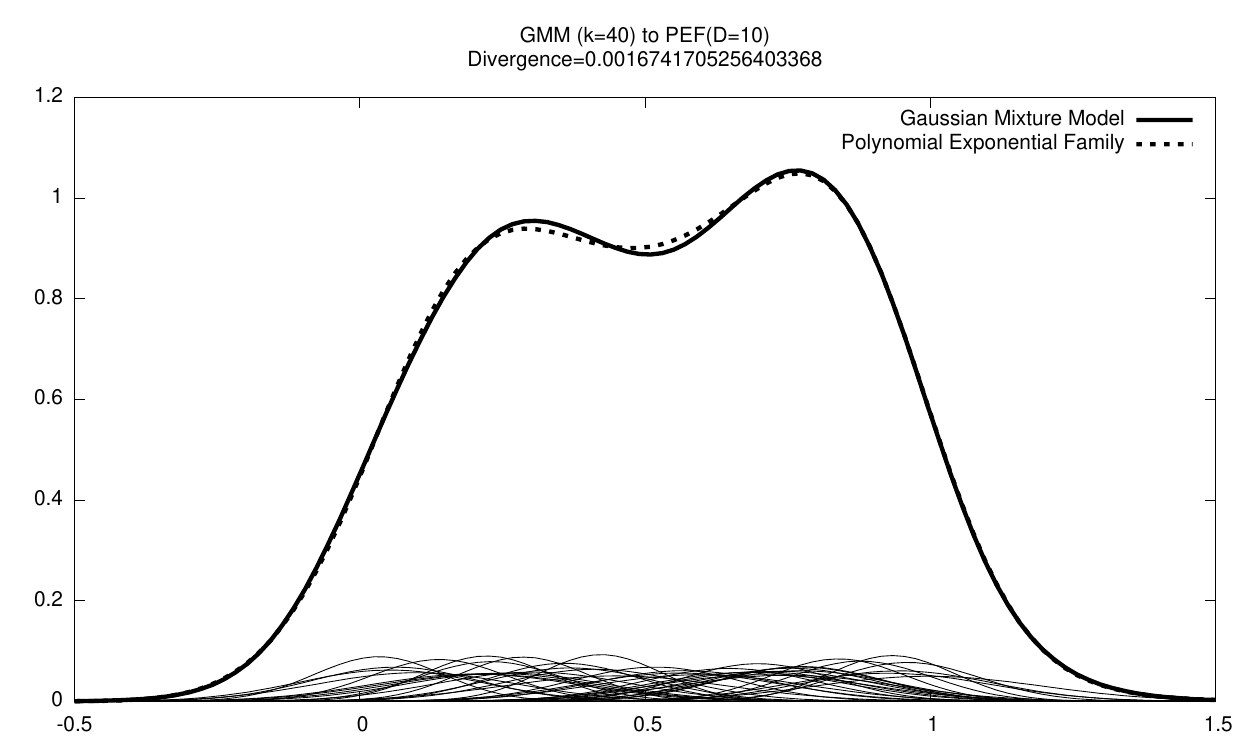}} &
\includegraphics[width=\ttt]{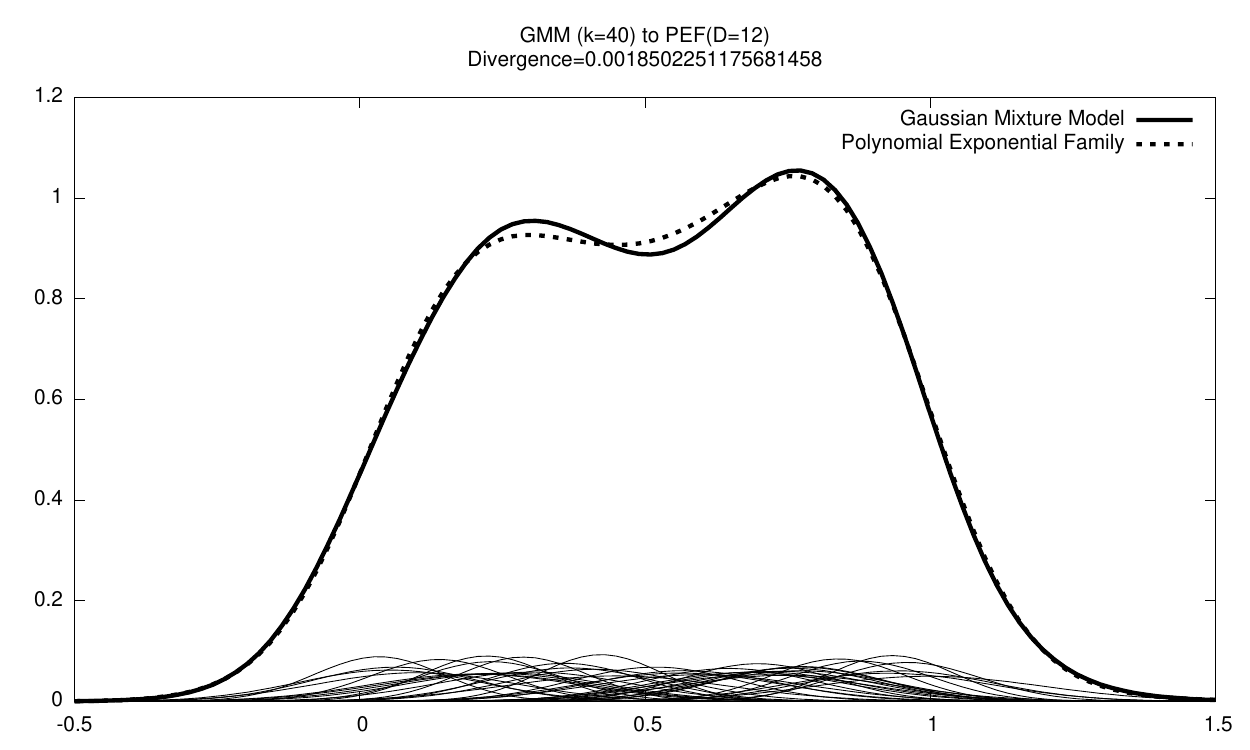} \\
\includegraphics[width=\ttt]{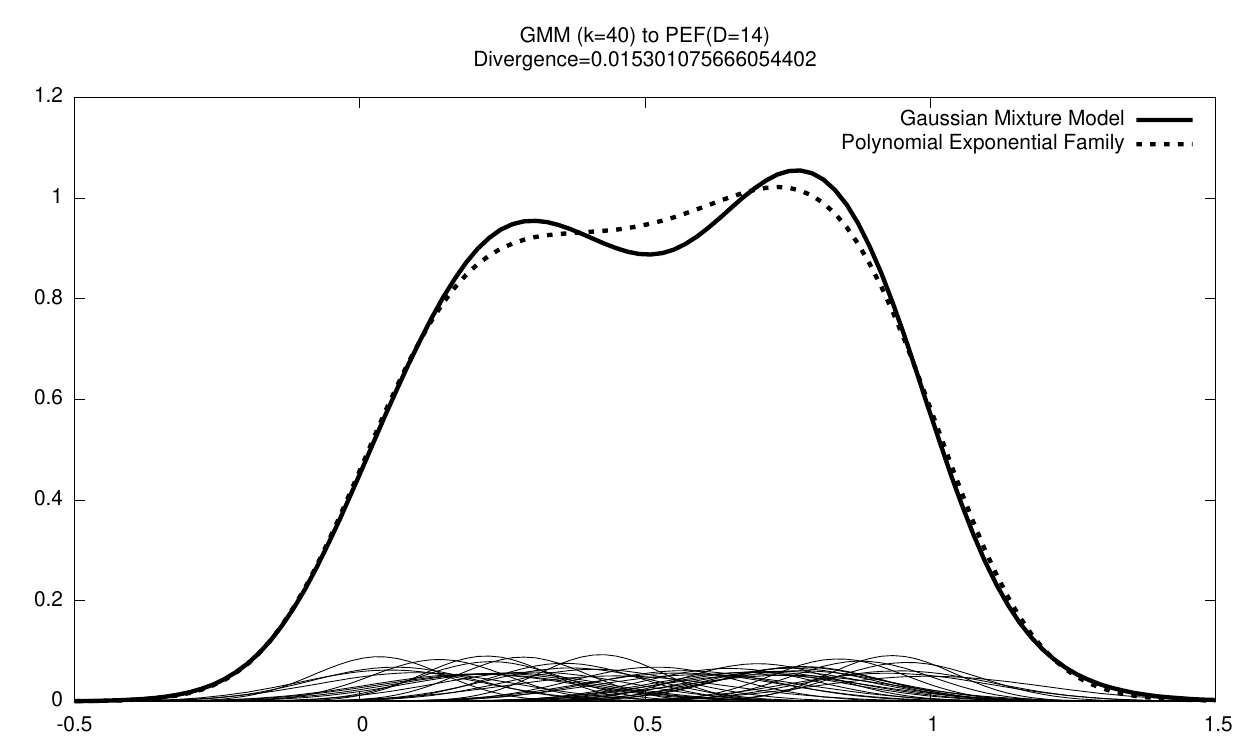} &
\includegraphics[width=\ttt]{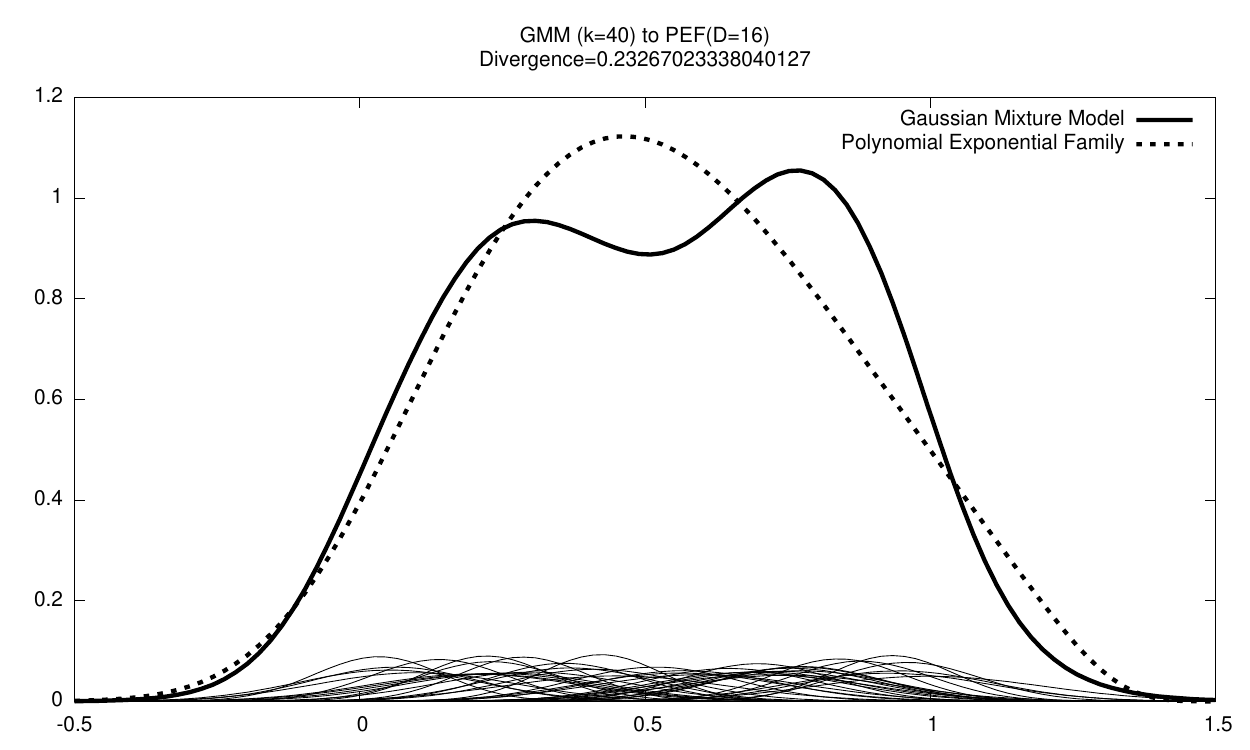} 
\end{tabular}

\caption{Selecting the PED order $D$ my evaluating the best divergence order-$2$ Hyv\"arinen divergence (for $D\in\{4,8,10, 12, 14, 16\}$) values. Here, the order $D=10$ (boxed) yields the lowest order-$2$ Hyv\"arinen divergence: The GMM is close to the PED.\label{fig:PEFD}}
\end{figure}

Finally, Figure~\ref{fig:explim} displays some limitations of the GMM to PED conversion when the GMMs have many modes.
In that case, running the conversion $\bareta_\MLE$ to get $\tilde{\theta}_T(\bareta_\MLE)$ and estimate the Jeffreys divergence by
$$
\tilde{\Delta}_J^\MLE[m_1,m_2]=(\tilde\theta_2^\MLE-\tilde\theta_1^\MLE)^\top 	(\bar\eta_2^\MLE-\bar\eta_1^\MLE),
$$ 
improves a lot the results but requires more computation.

\def\ttt{0.45}
\begin{figure}%
\begin{tabular}{cc}
\includegraphics[width=\ttt\columnwidth]{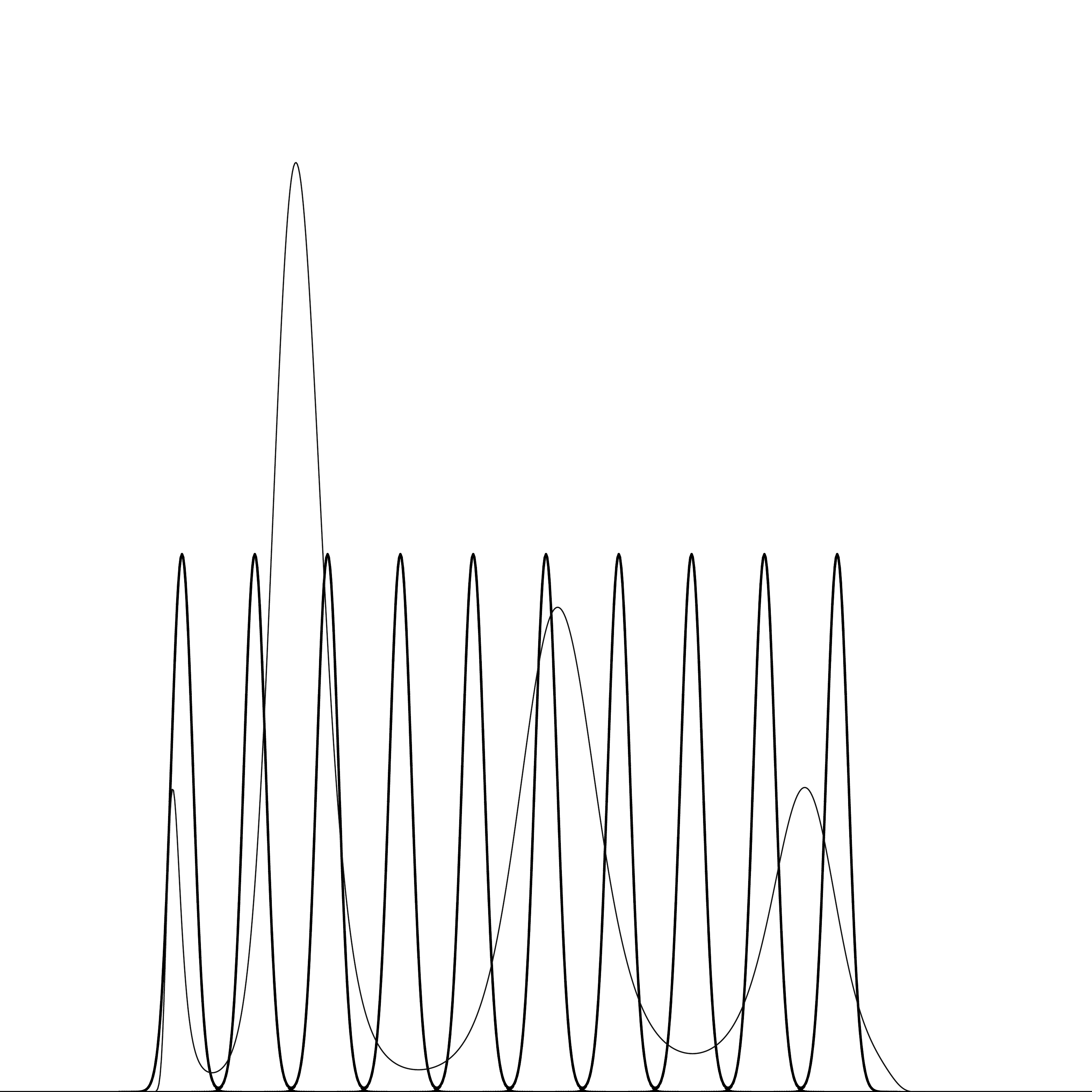}
  &\includegraphics[width=\ttt\columnwidth]{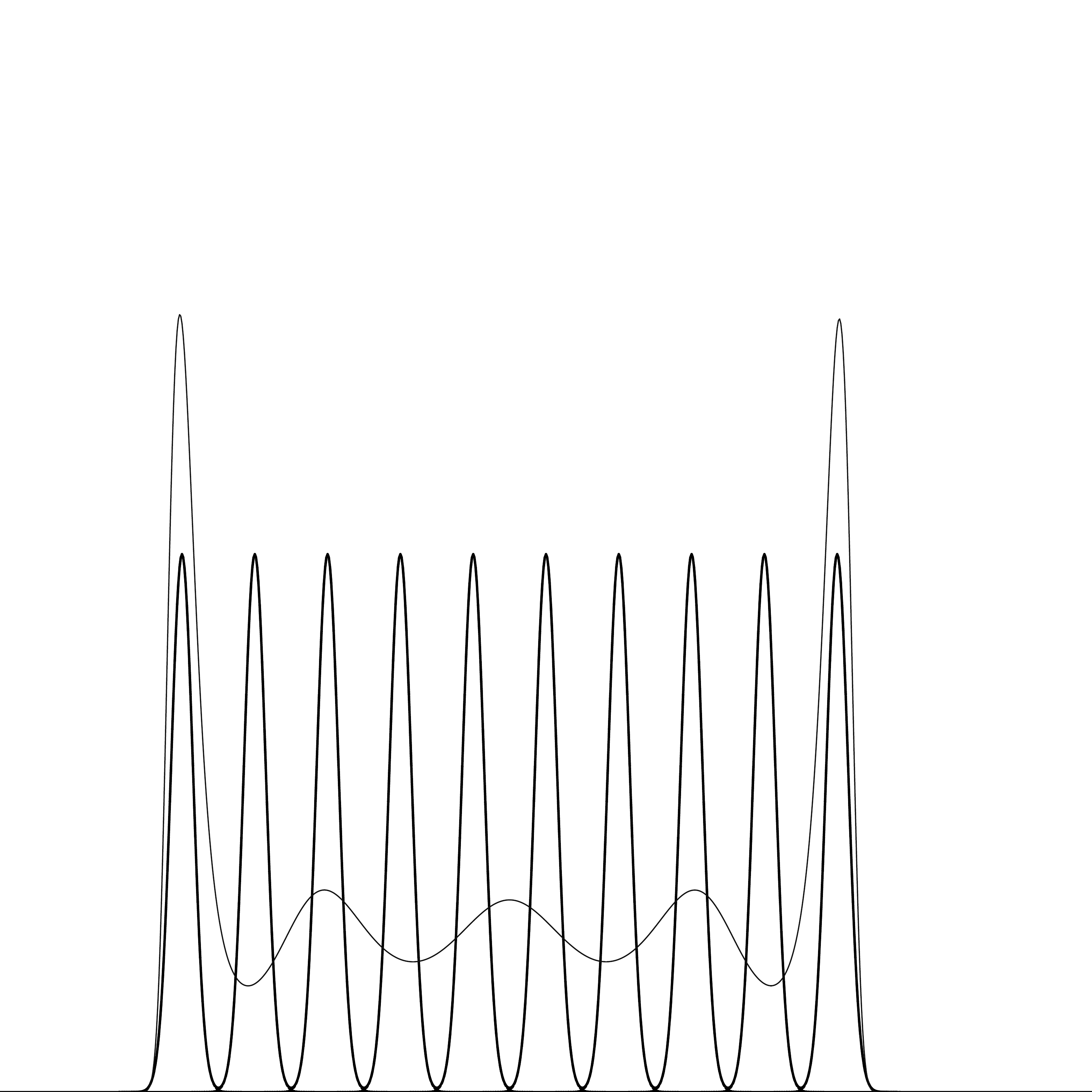}\\
	$D=32$ & $D=30$
\end{tabular}
\caption{Some limitation examples of the conversion of GMMs (black) to PEDs (grey) using the integral-based Score Matching estimator: Case of GMMs with many modes.}%
\label{fig:explim}%
\end{figure}

\def\ttt{0.2\textwidth}
\begin{figure}
\centering
\begin{tabular}{ccc}
\includegraphics[width=\ttt]{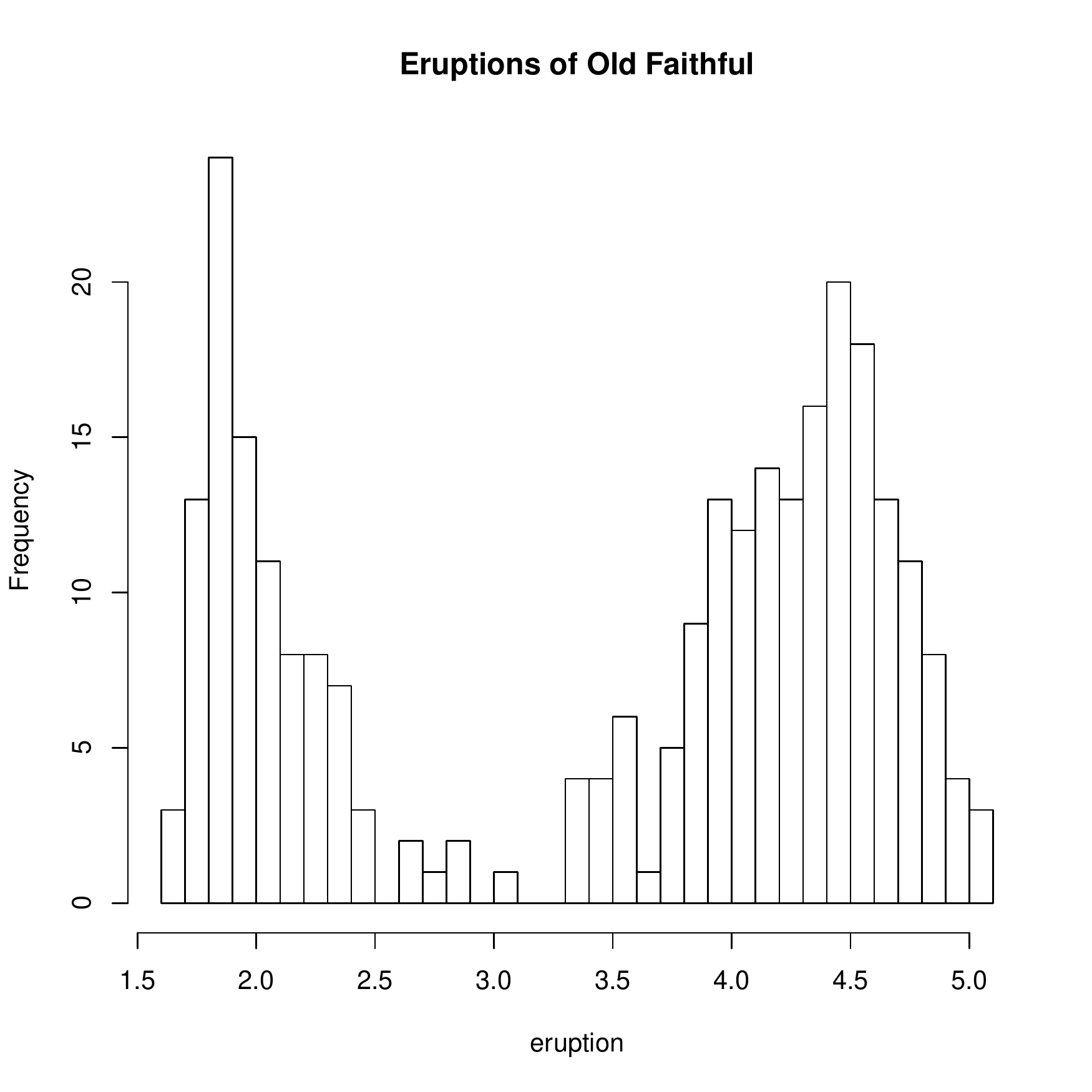} &
\includegraphics[width=\ttt]{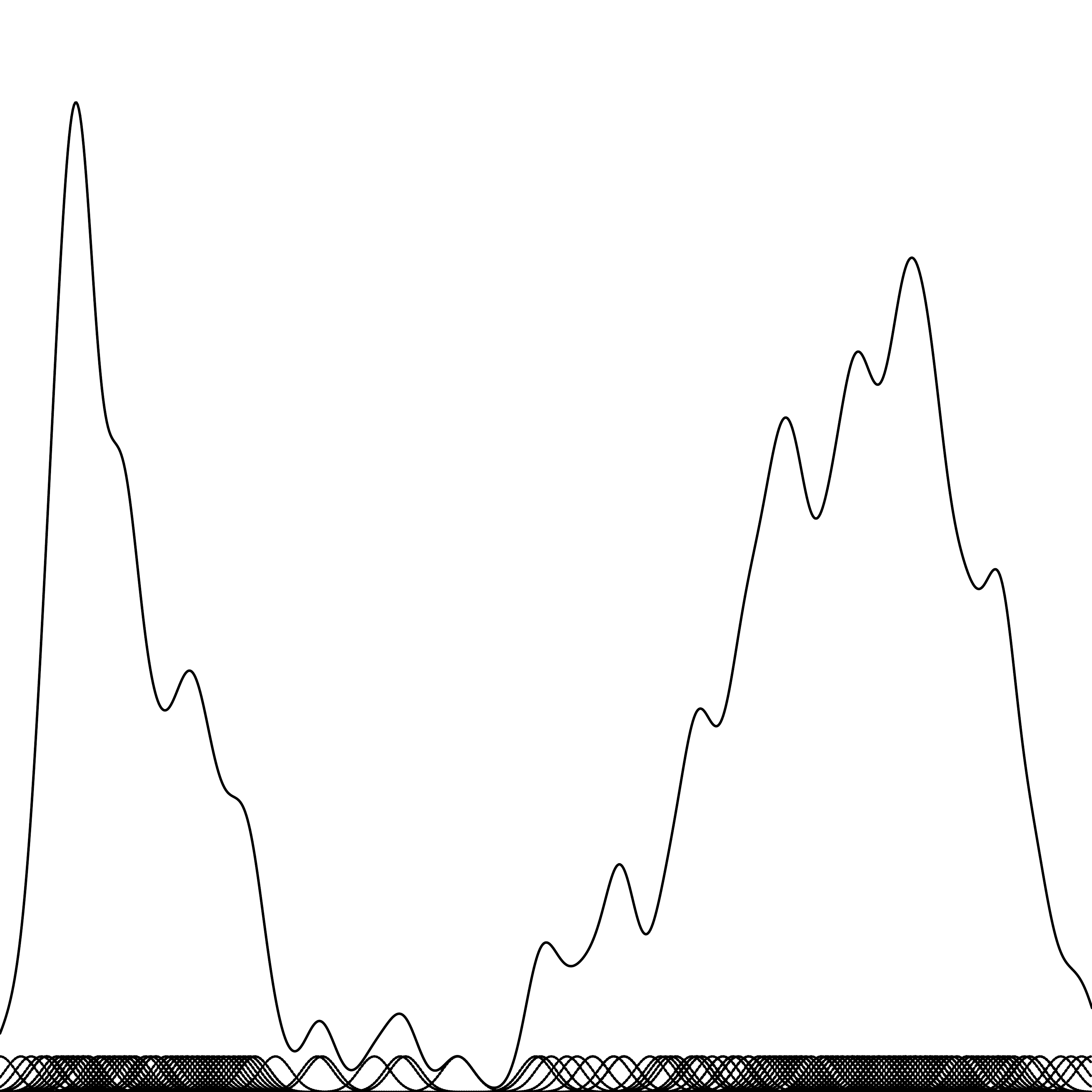} &
\includegraphics[width=\ttt]{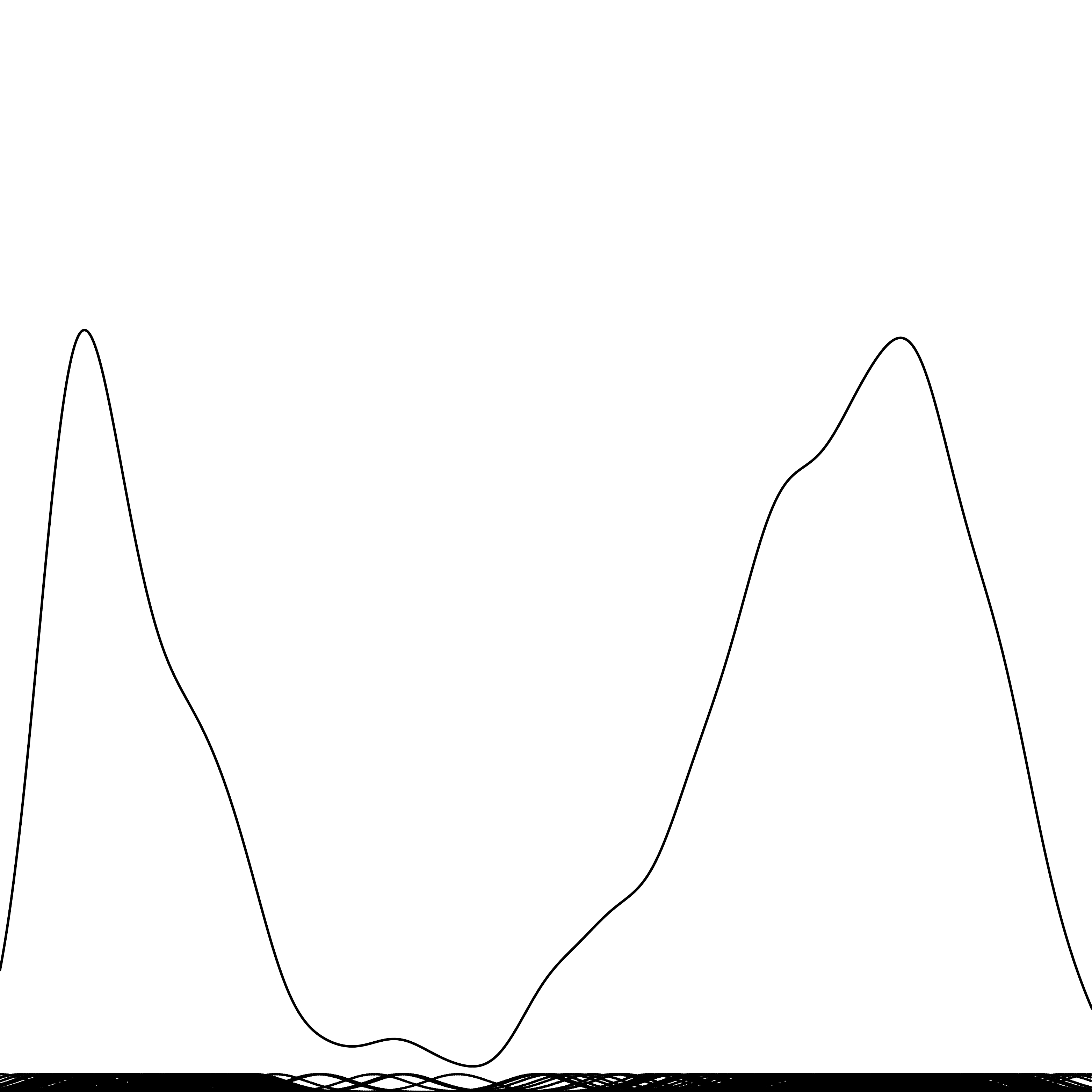} \\
Histogram (\#bins=25) & KDE with $\sigma=0.05$ & KDE with $\sigma=0.1$
\end{tabular}

\caption{Modeling the  Old Faithful geyser by a KDE (GMM with $k=272$ components, uniform weights $w_i=\frac{1}{272}$):
Histogram (\#bins=25), and KDE with $\sigma=0.05$ (middle) 
and KDE with $\sigma=0.1$ with less spurious bumps (right) \label{fig:gmmkde}}
\end{figure}

Next, we consider learning a PED by converting a GMM derived itself from a Kernel Density Estimator (KDE)~\cite{simplifykde-2013}.
We use the duration of the eruption for the Old Faithful geyser in Yellowstone National Park (Wyoming, USA):
The dataset consists of $272$ observations (\url{https://www.stat.cmu.edu/~larry/all-of-statistics/=data/faithful.dat}) 
and is included in the R language package 'stats'. The following R snippet (\url{https://www.r-project.org/}) converts the data into an histogram:
\begin{lstlisting}[backgroundcolor = \color{lightgray}]
require(stats);  
eruption <- faithful$eruptions
pdf("FaithfulHistogram.pdf")
hist(eruption,main="Eruptions of Old Faithful",breaks=25) 
dev.off()
\end{lstlisting}

Figure~\ref{fig:gmmkde} displays the GMMs obtained from the KDEs of the Old Faithful geyser dataset when choosing for each component 
$\sigma=0.05$ (left) and $\sigma=0.1$. Observe that the data is bimodal once the spurious modes (i.e., small bumps) are removed, as studied in~\cite{barron1991approximation}.
Barron and Sheu~\cite{barron1991approximation,correctionquartic-1991} modeled that dataset using a bimodal PED of order $D=4$, i.e., a quartic distribution. 
We model it with a PED of order $D=10$ using the integral-based score matching method.
Figure~\ref{fig:gmmpedgeyser} displays the unnormalized bimodal density $q_1$ (i.e., $\tilde{p}_1$) that we obtained using the integral-based score matching method (with $\calX=(0,1)$).

\def\ttt{0.5\textwidth}
\begin{figure}
\centering
\includegraphics[width=\ttt]{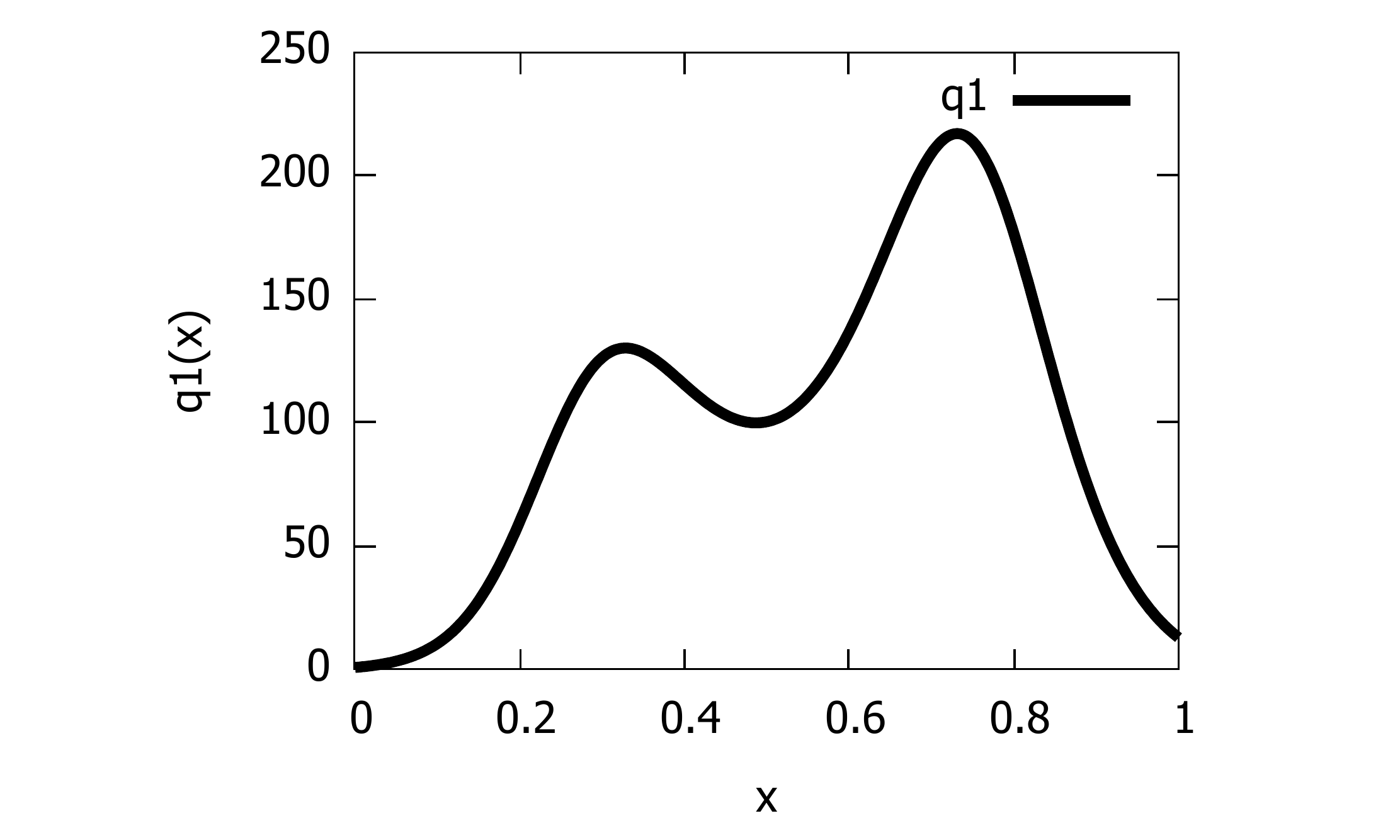}  

\caption{Modeling the  Old Faithful geyser by an exponential-polynomial distribution of order $D=10$. \label{fig:gmmpedgeyser}}
\end{figure}

The code in {\sc Maxima} to plot Figure~\ref{fig:gmmpedgeyser} is:
\begin{lstlisting}[backgroundcolor = \color{lightgray}]
q1(x):=exp((28.134510211471934*x**1)+(-71.464164508041*x**2)+(423.6511203845342*x**3)+(-1630.7270876169205*x**4)+(487.7447113037109*x**5)+(10340.70169099172*x**6)+(-25987.173209599085*x**7)+(27833.70635700226*x**8)+(-14324.646585464478*x**9)+(2902.643246746063*x**10));
plot2d([q1(x)],[x,-0,1], [xlabel,"x"], [ylabel,"q1(x)"], [legend, "q1"],[style, [lines,5,5]]);
\end{lstlisting}

\section{Conclusion and perspectives}\label{sec:concl}

Many applications require to compute the Jeffreys divergence (a symmetrized Kullback-Leibler divergence) between Gaussian mixture models.
See~\cite{xiao2010optimal,bilik2012minimum,ijcai16,vitoratou2017thermodynamic} for a few use cases.
Since the Jeffreys divergence between GMMs is provably not available in closed-form~\cite{KLnotanalytic-2004}, one often ends up implementing a costly Monte Carlo stochastic approximation of the JD.
In this paper, we first noticed the simple expression of the JD between densities $p_\theta$ and $p_{\theta'}$ of an exponential family   using their dual natural/moment parameterizations~\cite{BN-2014} $p_\theta=p^\eta$ and $p_{\theta'}=p^{\eta'}$:
$$
D_J[p_\theta,p_{\theta'}]=(\theta'-\theta)^\top (\eta'-\eta), 
$$
where $\eta=\nabla F(\theta)$ and $\eta'=\nabla F(\theta')$ for  the cumulant function $F(\theta)$ of the EF.
We then proposed a simple fast heuristic to approximate the JD between GMMs: 
First, convert a mixture $m$  to a pair $(p_{\bar\theta^\SME},p^{\bar\eta^\MLE})$ of dually parameterized polynomial exponential densities using extensions of the Maximum Likelihood and Score Matching Estimators (Theorem~\ref{thm:MLE} and Theorem~\ref{thm:SMEmixPED}), and then approximate the JD deterministically by
$$
D_J[m_1,m_2]\simeq \tilde{D}_J[m_1,m_2]= (\tilde\theta_2^\MLE-\tilde\theta_1^\MLE)^\top 	(\bar\eta_2^\MLE-\bar\eta_1^\MLE).
$$ 
The order of the polynomial exponential family may be prescribed or selected using the order-$2$ Hyv\"arinen divergence which evaluates in closed form  the dissimilarity between a  GMM and a PED density (Theorem~\ref{thm:hdpef}). 
We demonstrated experimentally that the Jeffreys divergence  between GMMs can be reasonably well approximated by $\tilde{D}_J$ for mixtures with small number of modes, with an overall speed-up of several order of magnitudes compared to the vanilla Monte Carlo sampling method.
We also propose another deterministic heuristic to estimate $D_J$ as 
$$
\tilde{D}_J^\MLE[m_1:m_2]=(\tilde\theta_2^\MLE-\tilde\theta_1^\MLE)^\top 	(\bar\eta_2^\MLE-\bar\eta_1^\MLE),
$$
where $\tilde\theta^\MLE\approx \nabla F(\bar\eta^\MLE)$ is numerically calculated using an iterative conversion procedure based on maximum entropy~\cite{MaxEnt-1992} (Section~\ref{sec:eta2theta}).
Our technique extends to other univariate mixtures of exponential families~\cite{garcia2010simplification} (e.g., mixtures of Rayleigh distributions, mixtures of Gamma distributions or mixtures of Beta distributions).
One limitation of our method is that the PED modeling of a GMM may not guarantee to obtain the same number of modes as the GMM even when we increase the order $D$ of the PEDs.
This case is illustrated in Figure~\ref{fig:mode} (right). 

\def\ttt{0.3\textwidth}
\begin{figure}%
\begin{tabular}{cc}
\includegraphics[width=\ttt]{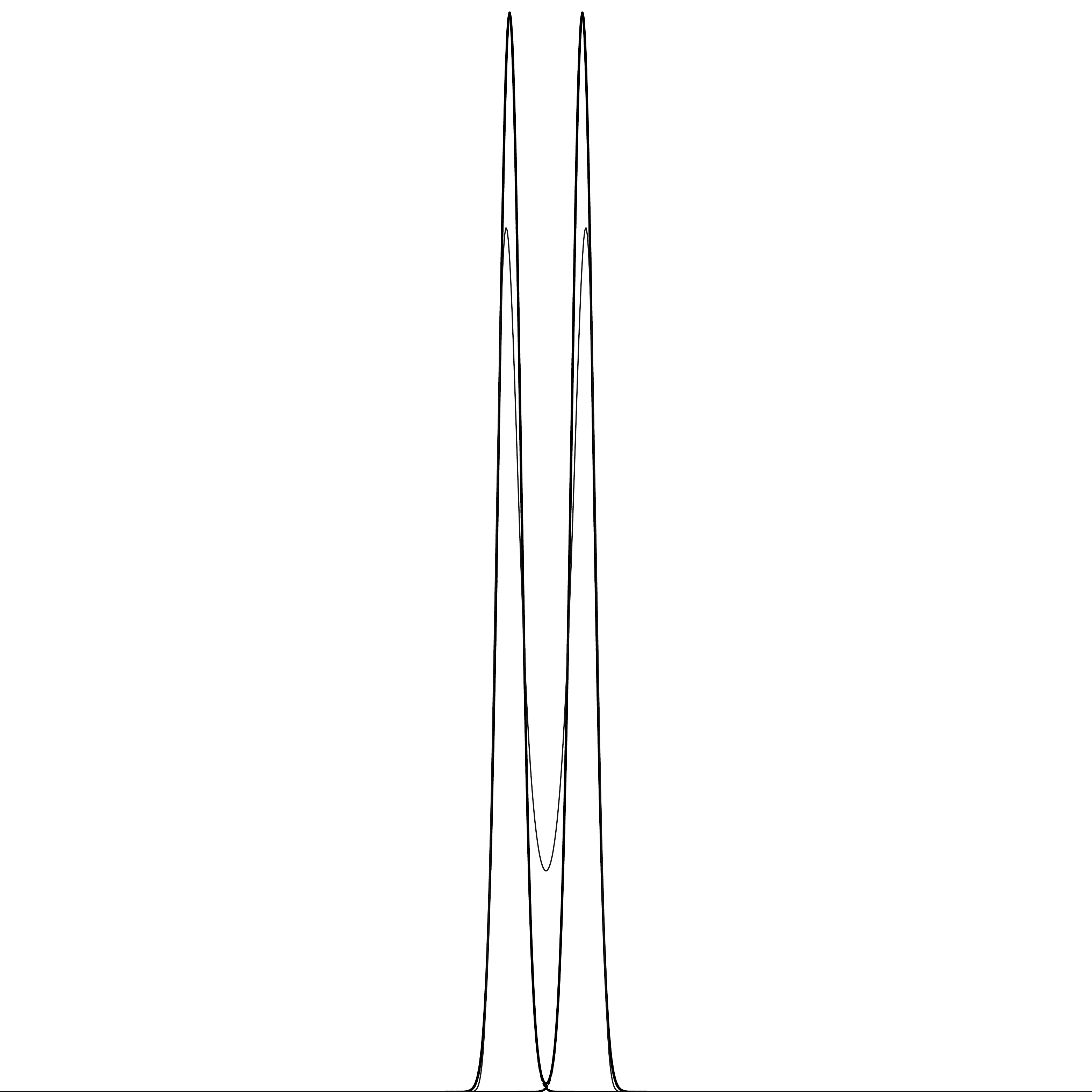}
  &\includegraphics[width=\ttt]{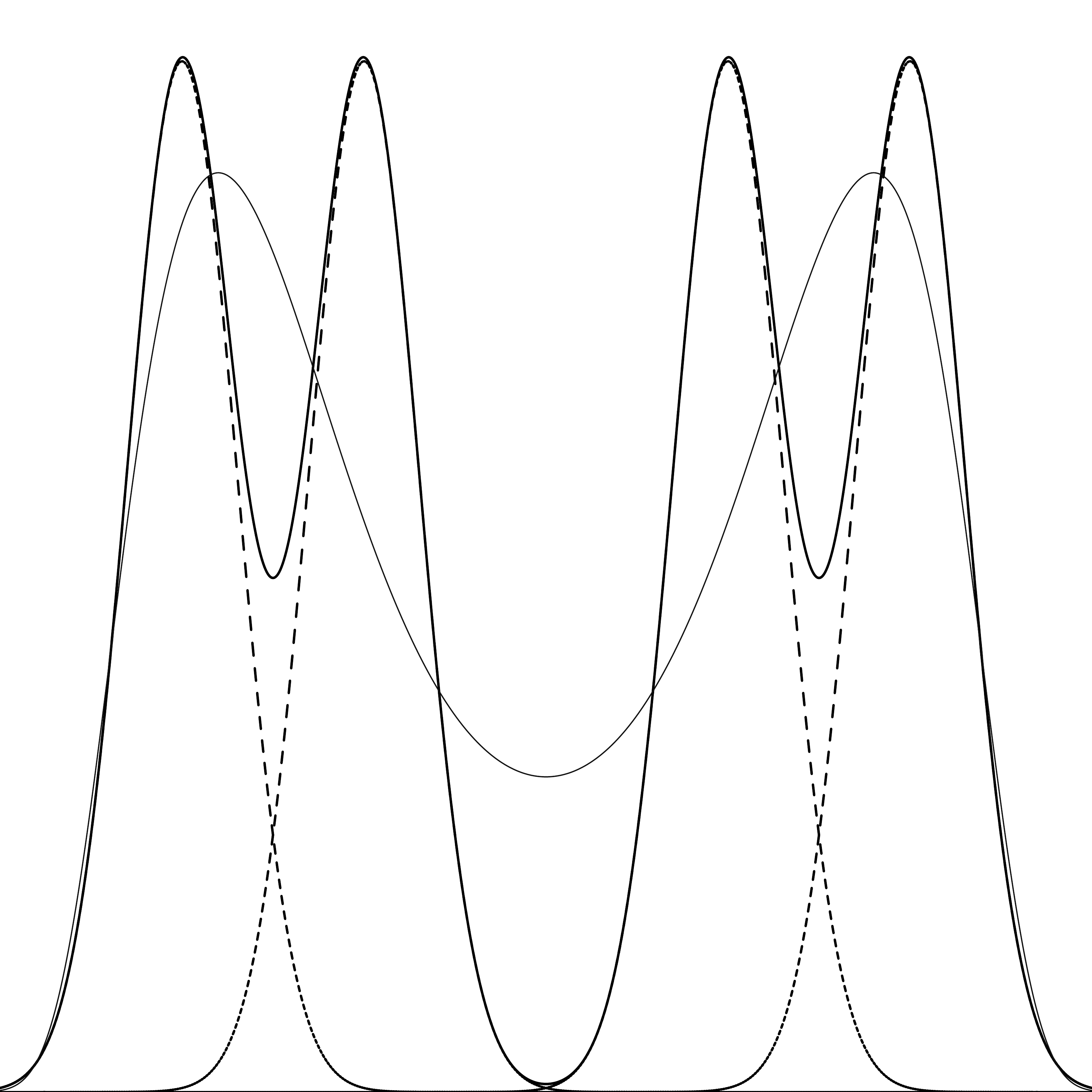}\\
	 Same number of modes ($D=4$) & Different number of modes ($D=30$)
\end{tabular}
\caption{GMM modes versus PED modes:
(left) same number and locations of modes for the GMM and the PED, and (right) $4$ modes for the GMM but only $2$ modes for the PED. }%
\label{fig:mode}%
\end{figure}

Although PEDs are well-suited to calculate Jeffreys divergence compared to GMMs, we point out that GMMs are better suited for sampling while PEDs require Monte Carlo methods (e.g., adaptive rejection sampling or MCMC methods~\cite{rohde2014mcmc}).
Also, we can estimate the Kullback-Leibler Divergence between two PEDs using rejection sampling (or other McMC methods~\cite{rohde2014mcmc}) or by using the $\gamma$-divergence~\cite{eguchi2011projective} with $\gamma$ close to zero~\cite{NN-2016} (e.g., $\gamma=0.001$).

This work opens up several perspectives for future research: For example, we may consider considered bivariate PEDs for modeling bivariate GMMs~\cite{PEF-holonomic2016}, or we may consider truncating the GMMs in order to avoid tail phenomena when converting GMMs to PEDs~\cite{truncmoment-2014,del1994singly}.

The web page of the project is\\
\centerline{\mbox{\url{https://franknielsen.github.io/JeffreysDivergenceGMMPEF/index.html}}}

\bibliographystyle{plain}

\begin{thebibliography}{}

\end{thebibliography}


\begin{thebibliography}{100}

\bibitem{ijcai16}
Cesare Alippi, Giacomo Boracchi, Diego Carrera, and Manuel Roveri.
\newblock Change detection in multivariate datastreams: Likelihood and
  detectability loss.
\newblock In Subbarao Kambhampati, editor, {\em Proceedings of the Twenty-Fifth
  International Joint Conference on Artificial Intelligence, {IJCAI} 2016, New
  York, NY, USA, 9-15 July 2016}, pages 1368--1374. {IJCAI/AAAI} Press, 2016.

\bibitem{IG-2016}
S.~Amari.
\newblock {\em Information Geometry and Its Applications}.
\newblock Applied Mathematical Sciences. Springer Japan, 2016.

\bibitem{amendola2015maximum}
Carlos Am{\'e}ndola, Mathias Drton, and Bernd Sturmfels.
\newblock {Maximum likelihood estimates for Gaussian mixtures are
  transcendental}.
\newblock In {\em International Conference on Mathematical Aspects of Computer
  and Information Sciences}, pages 579--590. Springer, 2015.

\bibitem{amendola2020maximum}
Carlos Am{\'e}ndola, Alexander Engstr{\"o}m, and Christian Haase.
\newblock Maximum number of modes of gaussian mixtures.
\newblock {\em Information and Inference: A Journal of the IMA}, 9(3):587--600,
  2020.

\bibitem{amendola2016moment}
Carlos Amendola, Jean-Charles Faugere, and Bernd Sturmfels.
\newblock {Moment Varieties of Gaussian Mixtures}.
\newblock {\em Journal of Algebraic Statistics}, 7(1), 2016.

\bibitem{aprausheva2006bounds}
NN~Aprausheva, N~Mollaverdi, and SV~Sorokin.
\newblock Bounds for the number of modes of the simplest gaussian mixture.
\newblock {\em Pattern Recognition and Image Analysis}, 16(4):677--681, 2006.

\bibitem{aprausheva2013exact}
NN~Aprausheva and SV~Sorokin.
\newblock {Exact equation of the boundary of unimodal and bimodal domains of a
  two-component Gaussian mixture}.
\newblock {\em Pattern recognition and image analysis}, 23(3):341--347, 2013.

\bibitem{PEFfiltering-2013}
John Armstrong and Damiano Brigo.
\newblock Stochastic filtering via $l_2$ projection on mixture manifolds with
  computer algorithms and numerical examples.
\newblock {\em arXiv preprint arXiv:1303.6236}, 2013.

\bibitem{aroian1948fourth}
Leo~A Aroian.
\newblock The fourth degree exponential distribution function.
\newblock {\em The Annals of Mathematical Statistics}, 19(4):589--592, 1948.

\bibitem{KLBD-2001}
Katy~S Azoury and Manfred~K Warmuth.
\newblock Relative loss bounds for on-line density estimation with the
  exponential family of distributions.
\newblock {\em Machine Learning}, 43(3):211--246, 2001.

\bibitem{BregmanKmeans-2005}
Arindam Banerjee, Srujana Merugu, Inderjit~S Dhillon, and Joydeep Ghosh.
\newblock Clustering with {B}regman divergences.
\newblock {\em Journal of machine learning research}, 6(Oct):1705--1749, 2005.

\bibitem{BN-2014}
Ole Barndorff-Nielsen.
\newblock {\em Information and exponential families: in statistical theory}.
\newblock John Wiley \& Sons, 2014.

\bibitem{Barr-1999}
Donald~R Barr and E~Todd Sherrill.
\newblock Mean and variance of truncated normal distributions.
\newblock {\em The American Statistician}, 53(4):357--361, 1999.

\bibitem{barron1991approximation}
Andrew~R Barron and Chyong-Hwa Sheu.
\newblock Approximation of density functions by sequences of exponential
  families.
\newblock {\em The Annals of Statistics}, 19(3):1347--1369, 1991.

\bibitem{correctionquartic-1991}
Andrew~R. Barron and Chyong-Hwa Sheu.
\newblock {Correction: Approximation of Density Functions by Sequences of
  Exponential Families}.
\newblock {\em The Annals of Statistics}, 19(4):2284 -- 2284, 1991.

\bibitem{bilik2012minimum}
Igal Bilik and Peter Khomchuk.
\newblock Minimum divergence approaches for robust classification of ground
  moving targets.
\newblock {\em IEEE Transactions on Aerospace and Electronic Systems},
  48(1):581--603, 2012.

\bibitem{bishop2006pattern}
Christopher~M Bishop.
\newblock Pattern recognition.
\newblock {\em Machine learning}, 128(9), 2006.

\bibitem{Bregman-1967}
Lev~M. Bregman.
\newblock The relaxation method of finding the common point of convex sets and
  its application to the solution of problems in convex programming.
\newblock {\em USSR computational mathematics and mathematical physics},
  7(3):200--217, 1967.

\bibitem{Brown-1986}
Lawrence~D Brown.
\newblock Fundamentals of statistical exponential families with applications in
  statistical decision theory.
\newblock {\em Lecture Notes-monograph series}, 9:i--279, 1986.

\bibitem{buckland1992fitting}
Stephen~T. Buckland.
\newblock Fitting density functions with polynomials.
\newblock {\em Journal of the Royal Statistical Society: Series C (Applied
  Statistics)}, 41(1):63--76, 1992.

\bibitem{burbea2004j}
Jacob Burbea.
\newblock {$J$-Divergences and related concepts}.
\newblock {\em Encyclopedia of Statistical Sciences}, 2004.

\bibitem{carreira2000mode}
Miguel~A. Carreira-Perpinan.
\newblock Mode-finding for mixtures of gaussian distributions.
\newblock {\em IEEE Transactions on Pattern Analysis and Machine Intelligence},
  22(11):1318--1323, 2000.

\bibitem{ModeGMM-2000}
Miguel~A. Carreira-Perpinan.
\newblock Mode-finding for mixtures of {G}aussian distributions.
\newblock {\em IEEE Transactions on Pattern Analysis and Machine Intelligence},
  22(11):1318--1323, 2000.

\bibitem{clutton1990density}
M~Clutton-Brock.
\newblock Density estimation using exponentials of orthogonal series.
\newblock {\em Journal of the american statistical association},
  85(411):760--764, 1990.

\bibitem{CobbEF-1981}
Loren Cobb.
\newblock The multimodal exponential families of statistical catastrophe
  theory.
\newblock {\em Statistical distributions in scientific work}, pages 67--90,
  1981.

\bibitem{Cobb-1983}
Loren Cobb, Peter Koppstein, and Neng~Hsin Chen.
\newblock Estimation and moment recursion relations for multimodal
  distributions of the exponential family.
\newblock {\em Journal of the American Statistical Association},
  78(381):124--130, 1983.

\bibitem{cui2015comparison}
Shiyong Cui and Mihai Datcu.
\newblock {Comparison of Kullback-Leibler divergence approximation methods
  between Gaussian mixture models for satellite image retrieval}.
\newblock In {\em 2015 IEEE International Geoscience and Remote Sensing
  Symposium (IGARSS)}, pages 3719--3722. IEEE, 2015.

\bibitem{del1994singly}
Joan Del~Castillo.
\newblock The singly truncated normal distribution: a non-steep exponential
  family.
\newblock {\em Annals of the Institute of Statistical Mathematics},
  46(1):57--66, 1994.

\bibitem{Demidenko-2010}
E.~Demidenko.
\newblock Holonomy of flat affinely connected manifolds.
\newblock {\em Journal of Biometrics \& Biostatistics (JBMBS)}, 1, 201.

\bibitem{dhillon2007differential}
JVDI Dhillon.
\newblock Differential entropic clustering of multivariate gaussians.
\newblock {\em Advances in Neural Information Processing Systems}, 19:337,
  2007.

\bibitem{durrieu2012lower}
J-L Durrieu, J-Ph Thiran, and Finnian Kelly.
\newblock {Lower and upper bounds for approximation of the Kullback-Leibler
  divergence between Gaussian mixture models}.
\newblock In {\em 2012 IEEE International Conference on Acoustics, Speech and
  Signal Processing (ICASSP)}, pages 4833--4836. Ieee, 2012.

\bibitem{Efron-2016}
Bradley Efron and Trevor Hastie.
\newblock {\em Computer age statistical inference}, volume~5.
\newblock Cambridge University Press, 2016.

\bibitem{eguchi2011projective}
Shinto Eguchi, Osamu Komori, and Shogo Kato.
\newblock {Projective power entropy and maximum Tsallis entropy distributions}.
\newblock {\em Entropy}, 13(10):1746--1764, 2011.

\bibitem{FisherAE-2021}
Khalil Elkhalil, Ali Hasan, Jie Ding, Sina Farsiu, and Vahid Tarokh.
\newblock Fisher auto-encoders.
\newblock In {\em International Conference on Artificial Intelligence and
  Statistics}, pages 352--360. PMLR, 2021.

\bibitem{Pinsker-2003}
Alexei~A Fedotov, Peter Harremo{\"e}s, and Flemming Topsoe.
\newblock {Refinements of Pinsker's inequality}.
\newblock {\em IEEE Transactions on Information Theory}, 49(6):1491--1498,
  2003.

\bibitem{flury1990acceptance}
Bernard~D Flury.
\newblock Acceptance-rejection sampling made easy.
\newblock {\em SIAM Review}, 32(3):474--476, 1990.

\bibitem{fuhrmann1986remarks}
Paul~A Fuhrmann.
\newblock {Remarks on the inversion of Hankel matrices}.
\newblock {\em Linear Algebra and its Applications}, 81:89--104, 1986.

\bibitem{gammadiv-2008}
Hironori Fujisawa and Shinto Eguchi.
\newblock Robust parameter estimation with a small bias against heavy
  contamination.
\newblock {\em Journal of Multivariate Analysis}, 99(9):2053--2081, 2008.

\bibitem{garcia2010simplification}
Vincent Garcia and Frank Nielsen.
\newblock Simplification and hierarchical representations of mixtures of
  exponential families.
\newblock {\em Signal Processing}, 90(12):3197--3212, 2010.

\bibitem{DL-2016}
Ian Goodfellow, Yoshua Bengio, and Aaron Courville.
\newblock {\em Deep learning}.
\newblock MIT press, 2016.

\bibitem{grathwohl2019your}
Will Grathwohl, Kuan-Chieh Wang, J{\"o}rn-Henrik Jacobsen, David Duvenaud,
  Mohammad Norouzi, and Kevin Swersky.
\newblock Your classifier is secretly an energy based model and you should
  treat it like one.
\newblock {\em arXiv preprint arXiv:1912.03263}, 2019.

\bibitem{PEF-holonomic2016}
Jumpei Hayakawa and Akimichi Takemura.
\newblock Estimation of exponential-polynomial distribution by holonomic
  gradient descent.
\newblock {\em Communications in Statistics-Theory and Methods},
  45(23):6860--6882, 2016.

\bibitem{heinig2011fast}
Georg Heinig and Karla Rost.
\newblock {Fast algorithms for Toeplitz and Hankel matrices}.
\newblock {\em Linear Algebra and its Applications}, 435(1):1--59, 2011.

\bibitem{EF-Hudson-1978}
H.~Malcolm Hudson.
\newblock A natural identity for exponential families with applications in
  multiparameter estimation.
\newblock {\em The Annals of Statistics}, 6(3):473--484, 1978.

\bibitem{hyvarinen-2005}
Aapo Hyv{\"a}rinen.
\newblock Estimation of non-normalized statistical models by score matching.
\newblock {\em Journal of Machine Learning Research}, 6(Apr):695--709, 2005.

\bibitem{Hyvarinen-2007}
Aapo Hyv{\"a}rinen.
\newblock Some extensions of score matching.
\newblock {\em Computational statistics \& data analysis}, 51(5):2499--2512,
  2007.

\bibitem{jeffreys1946invariant}
Harold Jeffreys.
\newblock An invariant form for the prior probability in estimation problems.
\newblock {\em Proceedings of the Royal Society of London. Series A.
  Mathematical and Physical Sciences}, 186(1007):453--461, 1946.

\bibitem{jenssen2006cauchy}
Robert Jenssen, Jose~C Principe, Deniz Erdogmus, and Torbj{\o}rn Eltoft.
\newblock {The Cauchy--Schwarz divergence and Parzen windowing: Connections to
  graph theory and Mercer kernels}.
\newblock {\em Journal of the Franklin Institute}, 343(6):614--629, 2006.

\bibitem{kannappan1988axiomatic}
Pl~Kannappan and PN~Rathie.
\newblock {An axiomatic characterization of $J$-divergence}.
\newblock In {\em Transactions of the Tenth Prague Conference on Information
  Theory, Statistical Decision Functions, Random Processes}, pages 29--36.
  Springer, 1988.

\bibitem{karlin1968total}
Samuel Karlin.
\newblock {\em Total positivity}, volume~1.
\newblock Stanford University Press, 1968.

\bibitem{RLFisher-2021}
Ilya Kostrikov, Rob Fergus, Jonathan Tompson, and Ofir Nachum.
\newblock {Offline reinforcement learning with Fisher divergence critic
  regularization}.
\newblock In {\em International Conference on Machine Learning}, pages
  5774--5783. PMLR, 2021.

\bibitem{kullback1997information}
Solomon Kullback.
\newblock {\em Information theory and statistics}.
\newblock Courier Corporation, 1997.

\bibitem{lecun2006tutorial}
Yann LeCun, Sumit Chopra, Raia Hadsell, M~Ranzato, and F~Huang.
\newblock A tutorial on energy-based learning.
\newblock {\em Predicting structured data}, 1(0), 2006.

\bibitem{Lin-1991}
Jianhua Lin.
\newblock Divergence measures based on the {S}hannon entropy.
\newblock {\em IEEE Transactions on Information theory}, 37(1):145--151, 1991.

\bibitem{lindsay1989moment}
Bruce~G Lindsay.
\newblock Moment matrices: applications in mixtures.
\newblock {\em The Annals of Statistics}, 17(2):722--740, 1989.

\bibitem{lindsay1989determinants}
Bruce~G Lindsay.
\newblock On the determinants of moment matrices.
\newblock {\em The Annals of Statistics}, pages 711--721, 1989.

\bibitem{QuarticEF-1978}
Arthur~Woolf Matz.
\newblock Maximum likelihood parameter estimation for the quartic exponential
  distribution.
\newblock {\em Technometrics}, 20(4):475--484, 1978.

\bibitem{mccullagh1994exponential}
Peter McCullagh.
\newblock Exponential mixtures and quadratic exponential families.
\newblock {\em Biometrika}, 81(4):721--729, 1994.

\bibitem{mclachlan1988mixture}
Geoffrey~J McLachlan and Kaye~E Basford.
\newblock {\em {Mixture models: Inference and applications to clustering}},
  volume~38.
\newblock M. Dekker New York, 1988.

\bibitem{PEFMaxEnt-1984}
Lawrence~R Mead and Nikos Papanicolaou.
\newblock Maximum entropy in the problem of moments.
\newblock {\em Journal of Mathematical Physics}, 25(8):2404--2417, 1984.

\bibitem{michalek2016exponential}
Mateusz Micha{\l}ek, Bernd Sturmfels, Caroline Uhler, and Piotr Zwiernik.
\newblock Exponential varieties.
\newblock {\em Proceedings of the London Mathematical Society}, 112(1):27--56,
  2016.

\bibitem{MaxEnt-1992}
Ali Mohammad-Djafari.
\newblock A {M}atlab program to calculate the maximum entropy distributions.
\newblock In {\em Maximum entropy and Bayesian methods}, pages 221--233.
  Springer, 1992.

\bibitem{Neyman-1937}
Jerzy Neyman.
\newblock ``smooth test'' for goodness of fit.
\newblock {\em Scandinavian Actuarial Journal}, 1937(3-4):149--199, 1937.

\bibitem{nielsen2012closed}
Frank Nielsen.
\newblock Closed-form information-theoretic divergences for statistical
  mixtures.
\newblock In {\em Proceedings of the 21st International Conference on Pattern
  Recognition (ICPR2012)}, pages 1723--1726. IEEE, 2012.

\bibitem{nielsen2012k}
Frank Nielsen.
\newblock {$k$-MLE: A fast algorithm for learning statistical mixture models}.
\newblock In {\em IEEE international conference on acoustics, speech and signal
  processing (ICASSP)}, pages 869--872. IEEE, 2012.

\bibitem{nielsen2013jeffreys}
Frank Nielsen.
\newblock Jeffreys centroids: A closed-form expression for positive histograms
  and a guaranteed tight approximation for frequency histograms.
\newblock {\em IEEE Signal Processing Letters}, 20(7):657--660, 2013.

\bibitem{nielsen2018information}
Frank Nielsen.
\newblock What is an information projection?
\newblock {\em Notices of the AMS}, 65(3):321--324, 2018.

\bibitem{nielsen2019jensen}
Frank Nielsen.
\newblock {On the Jensen--Shannon symmetrization of distances relying on
  abstract means}.
\newblock {\em Entropy}, 21(5):485, 2019.

\bibitem{nielsen2019statistical}
Frank Nielsen.
\newblock {The statistical Minkowski distances: Closed-form formula for
  Gaussian mixture models}.
\newblock In {\em International Conference on Geometric Science of
  Information}, pages 359--367. Springer, 2019.

\bibitem{EIG-2020}
Frank Nielsen.
\newblock An elementary introduction to information geometry.
\newblock {\em Entropy}, 22(10):1100, 2020.

\bibitem{e23111417}
Frank Nielsen.
\newblock Fast approximations of the jeffreys divergence between univariate
  gaussian mixtures via mixture conversions to exponential-polynomial
  distributions.
\newblock {\em Entropy}, 23(11), 2021.

\bibitem{nielsen2021dually}
Frank Nielsen.
\newblock {The dually flat information geometry of the mixture family of two
  prescribed Cauchy components}.
\newblock {\em arXiv preprint arXiv:2104.13801}, 2021.

\bibitem{nielsen2011burbea}
Frank Nielsen and Sylvain Boltz.
\newblock {The Burbea-Rao and Bhattacharyya centroids}.
\newblock {\em IEEE Transactions on Information Theory}, 57(8):5455--5466,
  2011.

\bibitem{nielsen2009statistical}
Frank Nielsen and Vincent Garcia.
\newblock Statistical exponential families: A digest with flash cards.
\newblock {\em arXiv preprint arXiv:0911.4863}, 2009.

\bibitem{MCIG-2019}
Frank Nielsen and Ga{\"e}tan Hadjeres.
\newblock Monte carlo information-geometric structures.
\newblock In {\em Geometric Structures of Information}, pages 69--103.
  Springer, 2019.

\bibitem{nielsen2009sided}
Frank Nielsen and Richard Nock.
\newblock {Sided and symmetrized Bregman centroids}.
\newblock {\em IEEE transactions on Information Theory}, 55(6):2882--2904,
  2009.

\bibitem{entropyEF-2010}
Frank Nielsen and Richard Nock.
\newblock Entropies and cross-entropies of exponential families.
\newblock In {\em 2010 IEEE International Conference on Image Processing},
  pages 3621--3624. IEEE, 2010.

\bibitem{NN-2016}
Frank Nielsen and Richard Nock.
\newblock Patch matching with polynomial exponential families and projective
  divergences.
\newblock In {\em International Conference on Similarity Search and
  Applications}, pages 109--116. Springer, 2016.

\bibitem{nielsen2017maxent}
Frank Nielsen and Richard Nock.
\newblock Maxent upper bounds for the differential entropy of univariate
  continuous distributions.
\newblock {\em IEEE Signal Processing Letters}, 24(4):402--406, 2017.

\bibitem{wmixtures-2018}
Frank Nielsen and Richard Nock.
\newblock On the geometry of mixtures of prescribed distributions.
\newblock In {\em IEEE International Conference on Acoustics, Speech and Signal
  Processing (ICASSP)}, pages 2861--2865. IEEE, 2018.

\bibitem{nielsen2016guaranteed}
Frank Nielsen and Ke~Sun.
\newblock Guaranteed bounds on information-theoretic measures of univariate
  mixtures using piecewise log-sum-exp inequalities.
\newblock {\em Entropy}, 18(12):442, 2016.

\bibitem{nielsen2018guaranteed}
Frank Nielsen and Ke~Sun.
\newblock Guaranteed deterministic bounds on the total variation distance
  between univariate mixtures.
\newblock In {\em 2018 IEEE 28th International Workshop on Machine Learning for
  Signal Processing (MLSP)}, pages 1--6. IEEE, 2018.

\bibitem{nielsen2017holder}
Frank Nielsen, Ke~Sun, and St{\'e}phane Marchand-Maillet.
\newblock {On H{\"o}lder projective divergences}.
\newblock {\em Entropy}, 19(3):122, 2017.

\bibitem{nock2015conformal}
Richard Nock, Frank Nielsen, and Shun-ichi Amari.
\newblock On conformal divergences and their population minimizers.
\newblock {\em IEEE Transactions on Information Theory}, 62(1):527--538, 2015.

\bibitem{truncmoment-2014}
Eric Orjebin.
\newblock {A Recursive Formula for the Moments of a Truncated Univariate Normal
  Distribution}, 2014.
\newblock unpublished note.

\bibitem{o1933method}
AL~O'toole.
\newblock A method of determining the constants in the bimodal fourth degree
  exponential function.
\newblock {\em The Annals of Mathematical Statistics}, 4(2):79--93, 1933.

\bibitem{otto2000generalization}
Felix Otto and C{\'e}dric Villani.
\newblock {Generalization of an inequality by Talagrand and links with the
  logarithmic Sobolev inequality}.
\newblock {\em Journal of Functional Analysis}, 173(2):361--400, 2000.

\bibitem{pearson1894contributions}
Karl Pearson.
\newblock Contributions to the mathematical theory of evolution.
\newblock {\em Philosophical Transactions of the Royal Society of London. A},
  185:71--110, 1894.

\bibitem{pelletier-2005}
Bruno Pelletier.
\newblock Informative barycentres in statistics.
\newblock {\em Annals of the Institute of Statistical Mathematics},
  57(4):767--780, 2005.

\bibitem{Pinsker-1960}
MS~Pinsker.
\newblock {Information and information stability of random variables and
  processes (in Russian)}.
\newblock {\em Izv. Akad. Nauk (Moscow, U.S.S.R.)}, 12, 1960.

\bibitem{Cumulant-1999}
Giovanni Pistone and Henry~P Wynn.
\newblock Finitely generated cumulants.
\newblock {\em Statistica Sinica}, pages 1029--1052, 1999.

\bibitem{provost2009inversion}
Serge~B Provost and Hyung~Tae Ha.
\newblock On the inversion of certain moment matrices.
\newblock {\em Linear algebra and its applications}, 430(10):2650--2658, 2009.

\bibitem{MC-2013}
Christian Robert and George Casella.
\newblock {\em Monte {Carlo} statistical methods}.
\newblock Springer Science \& Business Media, 2013.

\bibitem{rohde2014mcmc}
David Rohde and Jonathan Corcoran.
\newblock {MCMC methods for univariate exponential family models with
  intractable normalization constants}.
\newblock In {\em 2014 IEEE Workshop on Statistical Signal Processing (SSP)},
  pages 356--359. IEEE, 2014.

\bibitem{sampson1975characterizing}
Allan~R Sampson.
\newblock Characterizing exponential family distributions by moment generating
  functions.
\newblock {\em The Annals of Statistics}, pages 747--753, 1975.

\bibitem{simplifykde-2013}
Olivier Schwander and Frank Nielsen.
\newblock Learning mixtures by simplifying kernel density estimators.
\newblock In {\em Matrix Information Geometry}, pages 403--426. Springer, 2013.

\bibitem{seabra2011rayleigh}
Jos{\'e}~C Seabra, Francesco Ciompi, Oriol Pujol, Josepa Mauri, Petia Radeva,
  and Joao Sanches.
\newblock Rayleigh mixture model for plaque characterization in intravascular
  ultrasound.
\newblock {\em IEEE Transactions on Biomedical Engineering}, 58(5):1314--1324,
  2011.

\bibitem{serfling2009approximation}
Robert~J Serfling.
\newblock {\em Approximation theorems of mathematical statistics}, volume 162.
\newblock John Wiley \& Sons, 2009.

\bibitem{MoG-2006}
Prakash~P. Shenoy.
\newblock Inference in hybrid bayesian networks using mixtures of {Gaussians}.
\newblock In {\em Proceedings of the 22nd Conference in Uncertainty in
  Artificial Intelligence, Cambridge, MA, USA, July 13-16, 2006}. {AUAI} Press,
  2006.

\bibitem{Hessian-2007}
Hirohiko Shima.
\newblock {\em {The geometry of Hessian structures}}.
\newblock World Scientific, 2007.

\bibitem{sreekumar2021non}
Sreejith Sreekumar, Zhengxin Zhang, and Ziv Goldfeld.
\newblock {Non-asymptotic Performance Guarantees for Neural Estimation of
  $f$-Divergences}.
\newblock In {\em International Conference on Artificial Intelligence and
  Statistics}, pages 3322--3330. PMLR, 2021.

\bibitem{tabibian2015speech}
Shima Tabibian, Ahmad Akbari, and Babak Nasersharif.
\newblock {Speech enhancement using a wavelet thresholding method based on
  symmetric Kullback--Leibler divergence}.
\newblock {\em Signal Processing}, 106:184--197, 2015.

\bibitem{toscani1999entropy}
Giuseppe Toscani.
\newblock {Entropy production and the rate of convergence to equilibrium for
  the Fokker-Planck equation}.
\newblock {\em Quarterly of Applied Mathematics}, 57(3):521--541, 1999.

\bibitem{trench1965algorithm}
William~F Trench.
\newblock {An algorithm for the inversion of finite Hankel matrices}.
\newblock {\em Journal of the Society for Industrial and Applied Mathematics},
  13(4):1102--1107, 1965.

\bibitem{veldhuis2002centroid}
Raymond Veldhuis.
\newblock {The centroid of the symmetrical Kullback-Leibler distance}.
\newblock {\em IEEE signal processing letters}, 9(3):96--99, 2002.

\bibitem{vitoratou2017thermodynamic}
Silia Vitoratou and Ioannis Ntzoufras.
\newblock {Thermodynamic Bayesian model comparison}.
\newblock {\em Statistics and Computing}, 27(5):1165--1180, 2017.

\bibitem{vonNeumann1951}
John von Neumann.
\newblock Various techniques used in connection with random digits.
\newblock In A.~S. Householder, G.~E. Forsythe, and H.~H. Germond, editors,
  {\em Monte Carlo Method}, volume~12 of {\em National Bureau of Standards
  Applied Mathematics Series}, chapter~13, pages 36--38. US Government Printing
  Office, Washington, DC, 1951.

\bibitem{wahba1990spline}
Grace Wahba.
\newblock {\em Spline models for observational data}.
\newblock SIAM, 1990.

\bibitem{KLnotanalytic-2004}
Sumio Watanabe, Keisuke Yamazaki, and Miki Aoyagi.
\newblock Kullback information of normal mixture is not an analytic function.
\newblock {\em IEICE technical report. Neurocomputing}, 104(225):41--46, 2004.

\bibitem{xiao2010optimal}
Yiming Xiao, Mohak Shah, Simon Francis, Douglas~L Arnold, Tal Arbel, and
  D~Louis Collins.
\newblock {Optimal Gaussian mixture models of tissue intensities in brain MRI
  of patients with multiple-sclerosis}.
\newblock In {\em International Workshop on Machine Learning in Medical
  Imaging}, pages 165--173. Springer, 2010.

\bibitem{VariationalFisherDivergence-2019}
Yue Yang, Ryan Martin, and Howard Bondell.
\newblock {Variational approximations using Fisher divergence}.
\newblock {\em arXiv preprint arXiv:1905.05284}, 2019.

\bibitem{zellner1988calculation}
Arnold Zellner and Richard~A Highfield.
\newblock Calculation of maximum entropy distributions and approximation of
  marginal posterior distributions.
\newblock {\em Journal of Econometrics}, 37(2):195--209, 1988.

\end{thebibliography}

\appendix

\section{Monomial exponential families}\label{sec:MEF}
Consider the following polynomial exponential density defined on the full real line support $\calX=\bbR$ which consists of a single monomial sufficient statistic $t(x)=x^D$: 
$$
p_{D,\theta}(x):=\exp\left(\theta x^D-F_D(\theta)\right),
$$
for an even integer $D\geq 2$.
The set of such densities form a Monomial Exponential Family~\cite{nielsen2017maxent} $\calM_D=\{p_{D,\theta}(x)\ :\ \theta<0\}$ (MEF) with sufficient statistic $t(x)=x^D$.
$\calM_D$ is  a univariate order-$1$ exponential family. 
MEFs are special PEFs (with $\theta_1=\ldots=\theta_{D-1}=0$ and $\theta_D=\theta$) which yield tractable information-theoretic quantities like the KLD or the differential entropy.
Indeed, the cumulant function $F_{D}(\theta)$ is available in closed-form expression~\cite{nielsen2017maxent}:
\begin{eqnarray}
F_{D}(\theta) &=&\log \int_{\bbR} \exp(\theta x^D) \dx ,\\
&=& \log\left(\frac{2\Gamma(1/D)}{D}\right)-\frac{1}{D}\log(-\theta),
\end{eqnarray}
for $\theta<0$, where $\Gamma(\cdot)$ denotes the gamma function.
The natural parameter space is $\Theta=(-\infty,0)$.
The moment parameter is $\eta=\nabla F_D(\theta)=-\frac{1}{D\theta}>0$, and the moment space is $H=(0,\infty)$.
We have $\theta=\nabla F^*_D(\eta_D)=-\frac{1}{D\eta}<0$ and the convex conjugate  is:
$$
F_D^*(\eta)=-\log\left(\frac{2\Gamma(1/D)}{D}\right)-\frac{1}{D}(1+\log(D\eta)=-h[p_{D,\theta}].
$$
We check that the Fenchel-Young equality holds:
$$
F_{D}(\theta)+F_D^*(\eta)-\theta\eta=0.
$$  
The differential entropy of a MEF~\cite{entropyEF-2010} is $h[p_{D,\theta}(x)]=-F^*(\eta)$, and the Kullback-Leibler divergence is
\begin{eqnarray}
D_\KL[p_{D,\theta_1}:p_{D,\theta_2}] &=& B_F(\theta_2:\theta_1),\\
&=& -\frac{1}{D}\log \frac{-\theta_2}{-\theta_1}-(\theta_2-\theta_1) \left(-\frac{1}{D\theta_1}\right),\\
&=&
\frac{1}{D}\left(\frac{\theta_2}{\theta_1}-\log\frac{\theta_2}{\theta_1}-1\right).
\end{eqnarray}

It follows that the KLD is a scaled Itakura-Saito divergence~\cite{nielsen2009sided} $D_\IS$ with
$$
D_\IS[p:q]:=\frac{p}{q}-\log\left(\frac{p}{q}\right)-1.
$$
That is, we have:
$$
D_\KL[p_{D,\theta_1}:p_{D,\theta_2}]=\frac{1}{D}\,D_\IS[\theta_2:\theta_1].
$$

\begin{Example}
Let us report the MEFs for $D=2$, the zero-mean centered normal distributions~\cite{nielsen2009statistical} 
$\calM_2=\{N(0,\sigma^2)\ :\ \sigma^2>0\}$.
We have $\theta=-\frac{1}{\sigma^2}<0$, $t(x)=x^2$, $F_2(\theta)=\frac{1}{2}\log\left(\frac{\pi}{-\theta}\right)$  (since $\Gamma(1/2)=\sqrt{\pi}$),
$\eta=\sigma^2>0$ and $F_2^*(\eta)=-\frac{1}{2}-\frac{1}{2}\log(2\pi\eta)$.

The KLD between two zero-mean normal distributions $p_{0,\sigma_1^2}$ and $p_{0,\sigma_2^2}$ is
\begin{eqnarray}
D_\KL[p_{0,\sigma_1^2}:p_{0,\sigma_2^2}]  &=& \frac{1}{2} \left(
\frac{\theta_2}{\theta_1}-\log\left(\frac{\theta_2}{\theta_1}\right) -1
\right),\\
&=& \frac{1}{2}\left( \frac{\sigma_1^2}{\sigma_2^2}-\log \frac{\sigma_1^2}{\sigma_2^2} -1\right),\\
&=& D_\IS\left[-\frac{1}{2\sigma_2^2}:-\frac{1}{2\sigma_1^2}\right],\\
&=& D_\IS[\sigma_1^2:\sigma_2^2],
\end{eqnarray}
since the Itakura-Saito divergence is scale-free: $D_\IS[p:q]=D_\IS[1:\frac{q}{p}]=D_\IS[\frac{p}{q}:1]$.
Thus we have
$$
D_\KL[p_{0,\sigma_1^2}:p_{0,\sigma_2^2}]=D_\IS\left[\frac{\sigma_1^2}{\sigma_2^2}:1\right]=D_\IS[\sigma_1^2:\sigma_2^2].
$$

This matches the usual KLD between two normal distributions which can be interpreted as the sum of a squared Mahalanobis distance and half of the Itakura-Saito divergence (as noticed in~\cite{dhillon2007differential}):
\begin{eqnarray}
D_\KL[p_{\mu_1,\sigma_1^2}:p_{\mu_2,\sigma_2^2}] &=&  \frac{(\mu_2-\mu_1)^2}{2\sigma_2^2}+\frac{1}{2}\left( \frac{\sigma_1^2}{\sigma_2^2}-\log \frac{\sigma_1^2}{\sigma_2^2} -1\right),\\
&=& \frac{1}{2}\left(M_{\sigma_2^2}^2(\mu_1,\mu_2)+D_\IS\left[-\frac{1}{2\sigma_2^2}:-\frac{1}{2\sigma_1^2}\right]\right),
\end{eqnarray}
where 
$$
M_{\sigma_2^2}^2(\mu_1,\mu_2):=\frac{(\mu_2-\mu_1)^2}{\sigma_2^2}.
$$
\end{Example}

The following code in {\sc Maxima} implements and tests the various formula for the KLD between two densities of a MEF:

\begin{lstlisting}[backgroundcolor = \color{lightgray}]
/* Order of the MEF (even integer) */
D: 12;
F(theta):=log(2*gamma(1/D)/D)-(1/D)*log(-theta);
gradF(theta):=-1/(D*theta);
gradFdual(eta):=-1/(D*eta);
Fdual(eta):=-log (2*gamma(1/D)/D)-(1/D)*(1+log(D*eta));
p(x,theta):=exp(theta*(x**D)-F(theta));
/* Check Fenchel-Young inequality */
F(theta)+Fdual(gradF(theta))-theta*gradF(theta);
float(radcan(%)); /* should be zero */
/* Closed-form formula for Kullback-Leibler divergence */
LF(theta1,eta2):=F(theta1)+Fdual(eta2)-theta1*eta2;
BF(theta1,theta2):=F(theta1)-F(theta2)-(theta1-theta2)*gradF(theta2);
BdualF(eta1,eta2):=Fdual(eta1)-Fdual(eta2)-(eta1-eta2)*gradFdual(eta2);
ItakuraSaito(p,q):=(p/q)-log(p/q)-1;
closedformKL(theta1,theta2):=(1/D)*(theta2/theta1-log(theta2/theta1)-1);
closedformKLIS(theta1,theta2):=(1/D)*ItakuraSaito(theta2,theta1);
/* Differential entropy */
h(theta):=-Fdual(gradF(theta));
/* Test example */
theta:-2;
theta1:-2;
theta2:-3;
eta1:gradF(theta1);
eta2:gradF(theta2);
/* Kullback-Leibler divergence */
/* numerical integration by clamping the x-range */
print("numerical KLD:",quad_qag (p(x,theta1)*log(p(x,theta1)/p(x,theta2)), x, -100,100, 6, 'epsrel=5d-8)[1]);
/* Exact formula for the KLD */
print("Legendre-Fenchel divergence:",float(LF(theta2,eta1))); 
print("Bregman divergence:",float(BF(theta2,theta1)));
print("Dual Bregman divergence:",float(BdualF(eta1,eta2)));
print("Closed-form KLD:",float(closedformKL(theta1,theta2)));
print("Closed-form via Itakura-Saito divergence:",float(closedformKLIS(theta1,theta2)));
print("Differential entropy:",float(h(theta)));  
\end{lstlisting}

 When $D$ is an odd integer, the MEF is not defined but we can define Absolute Monomial Exponential Families~\cite{nielsen2017maxent} $\calA_D$ (AMEFs) with PDFs:
$$
p_{D,\theta}(x):=\exp(\theta\,|x|^D-F_D(\theta)).
$$
AMEFs $\calA_D=\{p_{D,\theta}(x)\ :\ \theta<0\}$ coincide with MEFs $\calM_D$ for even integers.
Since MEFs and AMEFs are exponential families, they are Maximum Entropy (MaxEnt) distributions.
That means that for any other distribution $r$, we necessarily have $h[r]\leq h[p^{D,\eta}]=-F^*_D(\eta)$ where $\eta=E_r[|x|^D]$:
$$
h[r]\leq -F^*_D(E_r[|x|^D]).
$$

\section{Stein's lemma for continuous exponential families}\label{sec:Stein}

A function on $\bbR$ is said absolutely continuous if for all $\epsilon>0$ there exists $\delta>0$ such that for all finite pairwise disjoint intervals $\{(a_i,b_i)\}_i$ with $\sum_i (b_i-a_i)<\delta$ we have $\sum_i |f(b_i)-f(a_i)|<\epsilon$. 
Let $\AC(\bbR)$ denote the set of absolutely continuous functions on $\bbR$.

\begin{Lemma}\label{lemma:Stein}
Let $X$ be a continuous random variable with exponential density 
$$
p_\theta(x)=\exp(\sum_{i=1}^D \theta_i t_i(x)-F(\theta)+k(x))
$$ 
with support $\calX=\bbR$.
For any $f\in\AC(\bbR)$ with $E[f'(X)]<\infty$, we have the following Stein identity:
\begin{equation}
E_{p_\theta}\left[\left(\sum_{i=1}^D \theta_i t_i'(x)\theta_i+k'(x)\right) f(x) \right]=-E_{p_\theta}[f'(x)].
\end{equation}
\end{Lemma}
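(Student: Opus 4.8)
The plan is to reduce the identity to a single integration by parts (equivalently, the fundamental theorem of calculus for absolutely continuous functions). First I would observe that the multiplier appearing on the left-hand side is exactly the \emph{data score} of $p_\theta$: since $\log p_\theta(x)=\sum_{i=1}^D\theta_i t_i(x)-F(\theta)+k(x)$ and $F(\theta)$ does not depend on $x$, the $t_i$ and $k$ being locally absolutely continuous gives
\[
\partial_x\log p_\theta(x)=\sum_{i=1}^D\theta_i t_i'(x)+k'(x)=\frac{p_\theta'(x)}{p_\theta(x)}\qquad\text{for a.e. }x .
\]
Hence the left-hand side rewrites as
\[
E_{p_\theta}\!\left[\Big(\sum_{i=1}^D\theta_i t_i'(x)+k'(x)\Big)f(x)\right]
=\int_{\bbR}\frac{p_\theta'(x)}{p_\theta(x)}\,f(x)\,p_\theta(x)\,\dx
=\int_{\bbR}p_\theta'(x)\,f(x)\,\dx .
\]

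Next I would integrate by parts on a bounded window and pass to the limit. Since $p_\theta$ is locally absolutely continuous (being the exponential of a locally absolutely continuous function) and $f\in\AC(\bbR)$, the product $p_\theta f$ is locally absolutely continuous with a.e.\ derivative $p_\theta'f+p_\theta f'$, so for every $R>0$
\[
\int_{-R}^{R}p_\theta'(x)f(x)\,\dx=p_\theta(R)f(R)-p_\theta(-R)f(-R)-\int_{-R}^{R}p_\theta(x)f'(x)\,\dx .
\]
Letting $R\to\infty$, the last integral tends to $E_{p_\theta}[f'(X)]$ by dominated convergence (using the integrability hypothesis on $f'$), and the left-hand integral tends to the quantity computed above provided it is absolutely convergent; so everything comes down to showing that the boundary terms $p_\theta(\pm R)f(\pm R)$ vanish as $R\to\infty$.

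This last point is the main obstacle. Because $(p_\theta f)'=p_\theta'f+p_\theta f'$ is integrable on $\bbR$ under the stated hypotheses, $p_\theta(x)f(x)$ has finite limits $\ell_{\pm}$ at $\pm\infty$, and one must argue $\ell_{\pm}=0$. I would do this by contradiction: if, say, $\ell_+\neq 0$, then $|f(x)|\sim|\ell_+|/p_\theta(x)$ near $+\infty$, and differentiating this asymptotic relation gives $p_\theta(x)f'(x)\sim-\ell_+\,\partial_x\log p_\theta(x)$; for exponential families on $\bbR$ whose score diverges at infinity (e.g.\ polynomial exponential densities, where $\partial_x\log p_\theta(x)=\sum_i i\theta_i x^{i-1}\to\pm\infty$) this is not $p_\theta$-integrable, contradicting $E[|f'(X)|]<\infty$; the symmetric argument handles $\ell_-$. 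Alternatively, one simply incorporates the standard tail condition $\lim_{|x|\to\infty}p_\theta(x)f(x)=0$ into the regularity assumptions, as in Hudson~\cite{EF-Hudson-1978}. With the boundary terms eliminated, combining the two displayed identities yields $\int_{\bbR}p_\theta'(x)f(x)\,\dx=-E_{p_\theta}[f'(x)]$, which is the claimed Stein identity.
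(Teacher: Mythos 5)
Your proposal is correct and takes essentially the same route as the paper: rewrite the multiplier as the data score $p_\theta'(x)/p_\theta(x)$, perform a single integration by parts, and reduce everything to the vanishing of the boundary terms $p_\theta(\pm R)f(\pm R)$. The paper disposes of those boundary terms in one line (asserting they vanish because of integrability), so your more explicit discussion is if anything more careful on the only delicate point; just be aware that ``differentiating the asymptotic relation'' in your contradiction argument is not a valid step in general, so the cleaner fix is your stated alternative of assuming the tail condition $\lim_{|x|\to\infty} p_\theta(x)f(x)=0$, which is exactly what the paper's argument implicitly uses.
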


\begin{proof}
Recall the integration by parts $(u(x)v(x))'=u'(x)v(x)+u(x)v'(x)$ so that 
$$
\int_a^b (u(x)v(x))'\dx=[u(x)v(x)]_a^b=\int_a^b u'(x)v(x)\dx+ \int_a^b u(x)v'(x)\dx.
$$
Therefore, we have $\int_a^b u'(x)v(x)\dx=[u(x)v(x)]_a^b-\int_a^b u(x)v'(x)\dx$.
Let us integrate by parts $E[f'(X)]=\int_\bbR f'(x) p_\theta(x) \dx$:
$$
E[f'(X)] = [f(x) p_\theta(x)]_{-\infty}^{\infty} - \int_\bbR f(x) p_\theta'(x) \dx .
$$
Since $E[f(X)]<\infty$, we necessarily have  $\lim_{x\rightarrow \pm\infty} p_\theta(x)f(x)=0$, 
and therefore we get $[f(x) p_\theta(x)]_{-\infty}^{\infty}=0$. 
Since $p_\theta'(x)= \left(\sum_{i=1}^D \theta_i t_i'(x)\theta_i+k'(x)\right) p_\theta(x)$, it follows that
$$
E[f'(X)] = -\int_\bbR f(x)  \left(\sum_{i=1}^D \theta_i t_i'(x)\theta_i+k'(x)\right) p_\theta(x) \dx = -E\left[\left(\sum_{i=1}^D \theta_i t_i'(x)\theta_i+k'(x)\right) f(x)\right].
$$
\end{proof}

Notice that if we define $h(x):=e^{k(x)}$ (i.e., $k(x)=\log h(x)$) then
$p_\theta(x)=\exp(\sum_{i=1}^D \theta_i t_i(x)-F(\theta))h(x)$, 
and we have $k'(x)=\frac{h'(x)}{h(x)}$, so that the Stein identity can be rewritten equivalently as:
\begin{equation}
E\left[\left(\sum_{i=1}^D \theta_i t_i'(x)\theta_i+\frac{h'(x)}{h(x)}\right) f(x)\right]=-E[f'(X)].
\end{equation}

This proof extends the original proof of Hudson~\cite{EF-Hudson-1978} (1978) who originally considered $D=1$ and $t(x)=x$.

Thus for a PED of order $D$ with $t_i(x)=x^i$, we have:
\begin{equation}
E\left[\left(\sum_{i=1}^D   i\theta_i x^{i-1}\right) f(x)\right]=-E[f'(X)].
\end{equation}

Further letting $f(x)=x^j$, we get the following identity for PEDs:
\begin{equation}\label{eq:mr}
E\left[\sum_{i=1}^D   i\theta_i x^{i-1+j}\right]=-E[j x^{j-1}].
\end{equation}

After some rewriting, this equation corresponds to Theorem~1 of~\cite{Cobb-1983} for their type $N$ distributions (PEDs).
Note that there is a missing minus in Eq.~(1.1) of~\cite{Cobb-1983}: Compare with Eq.~10 of~\cite{CobbEF-1981}.
Let $\mu_i=E[x^i]$ denote the raw (non-central) moments.
Using the linearity of the expectation in Eq.~\ref{eq:mr}, we have for any integer $j$:

\begin{equation}
\sum_{i=1}^D i\theta_i \mu_{i+j}=-j\mu_{j-1}.
\end{equation}

Based on these identities, we recover Cobb et al. method~\cite{Cobb-1983} (and hence the Score Matching Method for PEFs~\cite{Hyvarinen-2007}).

\section{Monte Carlo method: Acceptance-rejection sampling}\label{sec:ar}
Let $\tilde{p}(x)$ be an unnormalized density (e.g., a density of an exponential polynomial family), and $f(x)$ a proposal density (which can be easily sampled).
Denote by $c$ a constant such that $c\, f(x)\geq \tilde{p}(x)$ for all $x\in\calX$.
Acceptance-rejection sampling proceeds by first sampling $x_0\sim f(x)$ and then  by sampling
a uniform variate $u_0\sim U(0,c f(x_0))$. 
If $u_0>\tilde{p}(x_0)$, we reject $x_0$ and reiterate the procedure. 
Otherwise, we accept $x_0$.

Let $\tilde{\mathcal{P}}=\{[x,\tilde{p}(x)]\ :\ x\in\calX\}$ and 
$\mathcal{F}=\{[x,f(x)]\ :\ x\in\calX\}$ denote the area bounded by the functions $\tilde{p}(x)$ and $f(x)$, respectively.
It can be shown that the probability of acceptance is $\frac{1}{c}\int \tilde{p}\dx=\frac{1}{c}\mathrm{Area}(\tilde{\mathcal{P}})$.
Let $p(x)=\frac{\tilde{p}(x)}{Z_p}$ denote the normalized density.
Acceptance-rejection sampling iterates on average $\frac{c}{Z_p}$ times before accepting a random variate~\cite{bishop2006pattern}.
Notice that $c$ depends on $Z_p$ in practice.
Figure~\ref{fig:ar} depicts the process of acceptance-rejection sampling.

\begin{figure}%
\centering
\includegraphics[width=0.7\columnwidth]{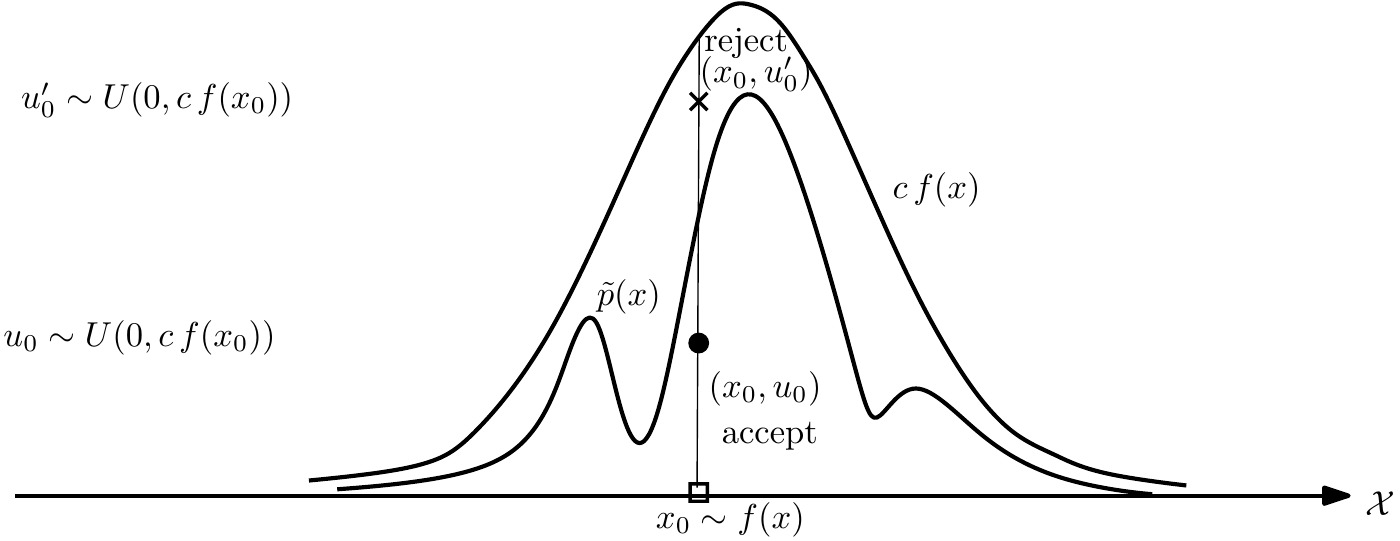}%
\caption{Acceptance-rejection sampling.}%
\label{fig:ar}%
\end{figure}

\section{An example of GMM conversion to PEDs of different orders}\label{sec:maximaexample}

In the example below in {\sc Maxima}, we display the conversion of two GMMs $m_1$ with $k_1=10$ components and $m_2$ with $k_2=11$ components into PEFs of order $8$.
The MC estimation of the JD with $s=10^6$ samples yields $0.2633\dots$ while the PEF approximation on corresponding PEFs yields
 $0.2618\ldots$ (relative error is $0.00585\dots$ or about $0.585\ldots \%$).

\begin{lstlisting}[backgroundcolor = \color{lightgray}]
assume(sigma>0);
normal(x,mu,sigma) := (1.0/(sqrt(2*%pi)*sigma))*exp(-((x-mu)**2)/(2*sigma**2) );
m1(x):=0.07364108534346539*normal(x,-6.870065732622028,1.986271783650286)+0.025167076634769497*normal(x,-6.075048963602256,1.2597407019915596)+0.1658464442557156*normal(x,-1.8146313935909912,1.8628859064709857)+0.08185153079585956*normal(x,-5.591128666334971,1.7384764507778412)+0.17161510648315073*normal(x,-1.1973521037847608,1.7318449234046134)+0.15953446622685713*normal(x,-9.310935699350253,1.254801266749996)+0.022842151933409417*normal(x,-3.0593318855130107,1.974864532982311)+0.049191546964925566*normal(x,-4.852351219823438,1.744469053647133)+0.14754497221852986*normal(x,-1.490305752594482,1.7981113425253326)+0.10276561914331742*normal(x,-6.216141690955186,1.362441455807411);
p1(x):=exp((-0.427070080652058*x**1)+(-0.1550659000258019*x**2)+(-0.011387815484428169*x**3)+(0.0013093778642312426*x**4)+(1.609814853908098E-4*x**5)+(4.97127592526588E-7*x**6)+(-2.1037139172837482E-7*x**7)+(-2.6728162063986965E-9*x**8));
m2(x):=0.12358972060770726*normal(x,-3.6623895510043063,1.8960669673645039)+0.02339461891279208*normal(x,-4.708573306071039,1.4988324123813517)+0.1506011549593897*normal(x,-5.892552442132609,1.5544383179469747)+0.09285155681935832*normal(x,-3.429016176798921,1.56860496582525)+0.10451002535239201*normal(x,-1.7627902228669008,1.469765667169659)+0.10052770665958442*normal(x,-7.640209485581262,1.4161278888109865)+0.01373064574217043*normal(x,-1.4307574402150962,1.9053029200511382)+0.10580776225290722*normal(x,-9.596750562831444,1.4463019439513571)+0.12295512911537661*normal(x,-4.567720102904078,1.2600844956599735)+0.12967771116575466*normal(x,-3.513053589366777,1.9316475019357529)+0.0323539684125673*normal(x,-7.980336695494762,1.9908903800330418);
p2(x):=exp((-0.8572007015061018*x**1)+(-0.13082046150595872*x**2)+(-0.009215306434221304*x**3)+(-0.0012216841345505935*x**4)+(6.151031668490958E-5*x**5)+(4.504153048723314E-5*x**6)+(4.319567125193681E-6*x**7)+(1.2196120465339543E-7*x**8));
plot2d([m1(x),m2(x)],[x,-10,10], [xlabel,"x"], [ylabel,"m(x)"],[legend, "m1", "m2"],[style, [lines,5,5],[lines,5,2]]);
plot2d([p1(x)],[x,-10,5], [xlabel,"x"], [ylabel,"q1(x)"], [legend, "q1"],[style, [lines,5,5]]);
plot2d([p2(x)],[x,-10,5], [xlabel,"x"], [ylabel,"q2(x)"], [legend, "q2"],[style, [lines,5,2]]);
/* Jeffreys MC:2642.581298828125ms (nb=1000000) vs Time PEF Jeffreys:0.8270999789237976ms ratio:3194.996211058532 */
/* MC Jeffreys divergence:0.26338216578112167 vs PEF: 0.2618412909304468 Error:0.005850338598686297 */
\end{lstlisting}

By inspecting the coefficients $\theta_1$ and $\theta_2$ of the PEFs, we see that coefficients fall sharply after order $5$.
Converting the mixtures to order $4$ yields a PEF JD approximation of $0.2433\dots$ with a relative error of $7.573\dots\%$.
 
\begin{lstlisting}[backgroundcolor = \color{lightgray}]
assume(sigma>0);
normal(x,mu,sigma) := (1.0/(sqrt(2*%pi)*sigma))*exp(-((x-mu)**2)/(2*sigma**2) );

m1(x):=0.07364108534346539*normal(x,-6.870065732622028,1.986271783650286)+0.025167076634769497*normal(x,-6.075048963602256,1.2597407019915596)+0.1658464442557156*normal(x,-1.8146313935909912,1.8628859064709857)+0.08185153079585956*normal(x,-5.591128666334971,1.7384764507778412)+0.17161510648315073*normal(x,-1.1973521037847608,1.7318449234046134)+0.15953446622685713*normal(x,-9.310935699350253,1.254801266749996)+0.022842151933409417*normal(x,-3.0593318855130107,1.974864532982311)+0.049191546964925566*normal(x,-4.852351219823438,1.744469053647133)+0.14754497221852986*normal(x,-1.490305752594482,1.7981113425253326)+0.10276561914331742*normal(x,-6.216141690955186,1.362441455807411);
p1(x):=exp((-0.42419488967268304*x**1)+(-0.15674407576089866*x**2)+(-0.012269008940931303*x**3)+(0.0013061387109766787*x**4)+(1.9499839508388961E-4*x**5)+(5.078100942919039E-6*x**6));
m2(x):=0.12358972060770726*normal(x,-3.6623895510043063,1.8960669673645039)+0.02339461891279208*normal(x,-4.708573306071039,1.4988324123813517)+0.1506011549593897*normal(x,-5.892552442132609,1.5544383179469747)+0.09285155681935832*normal(x,-3.429016176798921,1.56860496582525)+0.10451002535239201*normal(x,-1.7627902228669008,1.469765667169659)+0.10052770665958442*normal(x,-7.640209485581262,1.4161278888109865)+0.01373064574217043*normal(x,-1.4307574402150962,1.9053029200511382)+0.10580776225290722*normal(x,-9.596750562831444,1.4463019439513571)+0.12295512911537661*normal(x,-4.567720102904078,1.2600844956599735)+0.12967771116575466*normal(x,-3.513053589366777,1.9316475019357529)+0.0323539684125673*normal(x,-7.980336695494762,1.9908903800330418);
p2(x):=exp((-0.9006356445545727*x**1)+(-0.13457130798621164*x**2)+(-7.529018543852573E-4*x**3)+(7.859916869099592E-4*x**4)+(2.6290055488964458E-5*x**5)+(-1.1836096479592277E-6*x**6));
plot2d([m1(x),m2(x)],[x,-10,10], [xlabel,"x"], [ylabel,"m(x)"],[legend, "m1", "m2"],[style, [lines,5,5],[lines,5,2]]);
plot2d([p1(x)],[x,-10,5], [xlabel,"x"], [ylabel,"q1(x)"], [legend, "q1"],[style, [lines,5,5]]);
plot2d([p2(x)],[x,-10,5], [xlabel,"x"], [ylabel,"q2(x)"], [legend, "q2"],[style, [lines,5,2]]);
/* Time Jeffreys MC:2642.4267578125ms (nb=1000000) vs Time PEF Jeffreys:0.09099999815225601ms ratio:29037.657268864357*/
/* MC Jeffreys divergence:0.2632422749014559 vs PEF Jeffreys: 0.2433048263139042 Error:0.07573801964374921 */
\end{lstlisting}

\section{Acronyms and notations}

\begin{tabular}{ll}
PDF & Probability density function\\
EF & Exponential Family \\
EPF & Exponential-Polynomial Family\\
PEF & Polynomial Exponential Family \\
PED & Polynomial Exponential Density\\
MEF & Monomial Exponential Family \\
MGF & Moment Generating Function \\
AMEF & Absolute Monomial Exponential Family\\
GMM & Gaussian Mixture Model\\
$k$-GMM & GMM with $k$ components\\
ILSM & Iterative Linear System Method~\cite{MaxEnt-1992}\\
$D_\KL$ & Kullback-Leibler divergence\\
$D_\JS$ & Jensen-Shannon divergence\\
$B_F$ & Bregman divergence\\
$L_F$ & Legendre-Fenchel divergence\\
$D_H$ & Hyv\"arinen divergence (Fisher divergence)\\
$D_{H,\alpha}$ & $\alpha$-order Hyv\"arinen divergence\\
$D_J$ & Jeffreys divergence\\
$\hat{D}_J^{\calS_m}$ & Monte Carlo stochastic estimation of Jeffreys divergence\\
$\tilde{D}_J$ & Our approximation heuristic of Jeffreys divergence with two PED pairs\\
$\Delta_J$ & Approximation of $D_J$ with two PED pairs\\
$\bareta^\MLE$ & Integral-based Maximum Likelihood Estimator\\
$\bartheta^\SME$ & Integral-based Score Matching Estimator\\
$\tilde{\eta}^\SME$ & Converting approximately $\bartheta^\SME$ to moment parameter\\
$\tilde{\theta}^\MLE$ & Converting approximately $\bareta^\MLE$ to natural parameter\\
$\tilde{\bartheta}_T^\MLE$ & Approximation of $\nabla F^*(\bareta^\MLE)$ using $T$ iterations of~\cite{MaxEnt-1992}\\
$\Delta_J^\MLE$ & Approximation of $D_J$ using MLE   only and $\tilde{\theta}^\MLE$\\
$\Delta_J^\SME$ & Approximation of $D_J$ using SME   only and $\tilde{\eta}^\SME$\\
$p_\theta$ & EF density parameterized using natural parameter\\
$p^\eta$ & EF density parameterized using moment parameter\\
$\Theta$ & Natural parameter space\\
$H$ & Moment parameter space (read $H$ as greek Eta)\\
$F(\theta)$ & Cumulant function of an EF\\
$Z(\theta)$ & Partition function of an EF ($Z(\theta)=e^{F(\theta)}$)\\
$\tilde{p}_\theta=q_\theta$ & unnormalized EF density parameterized using natural parameter\\
$\mu_k(p)$ & raw moment or non-central moment $E_p[X^k]$.\\
$m_\theta(u)$ & moment generating function (MGF)\\ 
$p_{\mu,\sigma}$ & PDF of a normal distribution with mean $\mu$ and standard deviation $\sigma$\\
$\calE_t$ & EF with sufficient statistics $t(x)$\\
$\calE_D$ & PEF of order $D$, $t(x)=(x,x^2,\ldots,x^D)$\\
$\calE_4$ & Quartic EF\\
$\calM_D$ & MEF of order $D$\\
$\calA_D$ & AMEF of order $D$
\end{tabular}

\end{document}